\newcommand{\conv}{\ensuremath{\mathrm{conv}}}
\newcommand{\Real}{\ensuremath{\mathbb{R}}}
\newcommand{\Plane}{\ensuremath{\mathbb{R}^2}}
\newcommand{\DPlane}{\ensuremath{\mathbb{D}}}
\newcommand{\bd}{\ensuremath{\partial}}
\newcommand{\intr}{\ensuremath{\mathrm{int}}}
\newcommand{\seg}{\overline}
\newcommand{\dual}[1]{\ensuremath{{#1}^\star}}
\newcommand{\arr}{\ensuremath{\mathcal{A}}}
\newcommand{\darr}{\ensuremath{\dual{\mathcal{A}}}}
\newcommand{\uenv}{\ensuremath{\mathcal{U}}}
\newcommand{\lenv}{\ensuremath{\mathcal{L}}}
\newcommand{\mline}{\ensuremath{\mathcal{M}}}
\newcommand{\qabove}{\ensuremath{\mathcal{Q}^+}}
\newcommand{\qbelow}{\ensuremath{\mathcal{Q}^-}}
\newcommand{\qbetween}{\ensuremath{\mathcal{Q}}}
\newcommand{\tmap}{\ensuremath{\mathcal{T}}}
\newcommand{\extr}{\ensuremath{\chi}}
\newtheoremstyle{mytheorem}{3pt}{3pt}{\slshape}{}{\bfseries}{}{.5em}{}
\theoremstyle{mytheorem}
\newtheorem{lemma}{Lemma}
\newtheorem{theorem}{Theorem}
\newtheorem{observation}{Observation}
\theoremstyle{definition}
\newbox\ProofSym
\renewenvironment{proof}[1][Proof.]{\O@proof{#1}}{\O@endproof}
\def\O@proof#1{\trivlist
   \@topsep\z@\@topsepadd\smallskipamount%
   \@ifstar{\item[]}{\item[\hskip\labelsep\it #1 ]}}
\def\O@endproof{\hfill\copy\ProofSym\linebreak[3mm]\endtrivlist}
\def\denseitems{
    \itemsep1pt plus1pt minus1pt
    \parsep0pt plus0pt
    \parskip0pt\topsep0pt}
\begin{document}


\title{Minimum-Width Double-Strip and Parallelogram Annulus%
\thanks{%
This work was supported by Kyonggi University Research Grant 2018.
}
}

\author{%
Sang Won Bae\footnote{%
Division of Computer Science and Engineering, Kyonggi University, Suwon, Korea.
Email: \texttt{swbae@kgu.ac.kr} }
}

\date{%
\today\quad\currenttime
}

\maketitle

\begin{abstract}
In this paper, we study the problem of computing a minimum-width
double-strip or parallelogram annulus
that encloses a given set of $n$ points in the plane.
A double-strip is a closed region in the plane whose boundary consists of
four parallel lines and a parallelogram annulus is a closed region
between two edge-parallel parallelograms.
We present several first algorithms for these problems.
Among them are $O(n^2)$ and $O(n^3 \log n)$-time algorithms
that compute a minimum-width double-strip and parallelogram annulus,
respectively, when their orientations can be freely chosen.\\

\noindent
\textbf{Keywords}: \textit{computational geometry,
parallelogram annulus,
two-line center,
double-strip,
arbitrary orientation,
exact algorithm}
\end{abstract}

\section{Introduction} \label{sec:intro}

The \emph{minimum-width annulus problem} asks to find an annulus
of a certain shape with the minimum width that encloses a given set $P$
of $n$ points in the plane.
An annulus informally depicts a ring-shaped region in the plane.
As the most natural and classical example, a circular annulus is
defined to be the region between two concentric circles.
If one wants to find a circle that best fits an input point set $P$,
then her problem can be solved by finding out a minimum-width
circular annulus that encloses $P$.
After early results on the circular annulus problem~\cite{rz-epccmrsare-92},
the currently best algorithm that computes a minimum-width circular annulus
that encloses $n$ input points takes $O(n^{\frac{3}{2}+\epsilon})$ time~\cite{ast-apsgo-94,as-erasgop-96} for any $\epsilon > 0$.
Analogously, such a problem of matching a point set into a closed curve class
can be formulated into the minimum-width annulus problem for annuli of
different shapes.

Along with applications not only to the points-to-curve matching problem
but also to other types of facility location,
the minimum-width annulus problem has been extensively studied
for recent years,
with a variety of variations and extensions.
Abellanas et al.~\cite{ahimpr-bfr-03} considered minimum-width rectangular annuli
that are axis-parallel, and presented two algorithms taking $O(n)$ or $O(n \log n)$ time:
one minimizes the width over rectangular annuli with arbitrary aspect ratio
and the other does over  rectangular annuli with a prescribed aspect ratio, respectively.
Gluchshenko et al.~\cite{ght-oafepramw-09} presented an $O(n \log n)$-time algorithm
that computes a minimum-width axis-parallel square annulus,
and proved a matching lower bound,
while the second algorithm by Abellanas et al.\@ can do the same in the same time bound.
If one considers rectangular or square annuli in arbitrary orientation,
the problem becomes more difficult.
Mukherjee et al.~\cite{mmkd-mwra-13} presented an $O(n^2 \log n)$-time algorithm
that computes a minimum-width rectangular annulus in arbitrary orientation
and arbitrary aspect ratio.
The author~\cite{b-cmwsaao-18} showed that
a minimum-width square annulus in arbitrary orientation can be computed
in $O(n^3 \log n)$ time, and recently improved it to $O(n^3)$ time~\cite{b-marsap-19X}.

In this paper, we consider a more generalized shape, namely, parallelograms, and
annuli based on them in fixed or arbitrary orientation,
which have at least one more degree of freedom than
square and rectangular annuli have.
More precisely, we define a parallelogram annulus as a closed region
between two edge-parallel parallelograms, and
address the problem of computing a minimum-width parallelogram annulus
that encloses the input point set $P$.
(See \figurename~\ref{fig:strip_annulus}(c).)
We consider several restricted cases of the problem about
two orientations of sides of parallelogram annuli.
Our main results are summarized as follows:
\begin{enumerate}[(1)] \denseitems
 \item When both orientations for sides are fixed,
 a minimum-width parallelogram annulus that encloses $P$
 can be computed in $O(n)$ time.
 \item When one orientation for sides is fixed and the other can be
 chosen arbitrarily,
 a minimum-width parallelogram annulus that encloses $P$
 can be computed in $O(n^2)$ time.
 \item A minimum-width parallelogram annulus that encloses $P$
 over all pairs of orientations can be computed in $O(n^3\log n)$ time.
\end{enumerate}

To obtain these algorithms for the problem,
we also introduce another geometric optimization problem,
called the \emph{minimum-width double-strip problem},
which asks to compute a double-strip of minimum width
that encloses $P$.
A double-strip is defined to be the union of two parallel strips,
where a strip is the region between two parallel lines in the plane.
(See \figurename~\ref{fig:strip_annulus}(b).)
We show that this new problem is closely related to the parallelogram annulus problem.
The minimum-width double-strip problem has its own interest
as a special case of the \emph{two-line center problem},
in which one wants to find two strips, possibly being non-parallel,
that encloses $P$ and minimizes the width of the wider strip.
After the first sub-cubic algorithm is presented by
Agarwal and Sharir~\cite{as-pglp-94},
the currently best algorithm for the two-line center problem
takes $O(n^2 \log^2 n)$ time~\cite{jk-tlcppv:nads-95,gks-sgsopsm-98}.

To our best knowledge, however, no nontrivial result on the double-strip problem
is known in the literature.
In this paper, we obtain the following algorithmic results:
\begin{enumerate}[(1)] \denseitems
 \setcounter{enumi}{3}
 \item A minimum-width double-strip that encloses $P$ over all orientations
 can be computed in $O(n^2)$ time.
 \item We also consider a constrained version of the problem
 in which a subset $Q\subseteq P$ with $k = |Q|$ is given and one wants to find
 a minimum-width double-strip enclosing $Q$ such that
 all points of $P$ should lie in between its outer boundary lines.
 We show that this can be solved in $O(n \log n + kn)$ time.
 \item We further address some online and offline versions of
 the dynamic constrained double-strip problem under insertions and/or deletions of a point
 on the subset $Q$ to enclose.
\end{enumerate}

The rest of the paper is organized as follows:
Section~\ref{sec:pre} introduces necessary definitions and preliminaries.
Section~\ref{sec:double-strip} is devoted to solve
the minimum-width double-strip problem and present an $O(n^2)$ time algorithm,
which is generalized to the constrained double-strip problem
in Section~\ref{sec:c_double-strip}.
The minimum-width parallelogram annulus problem is finally discussed and solved
in Section~\ref{sec:paralannul}.

\section{Preliminaries} \label{sec:pre}

In this section, we introduce definitions of necessary concepts
and preliminaries for further discussion.
For any subset $A \subseteq \Plane$ of the plane $\Plane$,
its boundary and interior are denoted by
$\bd A$ and $\intr A$, respectively.

\begin{figure}[tb]
\begin{center}
\includegraphics[width=.95\textwidth]{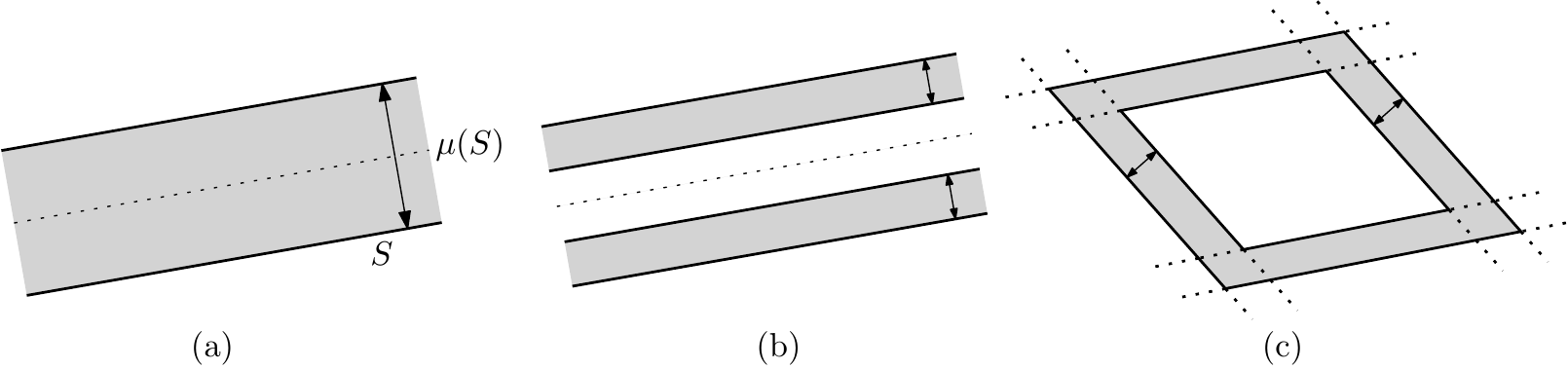}
\end{center}
\caption{Illustrations to (a) a strip $S$ and its middle line $\mu(S)$,
 (b) a double-strip, and (c) a parallelogram annulus.
 The arrows depict the width of each shape.
 }
\label{fig:strip_annulus}
\end{figure}

\paragraph*{Strips and double-strips.}
For two parallel lines $\ell$ and $\ell'$ in the plane $\Plane$,
the \emph{distance} between $\ell$ and $\ell'$
denotes the length of any line segment
that is orthogonal to $\ell$ and $\ell'$
and have endpoints one on $\ell$ and the other on $\ell'$.
A \emph{strip} is a closed region bounded by two parallel lines in the plane.
For any strip $S$, its \emph{width} $w(S)$ is the distance
between its bounding lines,
and its \emph{middle line} $\mu(S)$ is the line parallel to its bounding lines
such that the distance between $\mu(S)$ and each of the bounding lines
is exactly half the width $w(S)$ of $S$.
See \figurename~\ref{fig:strip_annulus}(a).

A \emph{double-strip} is the union of two disjoint parallel strips of equal width,
or equivalently, is a closed region obtained by a strip $S$
subtracted by the interior of another strip $S'$
such that $\mu(S) = \mu(S')$ and $S' \subseteq S$.
For any double-strip defined by two strips $S$ and $S'$ in this way,
$S$ is called its \emph{outer strip} and $S'$ its \emph{inner strip}.
The \emph{width} of such a double-strip $D$, denoted by $w(D)$, is defined to be
half the difference of the widths of $S$ and $S'$, that is,
$w(D) = (w(S) - w(S'))/2$.
See \figurename~\ref{fig:strip_annulus}(b).

\paragraph*{Parallelogram annuli.}
A parallelogram is a quadrilateral that is the intersection
of two non-parallel strips.
We define a parallelogram annulus to be a parallelogram $R$
with a parallelogram hole $R'$,
analogously as a circular annulus is a circle with a circular hole.
Here, we add a condition that the outer and inner parallelograms $R$ and $R'$
should be side-wise parallel.
There are several ways to define such a parallelogram annulus,
among which we introduce the following definition.
A \emph{parallelogram annulus} $A$ is defined by two double-strips $D_1$ and $D_2$
as follows:
\begin{enumerate} \denseitems
 \item The \emph{outer parallelogram} $R$ of $A$ is
 the intersection of the outer strips of $D_1$ and $D_2$.
 \item The \emph{inner parallelogram} $R'$ of $A$ is
 the intersection of the inner strips of $D_1$ and $D_2$.
 \item The parallelogram annulus $A$ is the closed region between $R$ and $R'$,
 that is, $A = R \setminus \intr R'$.
 \item The \emph{width} of $A$, denoted by $w(A)$, is taken to be the bigger one
 between the widths of $D_1$ and $D_2$, that is, $w(A) = \max\{w(D_1), w(D_2)\}$.
\end{enumerate}
See \figurename~\ref{fig:strip_annulus}(c) for an illustration.

The main purpose of this paper is to solve the
\emph{minimum-width parallelogram annulus problem} in which
we are given a set $P$ of points in the plane
and want to find a parallelogram annulus of minimum width that encloses $P$.
As discussed above, a parallelogram annulus is closely related to
strips and double-strips.
The \emph{minimum-width double-strip problem}
asks to find a double-strip of minimum width that encloses $P$
in fixed or arbitrary orientation.

\paragraph*{Orientations and the width function.}
The \emph{orientation} of a line or line segment $\ell$ in the plane is
a value $\theta \in [-\pi/2, \pi/2)$\footnote{%
The orientation $\theta$ is indeed of period $\pi$.
In this paper we choose $[-\pi/2, \pi/2)$ for the orientation domain.}
such that the rotated copy of the $x$-axis by $\theta$ counter-clockwise is parallel to $\ell$.
If the orientation of a line or line segment is $\theta$,
then we say that the line or line segment is \emph{$\theta$-aligned}.
A strip or a double-strip is also called $\theta$-aligned for some $\theta \in [-\pi/2, \pi/2)$
if its bounding lines are $\theta$-aligned.
A parallelogram or a parallelogram annulus is
\emph{$(\theta, \phi)$-aligned} if it is defined by
two double-strips that are $\theta$-aligned and $\phi$-aligned,
respectively.

\begin{figure}[tb]
\begin{center}
\includegraphics[width=.90\textwidth]{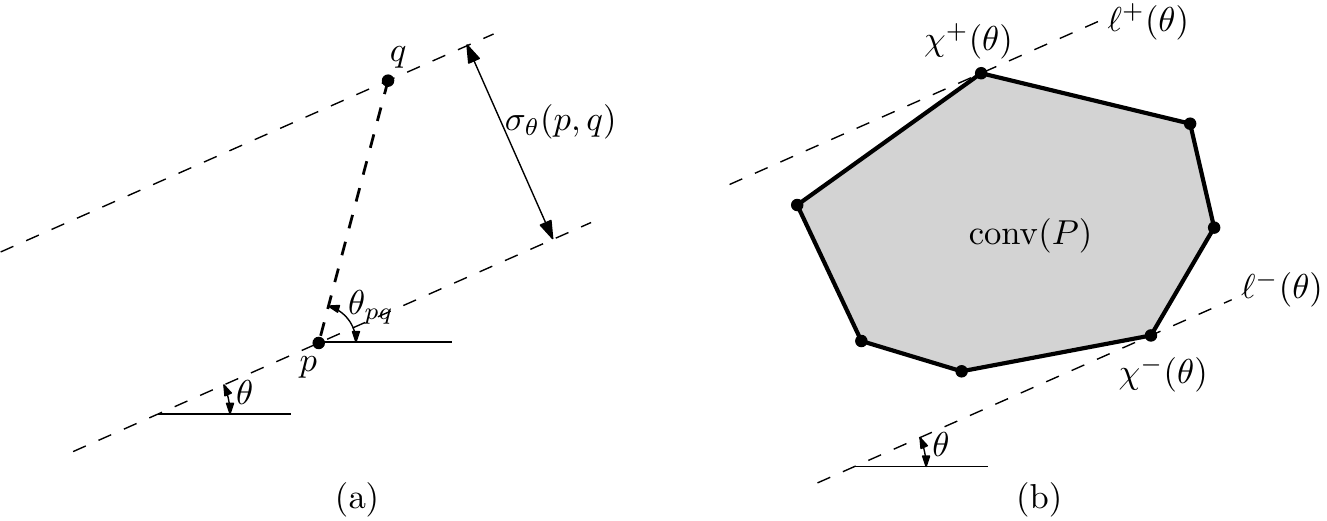}
\end{center}
\caption{(a) The $\theta$-aligned strip defined by $p$ and $q$, and its width
 $\sigma_\theta(p,q)$.
 (b) The antipodal pair $(\extr^+(\theta), \extr^-(\theta))$ of $P$ for
 orientation $\theta \in [-\pi/2, \pi/2)$.
 }
\label{fig:width_extreme}
\end{figure}

For any two points $p, q\in \Plane$, let $\seg{pq}$ denote the line segment joining $p$ and $q$,
and $|\seg{pq}|$ denote the Euclidean length of $\seg{pq}$.
We will often discuss the strip defined by two parallel lines
through $p$ and $q$, and its width.
Let $\sigma_\theta(p,q)$ be the width of the strip defined by
two $\theta$-aligned lines through $p$ and $q$, respectively.
It is not difficult to see that
 \[\sigma_\theta(p, q) = |\seg{pq}| \cdot | \sin (\theta - \theta_{pq})|, \]
where $\theta_{pq} \in [-\pi/2, \pi/2)$ denotes the orientation of $\seg{pq}$.
See \figurename~\ref{fig:width_extreme}(a).

A single-variate function of a particular form $a \sin(\theta + b)$
for some constants $a, b\in \Real$ with $a\neq 0$ is called
\emph{sinusoidal function (of period $2\pi$)}.
Obviously, the equation $a \sin(\theta + b) = 0$ has at most one zero over $\theta \in [-\pi/2, \pi/2)$.
The following property of sinusoidal functions is well known and easily derived.
\begin{lemma} \label{lemma:sinusoidal}
 The sum of two sinusoidal functions is also sinusoidal.
 Therefore, the graphs of two sinusoidal functions cross at most once over
 $[-\pi/2, \pi/2)$.
\end{lemma}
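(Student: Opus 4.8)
The plan is to prove both claims of Lemma~\ref{lemma:sinusoidal} by a direct computation using the angle-addition identity, which reduces a sinusoidal function $a\sin(\theta+b)$ to a linear combination of $\sin\theta$ and $\cos\theta$. Concretely, writing $a\sin(\theta+b) = (a\cos b)\sin\theta + (a\sin b)\cos\theta$, I would first observe that every sinusoidal function is of the form $\alpha\sin\theta + \beta\cos\theta$ with $(\alpha,\beta)\neq(0,0)$, and conversely that every such linear combination with $(\alpha,\beta)\neq(0,0)$ is sinusoidal: given $\alpha,\beta$, set $a = \sqrt{\alpha^2+\beta^2} > 0$ and choose $b$ with $\cos b = \alpha/a$, $\sin b = \beta/a$, so that $\alpha\sin\theta + \beta\cos\theta = a\sin(\theta+b)$. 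Thus ``sinusoidal'' is exactly the class of nonzero real linear combinations of $\sin\theta$ and $\cos\theta$.

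Given this characterization, the first statement is immediate: if $f(\theta) = a_1\sin(\theta+b_1)$ and $g(\theta) = a_2\sin(\theta+b_2)$, then $f + g = (\alpha_1+\alpha_2)\sin\theta + (\beta_1+\beta_2)\cos\theta$, which is again a real linear combination of $\sin\theta$ and $\cos\theta$. The only thing to check is that the sum is not identically zero, i.e.\ that $(\alpha_1+\alpha_2, \beta_1+\beta_2) \neq (0,0)$; but this is part of what the paper intends, since otherwise the ``sum'' degenerates and we simply exclude it. I would state this caveat explicitly (the sum is sinusoidal provided it is not the zero function), or note that in all applications in the paper the relevant sums are nonzero by construction.

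For the second statement, suppose $f$ and $g$ are sinusoidal with $f \neq g$. Then $h := f - g$ is, by the first part, a linear combination $\alpha\sin\theta + \beta\cos\theta$, and since $f\neq g$ we have $(\alpha,\beta)\neq(0,0)$, so $h$ is itself sinusoidal, say $h(\theta) = a\sin(\theta+b)$ with $a\neq 0$. The crossings of the graphs of $f$ and $g$ are exactly the zeros of $h$. Now $a\sin(\theta+b) = 0$ iff $\theta + b \in \pi\mathbb{Z}$, i.e.\ $\theta \equiv -b \pmod{\pi}$; since the interval $[-\pi/2,\pi/2)$ has length exactly $\pi$, it contains exactly one representative of each residue class mod $\pi$, hence at most one (in fact exactly one) zero of $h$ in $[-\pi/2,\pi/2)$. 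This is precisely the remark already made in the excerpt that ``$a\sin(\theta+b)=0$ has at most one zero over $\theta\in[-\pi/2,\pi/2)$,'' which I would simply invoke.

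There is no real obstacle here; the lemma is elementary and the ``main step'' is just recording the equivalence between the form $a\sin(\theta+b)$ and linear combinations of $\sin\theta,\cos\theta$. The only point requiring a word of care is the degenerate case where a sum or difference of two sinusoidal functions vanishes identically — there the first claim must be read with the implicit hypothesis that the sum is nonzero, and in the second claim the hypothesis $f \neq g$ is exactly what rules this out. I would therefore organize the write-up as: (i) the form-equivalence observation; (ii) closure under addition modulo the nonvanishing caveat; (iii) deduce the ``at most one crossing'' statement by applying the period-$\pi$ zero count to $f-g$.
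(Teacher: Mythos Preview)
Your proposal is correct and is the standard elementary argument. Note that the paper does not actually give a proof of this lemma at all: it is stated as a well-known fact that is ``easily derived,'' so there is nothing to compare against beyond confirming that your write-up fills in exactly the computation the paper omits. Your explicit handling of the degenerate case (the sum or difference being identically zero) is a welcome clarification, since the paper's definition of ``sinusoidal'' requires $a\neq 0$ and the lemma as stated glosses over this.
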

Note that, taking $\theta \in [-\pi/2,\pi/2)$ as a variable,
the function $\sigma_\theta(p,q)$ for fixed points $p,q \in \Plane$
is piecewise sinusoidal with at most one breakpoint.

\paragraph*{Extreme points and antipodal pairs.}
Let $P$ be a finite set of points in $\Plane$, and
let $\conv(P)$ be its convex hull.
The corners of $\conv(P)$ are called \emph{extreme} of $P$.
For each $\theta \in [-\pi/2, \pi/2)$,
let $S(\theta)$ be the minimum-width $\theta$-aligned strip enclosing $P$.
Then, the two lines that define $S(\theta)$ pass through two extreme points
of $P$, one on each.
More precisely, let $\ell^+(\theta)$ and $\ell^-(\theta)$ be the two lines
defining $S(\theta)$ such that $\ell^+(\theta)$ lies above $\ell^-(\theta)$
if $\theta \neq -\pi/2$, or $\ell^+(\theta)$ lies to the right of $\ell^-(\theta)$
if $\theta = -\pi/2$.
There must be an extreme point of $P$ on each of
the two lines $\ell^+(\theta)$ and $\ell^-(\theta)$,
denoted by $\extr^+(\theta)$ and $\extr^-(\theta)$.
If there are two or more such points of $P$, then
we choose the last one in the counter-clockwise order along the boundary of
$\conv(P)$.
The width of $S(\theta)$ is equal to
$\sigma_\theta(\extr^+(\theta), \extr^-(\theta))$.

For each $\theta \in [-\pi/2, \pi/2)$,
the pair $(\extr^+(\theta), \extr^-(\theta))$ is called \emph{antipodal}.
It is known that there are at most $O(n)$ different antipodal pairs
by Toussaint~\cite{t-sgprc-83}.
See \figurename~\ref{fig:width_extreme}(b).
Starting from $\theta = -\pi/2$, imagine the motion of
the two lines $\ell^+(\theta)$ and $\ell^-(\theta)$
as $\theta$ continuously increases.
Then, the antipodal pair $(\extr^+(\theta), \extr^-(\theta))$ for $\theta$
only changes when one of the two lines contains an edge of $\conv(P)$.
In this way, the orientation domain $[-\pi/2, \pi/2)$ is decomposed into
maximal intervals $I$ such that
$(\extr^+(\theta), \extr^-(\theta))$ remains the same in $I$.

\section{Minimum-Width Double-Strips} \label{sec:double-strip}
In this section, we address the problem of computing a minimum-width
double-strip that encloses a given set $P$ of $n$ points in $\Plane$.

We start with the problem in a given orientation $\theta \in [-\pi/2, \pi/2)$.
Let $w(\theta)$ be the minimum possible width of a $\theta$-aligned double-strip
enclosing $P$.
The following observation can be obtained by a simple geometric argument.
\begin{observation} \label{obs:double-strip-conf}
 For each $\theta\in [-\pi/2, \pi/2)$,
 there exists a minimum-width $\theta$-aligned double-strip $D(\theta)$
 enclosing $P$ whose outer strip is $S(\theta)$ and
 inner strip is $S'(\theta)$, where
 \begin{itemize} \denseitems
  \item $S(\theta)$ is the minimum-width $\theta$-aligned strip enclosing $P$, and
  \item $S'(\theta)$ is the maximum-width $\theta$-aligned strip
  such that its interior is empty of any point in $P$ and
  $\mu(S'(\theta)) = \mu(S(\theta))$.
 \end{itemize}
\end{observation}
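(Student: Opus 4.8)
The plan is to fix an arbitrary $\theta$ and show that any $\theta$-aligned double-strip $D$ enclosing $P$ can be replaced, without increasing its width, by the specific double-strip $D(\theta)$ described in the statement. Recall that a $\theta$-aligned double-strip is the region $S \setminus \intr S'$ for a pair of $\theta$-aligned strips $S' \subseteq S$ with $\mu(S) = \mu(S')$, and its width is $(w(S) - w(S'))/2$. So I would let $D$ be defined by outer strip $S_0$ and inner strip $S'_0$, with $P \subseteq D$.

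First I would argue about the outer strip. Since $P \subseteq D \subseteq S_0$, the outer strip $S_0$ is a $\theta$-aligned strip enclosing $P$, so $w(S_0) \ge w(S(\theta))$ by definition of $S(\theta)$ as the minimum-width such strip. The subtlety is that $S(\theta)$ and $S_0$ need not share a middle line, so I cannot just swap $S_0$ for $S(\theta)$ directly; instead I will choose the middle line of the new double-strip to be $\mu(S(\theta))$ and build both its outer and inner strips around that line. The key geometric fact I would invoke is that $S(\theta)$ is the narrowest $\theta$-aligned slab containing $P$, hence translating $S_0$ so that it is centered at $\mu(S(\theta))$ and shrinking it to width exactly $w(S(\theta))$ still contains $P$ — this is just the statement that $S(\theta) \supseteq P$.

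Next I would handle the inner strip. The inner strip $S'_0$ has interior disjoint from $D$'s complement... more precisely, $\intr S'_0 \cap P = \emptyset$ because $P \subseteq S_0 \setminus \intr S'_0$. Now consider $S'(\theta)$: the widest $\theta$-aligned strip centered at $\mu(S(\theta))$ whose interior misses $P$. I want to conclude $w(S'(\theta)) \ge w(S'_0)$, but again $S'_0$ may not be centered at $\mu(S(\theta))$. Here I would use that $P$ lies in $S(\theta)$ and is split by $\mu(S(\theta))$ into points on each side (or on the line); the largest empty $\theta$-aligned slab centered on $\mu(S(\theta))$ is determined by the point of $P$ closest to $\mu(S(\theta))$ on the more constraining side, and one checks that the empty slab $\intr S'_0$, being contained in $S_0$ and missing $P$, cannot be wider (after re-centering) than $S'(\theta)$ — intuitively, re-centering an empty slab to the optimal center only helps. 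Combining the two inequalities gives
\[
w(D) = \frac{w(S_0) - w(S'_0)}{2} \ge \frac{w(S(\theta)) - w(S'(\theta))}{2} = w(D(\theta)),
\]
and since $D(\theta)$ visibly encloses $P$ (its outer strip contains $P$ and its inner strip's interior is empty of $P$), $D(\theta)$ is a minimum-width $\theta$-aligned double-strip.

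The main obstacle is making the re-centering arguments precise, since both inequalities $w(S_0) \ge w(S(\theta))$ and $w(S'(\theta)) \ge w(S'_0)$ are being applied to strips with a priori different middle lines, and the width of a double-strip couples the outer and inner strips through a \emph{common} middle line. The clean way around this is to first observe that, for any double-strip $D$ enclosing $P$, one may assume its middle line is $\mu(S(\theta))$: if $\mu(D) \ne \mu(S(\theta))$, translate the whole double-strip so its middle line becomes $\mu(S(\theta))$ and simultaneously expand the outer strip just enough to keep $P$ inside — a short computation with $\sigma_\theta$ shows this does not increase the width, because the outer strip was already at least as wide as $S(\theta)$ and the inner strip stays empty. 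Once the middle line is pinned to $\mu(S(\theta))$, the outer strip can only shrink to $S(\theta)$ and the inner strip can only grow to $S'(\theta)$, both of which weakly decrease $w(D)$, yielding the claim. I would present this translation-and-expansion step as the single nontrivial geometric lemma underlying the observation and leave the elementary trigonometric verification to the reader.
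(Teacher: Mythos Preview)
Your separation into two independent inequalities does not go through, and the translation fix you propose does not repair it. Concretely, take $\theta=0$ and $P=\{(0,0),(0,6),(0,10)\}$, so $S(0)=\{0\le y\le 10\}$ with middle line $y=5$, and $S'(0)=\{4\le y\le 6\}$. Consider the double-strip $D$ with outer strip $S_0=\{-4\le y\le 10\}$ and inner strip $S'_0=\{0\le y\le 6\}$, both centered at $y=3$; then $P\subseteq D$ and $w(D)=4$. Here $w(S'_0)=6>2=w(S'(0))$, so the inequality $w(S'(\theta))\ge w(S'_0)$ you need is simply false. And if you translate $D$ so its middle line becomes $y=5$, the inner strip becomes $\{2\le y\le 8\}$, whose interior now contains the point $(0,6)$: the inner strip does \emph{not} stay empty. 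So the ``short computation with $\sigma_\theta$'' you defer would fail; the outer- and inner-strip widths are genuinely coupled through the middle line and cannot be decoupled by a rigid translation.

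The paper sidesteps the whole re-centering issue by decomposing $D$ differently: instead of outer-minus-inner, write $D=S_1\cup S_2$ as the union of its two disjoint $\theta$-aligned strip components, each of width $w(D)$. Now slide each $S_j$ \emph{independently} inwards (toward the other) until its outer boundary first hits an extreme point of $P$; since $P\cap S_j\subseteq S_j$ throughout the slide, the resulting strips $S'_1,S'_2$ still cover $P$, still have width $w(D)$, and now their outer boundaries are exactly $\ell^+(\theta)$ and $\ell^-(\theta)$, so the outer strip of $S'_1\cup S'_2$ is $S(\theta)$. From here the inner strip is forced to be $S'(\theta)$ by maximality. The point is that the two components carry the extreme points $\extr^+(\theta)$ and $\extr^-(\theta)$ separately, so moving each one inwards never creates the collision your translation does.
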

\begin{proof}
Let $D$ be a minimum-width $\theta$-aligned double-strip enclosing $P$.
Let $S_1$ and $S_2$ be the two strips such that $D = S_1 \cup S_2$.
If the boundary of $S_1$ does not contain an extreme point of $P$,
then we can slide $S_1$ inwards until its boundary hits an extreme point of $P$.
Let $S'_1$ be the resulting strip after this sliding process.
Then, we have $S_1 \cap P \subset S'_1$.
In the same way, we slide $S_2$ until the boundary of $S_2$ hits
an extreme point of $P$, and let $S'_2$ be the resulting strip.
Then, it holds that $S_2 \cap P \subset S'_2$.
As a result, $P \subset (S'_1 \cup S'_2)$ since $P\subset (S_1 \cup S_2)$.
Note that the outer strip of the double-strip $D' := S'_1 \cup S'_2$
is exactly $S(\theta)$,
the minimum-width $\theta$-aligned strip enclosing $P$.

Now, let $S'$ be the inner strip of $D'$.
By definition, we have $\mu(S(\theta)) = \mu(S')$.
Suppose that the inner strip $S'$ of $D'$ is not equal to $S'(\theta)$,
the maximum-width $\theta$-aligned strip
such that its interior is empty of any point in $P$ and $\mu(S'(\theta)) = \mu(S(\theta))$.
Then, the boundary of $S'$ does not contain any point of $P$.
Hence, the width of $S'(\theta)$ is strictly larger than the width of $S'$,
a contradiction that $D'$ is of minimum width.
Therefore, the inner strip of $D'$ should be $S'(\theta)$.
\end{proof}

We focus on finding the minimum-width double-strip $D(\theta)$ described in
Observation~\ref{obs:double-strip-conf}.
The outer strip $S(\theta)$ of $D(\theta)$ is determined by
$\ell^+(\theta)$ and $\ell^-(\theta)$ on which the two extreme points
$\extr^+(\theta)$ and $\extr^-(\theta)$ lie.
For $p\in P$, let
$d_p(\theta):=\min\{\sigma_\theta(p, \extr^+(\theta)), \sigma_\theta(p, \extr^-(\theta))\}$.
Then, the width $w(\theta)$ of $D(\theta)$
in orientation $\theta$ is determined by
\[ w(\theta) = \max_{p \in P} d_p(\theta).\]

It is not difficult to see that $w(\theta)$ can be evaluated
in $O(n)$ time for a given $\theta \in [-\pi/2,\pi/2)$.
\begin{theorem} \label{thm:double-strip-fixed-orientation}
 Given a set $P$ of $n$ points and an orientation $\theta \in [-\pi/2, \pi/2)$,
 a minimum-width $\theta$-aligned double-strip enclosing $P$
 and its width $w(\theta)$ can be computed in $O(n)$ time.
\end{theorem}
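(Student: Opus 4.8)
The plan is to show that the entire computation of $D(\theta)$ for a single fixed orientation $\theta$ reduces to a constant number of linear-time subroutines on $P$. First I would compute $\conv(P)$ in $O(n)$ time; this is enough because once we know the convex hull we can find, in $O(n)$ additional time, the two extreme points $\extr^+(\theta)$ and $\extr^-(\theta)$ realizing the minimum-width $\theta$-aligned strip $S(\theta)$ — simply scan the hull vertices and pick the one extremal in the direction orthogonal to $\theta$ on each side, breaking ties by taking the last vertex in counter-clockwise order as prescribed in Section~\ref{sec:pre}. (If one prefers to avoid sorting, note that even an unsorted $O(n)$ pass suffices: take the two points of $P$ that are extremal in the direction normal to $\theta$; these are automatically extreme points of $P$.) This determines the lines $\ell^+(\theta)$, $\ell^-(\theta)$ and hence $S(\theta)$ and its middle line $\mu(S(\theta))$.

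Next I would invoke Observation~\ref{obs:double-strip-conf}: the minimum-width $\theta$-aligned double-strip $D(\theta)$ has outer strip exactly $S(\theta)$, so it remains only to determine the inner strip $S'(\theta)$, which by the same observation is the widest $\theta$-aligned strip centered on $\mu(S(\theta))$ whose interior misses every point of $P$. Equivalently, for each $p \in P$ let $\delta(p)$ be the signed distance from $p$ to the line $\mu(S(\theta))$, measured in the direction orthogonal to $\theta$; the points with $\delta(p) \ge 0$ lie (weakly) on one side and those with $\delta(p) \le 0$ on the other. The inner strip $S'(\theta)$ is bounded by the $\theta$-aligned line through the point minimizing $|\delta(p)|$ among points on the positive side and the $\theta$-aligned line through the point minimizing $|\delta(p)|$ among points on the negative side — wait, more carefully: $S'(\theta)$ must be symmetric about $\mu(S(\theta))$, so its half-width is $\min_{p \in P} |\delta(p)|$, and it is realized by the single point of $P$ closest to $\mu(S(\theta))$. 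One pass over $P$ computes all $\delta(p)$ and this minimum in $O(n)$ time.

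Finally I would compute the width $w(\theta)$. By the formula preceding the theorem, $w(\theta) = \max_{p\in P} d_p(\theta)$ where $d_p(\theta) = \min\{\sigma_\theta(p,\extr^+(\theta)),\sigma_\theta(p,\extr^-(\theta))\}$; each $\sigma_\theta(\cdot,\cdot)$ is evaluated by the closed-form expression $|\seg{pq}|\cdot|\sin(\theta-\theta_{pq})|$ in $O(1)$ time, so the maximum over $p\in P$ is another $O(n)$ pass. Alternatively — and this is the cleaner way to see consistency with the previous paragraph — $w(\theta)$ equals half the difference of the widths of $S(\theta)$ and $S'(\theta)$, namely $\tfrac12 w(S(\theta)) - \min_{p\in P}|\delta(p)|$, which is computed from quantities we already have. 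Either route assembles $D(\theta)$ and $w(\theta)$ in $O(n)$ total time, proving the theorem.

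I do not expect a genuine obstacle here: the statement is essentially a bookkeeping consequence of Observation~\ref{obs:double-strip-conf}. The only point requiring mild care is the tie-breaking and orientation conventions (which vertex is $\extr^+(\theta)$ versus $\extr^-(\theta)$, and the degenerate case $\theta=-\pi/2$), and ensuring that when $P$ lies in a single $\theta$-aligned line the "double-strip" degenerates gracefully ($S'(\theta) = S(\theta)$, $w(\theta)=0$) — none of which affects the $O(n)$ bound. Everything else is a fixed number of linear scans over $P$ or $\conv(P)$.
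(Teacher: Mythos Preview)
Your approach is essentially the paper's: find $\extr^+(\theta),\extr^-(\theta)$ by a linear scan in the direction normal to $\theta$, then compute $w(\theta)=\max_{p\in P}d_p(\theta)$ by another linear scan. One slip to fix: computing $\conv(P)$ is $\Omega(n\log n)$ in general, not $O(n)$, so the opening sentence is wrong as stated---but your own parenthetical remark (extremal points via a single $O(n)$ pass, no hull needed) is exactly what the paper does, using inner products with $(\cos(\theta+\pi/2),\sin(\theta+\pi/2))$, so simply drop the convex-hull detour and lead with that.
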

\begin{proof}
In this proof, we describe our algorithm that computes $D(\theta)$.
Given $\theta$, one can compute the two extreme points
$\extr^+(\theta)$ and $\extr^-(\theta)$ in $O(n)$ time
by computing the minimum and maximum among the inner products of
\[ (\cos(\theta+\pi/2)), \sin(\theta+\pi/2)) \]
and all points in $P$ as vectors.
The antipodal pair also determines the outer strip $S(\theta)$.

After identifying $\extr^+(\theta)$ and $\extr^-(\theta)$,
the value of $d_p(\theta)$ for each $p\in P$ can be computed in $O(1)$ time.
Hence, $w(\theta) = \max_{p \in P} d_p(\theta)$ can be found
in additional $O(n)$ time.
This determines the inner strip $S'(\theta)$.
\end{proof}

Next, we turn to finding a minimum-width double-strip over all orientations.
This is equivalent to computing the minimum value of $w(\theta)$
over $\theta \in [-\pi/2, \pi/2)$, denoted by $w^*$.
Let $\theta^*$ be an optimal orientation such that $w(\theta^*) = w^*$.
Consider the corresponding double-strip $D(\theta^*)$
whose outer strip is $S(\theta^*)$ and inner strip is $S'(\theta^*)$,
as described in Observation~\ref{obs:double-strip-conf}.
We then observe the following for
the minimum-width double-strip $D(\theta^*)$ enclosing $P$.
\begin{lemma} \label{lem:double-strip-conf2}
 Let $\theta^*$ be an orientation such that $w(\theta^*) = w^*$.
 Then, either
 \begin{enumerate}[(a)] \denseitems
  \item three extreme points of $P$ lie on the boundary of $S(\theta^*)$, or
  \item two points of $P$ lie on the boundary of $S'(\theta^*)$.
 \end{enumerate}
\end{lemma}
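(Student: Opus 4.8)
The plan is to analyze the structure of an optimal double-strip $D(\theta^*)$ by a perturbation argument: if neither condition (a) nor (b) holds, then I can slightly rotate the orientation and strictly decrease the width, contradicting optimality. First I would recall from Observation~\ref{obs:double-strip-conf} that we may assume the outer strip of $D(\theta^*)$ is $S(\theta^*)$ and the inner strip is $S'(\theta^*)$. The width $w(\theta^*)$ is half the difference of the widths of $S(\theta^*)$ and $S'(\theta^*)$; equivalently, $w(\theta^*) = \max_{p\in P} d_p(\theta^*)$, where $d_p(\theta) = \min\{\sigma_\theta(p,\extr^+(\theta)), \sigma_\theta(p,\extr^-(\theta))\}$. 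I would identify a point $p^* \in P$ achieving this maximum, i.e.\ a point lying on the boundary of the inner strip $S'(\theta^*)$, so that (b) fails exactly when $p^*$ is the unique such point.

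Next I would set up the local behavior of $w(\theta)$ near $\theta^*$. Suppose (a) fails, so that at most two extreme points lie on $\bd S(\theta^*)$: then $\extr^+$ and $\extr^-$ are each single well-defined points, the antipodal pair is locally constant, and $\sigma_\theta(\extr^+,\extr^-)$, $\sigma_\theta(p^*,\extr^+)$, $\sigma_\theta(p^*,\extr^-)$ are all sinusoidal (hence smooth) in a neighborhood of $\theta^*$ by Lemma~\ref{lemma:sinusoidal} and the remark following it. Suppose also (b) fails, so $p^*$ is the unique point on $\bd S'(\theta^*)$ and every other $p\in P$ has $d_p(\theta^*) < w(\theta^*)$; by continuity, $w(\theta) = d_{p^*}(\theta)$ for all $\theta$ in a small neighborhood of $\theta^*$. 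The width $d_{p^*}(\theta)$ equals $\tfrac12(\sigma_\theta(\extr^+,\extr^-) - 2\,\mathrm{dist}(p^*,\mu(S(\theta))))$ — more precisely, since $p^*$ lies on the inner strip, $d_{p^*}(\theta) = \tfrac12 w(S(\theta)) - \mathrm{dist}(p^*, \mu(S(\theta)))$, and I would express $\mathrm{dist}(p^*,\mu(S(\theta)))$ as $\tfrac12|\sigma_\theta(p^*,\extr^+) - \sigma_\theta(p^*,\extr^-)|$ or directly as a sinusoidal function of $\theta$. The key point is that $d_{p^*}(\theta)$ is then a sinusoidal function (difference/combination of sinusoids) on a neighborhood of $\theta^*$, and a sinusoidal function has an interior local minimum at $\theta^*$ only at isolated orientations; generically its derivative at $\theta^*$ is nonzero, so we could move $\theta$ in the descending direction and strictly reduce the width. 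To handle the boundary-of-domain and degenerate-critical-point cases cleanly, I would argue that if $d_{p^*}$ is locally constant or has a genuine local min, one can perturb the configuration — or simply note that we only need existence of \emph{some} optimal $\theta^*$ satisfying (a) or (b), so it suffices to show that the set of $\theta$ where $w(\theta)=w^*$ together with one of (a),(b) holding is nonempty, which follows by taking $\theta^*$ to be an endpoint of a maximal interval on which $w(\theta)$ is a single sinusoid.

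The main obstacle I anticipate is making the ``perturbation decreases the width'' step rigorous while accounting for all ways the claim can be satisfied at a non-smooth point of $w(\theta)$. Specifically, $w(\theta)$ is a lower envelope-type function: within a fixed antipodal interval it is the upper envelope of the $d_p(\theta)$'s, and the antipodal pair itself changes at $O(n)$ orientations (where an edge of $\conv(P)$ becomes flush with a bounding line — precisely the situation (a) describes). So the cases to organize are: (i) $\theta^*$ is where the antipodal pair changes, giving three extreme points on $\bd S(\theta^*)$, i.e.\ (a); (ii) within an antipodal interval, the maximum $\max_p d_p(\theta)$ is attained by two distinct points $p^*, p^{**}$, giving two points on $\bd S'(\theta^*)$, i.e.\ (b) (here I should check both could be on the same bounding line of $S'(\theta^*)$ or one on each — either way two points of $P$ lie on $\bd S'(\theta^*)$); (iii) otherwise $w = d_{p^*}$ is a single sinusoid on an open interval, and minimality forces it to be constant there or forces $\theta^*$ to an endpoint, which lands us back in case (i) or (ii). I would present the argument as: take $\theta^*$ minimizing $w$; if it lies in the interior of an interval where $w$ coincides with a single sinusoid $d_{p^*}$, then since a non-constant sinusoid has no interior minimum on an open interval, either $d_{p^*}$ is constant — in which case extend $\theta^*$ to an endpoint of the maximal such interval, at which (a) or (b) must hold — or we get a contradiction to minimality. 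This reduces everything to checking that the endpoints of these maximal sinusoidal pieces are exactly the orientations where (a) or (b) holds, which is immediate from the definitions of $S(\theta)$, $S'(\theta)$, and the antipodal pair.
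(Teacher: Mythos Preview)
Your approach is the same perturbation argument the paper uses: assume neither (a) nor (b), show the width can be strictly decreased, contradiction. The paper's proof is shorter because it works directly with the active branch $d_q(\theta)=\sigma_\theta(q,\extr^+(\theta^*))$ rather than reformulating via the middle line.

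There is one genuine gap in your execution. The step ``a non-constant sinusoid has no interior minimum on an open interval'' is false as stated: a sinusoid $a\sin(\theta+b)$ certainly can have an interior local minimum on an open interval. Your fallback moves---appealing to genericity of the derivative, perturbing the configuration $P$, or retreating to ``we only need \emph{some} optimal $\theta^*$''---do not rescue this: the lemma is stated for every optimal $\theta^*$ and for the given $P$. The clean fix is to drop the middle-line detour and observe directly that, once the antipodal pair and the active branch are fixed near $\theta^*$, you have
\[
 d_{p^*}(\theta)=\sigma_\theta(p^*,\extr^+(\theta^*))=|\seg{p^*\extr^+(\theta^*)}|\cdot\bigl|\sin(\theta-\theta_0)\bigr|,
\]
whose only local minima occur at value $0$. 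Since $w^*=d_{p^*}(\theta^*)>0$ (the case $w^*=0$ is degenerate: all points lie on two parallel lines), $\theta^*$ is not a local minimum of $d_{p^*}$, and you get the desired $\theta'$ with $d_{p^*}(\theta')<d_{p^*}(\theta^*)$. The paper's proof asserts exactly this decrease without spelling out the $|\sin|$ structure, so once you make this observation your argument and the paper's coincide.
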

\begin{proof}
From Observation~\ref{obs:double-strip-conf},
we know that the boundary of $S(\theta^*)$ contains at least two points,
$\extr^+(\theta^*)$ and $\extr^-(\theta^*)$,
while the boundary of $S'(\theta^*)$ contains at least one point $q\in P$
that maximizes $d_p(\theta^*)$ over $p\in P$.

Suppose that there is no more point in $P$ lying on the boundary
of $S(\theta^*)$ or of $S'(\theta^*)$.
Without loss of generality, we assume that
$q$ lies above $\mu(S(\theta))$.
This implies that the width of $D(\theta^*)$ is
 \[ w^* = w(\theta^*) = d_q(\theta^*) = \sigma_{\theta^*}(q, \extr^+(\theta^*)).\]
Consider the function $d_q(\theta)$ of $\theta$ near $\theta^*$.
Note that $d_q(\theta^*) = \sigma_{\theta^*}(q, \extr^+(\theta^*)$ is
a sinusoidal function.

Since there is no more extreme point on the boundary of $S(\theta^*)$,
there is $\theta'$ sufficiently close to $\theta^*$
such that $d_q(\theta') < d_q(\theta^*)$.
In addition, since there is no more point on the boundary of $S'(\theta^*)$,
we indeed have a strict inequality $d_p(\theta^*) < d_q(\theta^*)$
and hence $d_p(\theta') < d_q(\theta')$ for all $p\in P \setminus \{q\}$.
This implies that
 \[ w(\theta') = d_q(\theta') < d_q(\theta^*) = w(\theta^*) = w^*,\]
a contradiction.
Hence, there must be one more point in $P$ on the boundary of $S(\theta^*)$
or of $S'(\theta^*)$,
implying the lemma.
\end{proof}

Regard $d_p(\theta)$ as a function of $\theta$ in domain $[-\pi/2, \pi/2)$.
This function depends on the antipodal pair of extreme points
$(\extr^+(\theta), \extr^-(\theta))$ for $\theta$.
Since there are $O(n)$ antipodal pairs~\cite{t-sgprc-83},
the function $d_p$ for each $p\in P$
is piecewise sinusoidal with $O(n)$ breakpoints.
The width function $w(\theta)$ is the upper envelope
of the $n$ functions $d_p(\theta)$ for $p\in P$,
consisting of $O(n^2)$ sinusoidal curves in total.
Thus, we can compute the minimum width
$w^* = \min_{\theta \in [-\pi/2, \pi/2)} w(\theta)$
by computing the upper envelope of these $O(n^2)$ sinusoidal curves
and finding a lowest point in the envelope.
Since any two sinusoidal curves cross at most once by Lemma~\ref{lemma:sinusoidal},
this can be done in $O(n^2 \log n)$ time~\cite{h-fuenls-89}.
The width function $w$ is piecewise sinusoidal with $O(n^2 \alpha(n))$ breakpoints, where $\alpha$ denotes the inverse Ackermann function.
and its lowest point always occurs at one of the breakpoints
by Lemma~\ref{lem:double-strip-conf2}.
Hence, $O(n^2 \log n)$ time is sufficient to solve the problem.

In the following, we improve this to $O(n^2)$ time.
As observed above, the double-strip $D(\theta)$ of width $w(\theta)$
is determined by its outer strip $S(\theta)$ and inner strip $S'(\theta)$.
Let $\mu(\theta) := \mu(S(\theta))=\mu(S'(\theta))$ be the middle line of $D(\theta)$.
The middle line $\mu(\theta)$ separates $P$
into two subsets $P^+(\theta)$ and $P^-(\theta)$,
where $P^+(\theta)$ is the set of points $p\in P$ lying in the strip
defined by $\ell^+(\theta)$ and $\mu(\theta)$,
and $P^-(\theta) = P\setminus P^+(\theta)$.
Define $q^+(\theta) \in P^+(\theta)$ to be the closest point to line $\mu(\theta)$
among $P^+(\theta)$, and similarly $q^-(\theta)\in P^-(\theta)$ to be
the closest point to $\mu(\theta)$ among $P^-(\theta)$.
We then observe that the width of $D(\theta)$ is
\[w(\theta) = \max\{\sigma_\theta(q^+(\theta), \extr^+(\theta)),
    \sigma_\theta(q^-(\theta), \extr^-(\theta))\}.\]
Hence, $D(\theta)$ is completely determined by these four points:
$\extr^+(\theta)$, $\extr^-(\theta)$, $q^+(\theta)$, and $q^-(\theta)$.

In order to analyze and specify the change of pair $(q^+(\theta), q^-(\theta))$
as $\theta$ continuously increases from $-\pi/2$ to $\pi/2$,
we adopt an interpretation under a geometric dualization~\cite[Chapter 8]{bkos-cgaa-00}.
We shall call the plane $\Plane$---in which we have discussed objects so far---the \emph{primal plane} with the $x$- and $y$-axes.
Let $\DPlane$ be another plane, called the \emph{dual plane},
with $u$- and $v$-axes that correspond to its horizontal and vertical axes,
respectively.
Here, we use a standard duality transform $\dual{}$ which maps a point
$p = (a, b) \in \Plane$ into a line $\dual{p} \colon v = au - b \subset \DPlane$
and a non-vertical line $\ell \colon y = ax - b \subset \Plane$ into a point
$\dual{\ell} = (a, b) \in \DPlane$.
This duality transform is also defined in the reversed way
for points and lines in the dual plane $\DPlane$ to be mapped to
lines and points in the primal plane $\Plane$,
so that we have $\dual{(\dual{p})} = p$ and $\dual{(\dual{\ell})} = \ell$
for any point $p$ and any non-vertical line $\ell$
either in $\Plane$ or in $\DPlane$.
We say that a geometric object and its image under the duality transform
are \emph{dual} to each other.
Note that $p$ lies above (on or below, resp.) $\ell$ if and only if $\dual{\ell}$ lies above (on or below, resp.) $\dual{p}$.

\subsection{Scenes from the dual plane}

\begin{figure}[tb]
\begin{center}
\includegraphics[width=.99\textwidth]{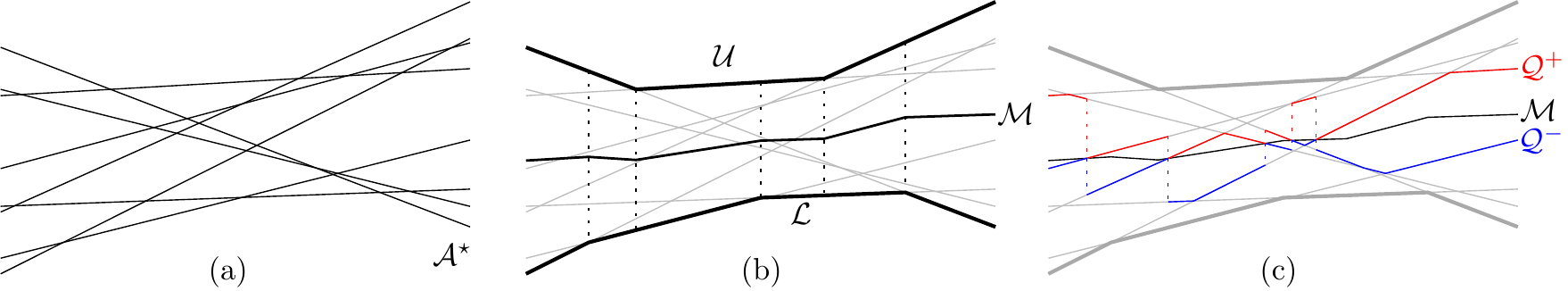}
\end{center}
\caption{(a) The arrangement $\darr$ of lines in $\dual{P}$.
 (b) The upper envelope $\uenv$, the lower envelope $\lenv$, and $\mline$
 of $\darr$.
 (c) $\qabove$ and $\qbelow$ depicted by red and blue chains, respectively.
 }
\label{fig:dual}
\end{figure}

Suppose that the input point set $P$ is given in the primal plane $\Plane$.
Consider the set $\dual{P} := \{\dual{p} \mid p \in P\}$ of $n$ lines
in the dual plane $\DPlane$
and their arrangement $\darr:=\arr(\dual{P})$.
See \figurename~\ref{fig:dual} for illustration.
Let $\uenv$ and $\lenv$ be the upper and lower envelopes in $\darr$.
The envelopes $\uenv$ and $\lenv$ can also be considered as two functions
of $u\in \Real$;
in this way, $\uenv(u)$ and $\lenv(u)$ are the $v$-coordinates in $\DPlane$
of points on $\uenv$ and $\lenv$, respectively, at $u\in \Real$.
Let $\mline(u):= (\uenv(u)+\lenv(u))/2$ be the $v$-coordinates of the midpoint
of the vertical segment connecting $\lenv$ and $\uenv$ at $u \in \Real$.
Similarly, we regard $\mline$ as the function itself and simultaneously
as its graph $\mline = \{(u,\mline(u)) \mid u\in \Real\}$ drawn in $\DPlane$.

As well known, the upper envelope $\uenv$ corresponds to
the lower chain of $\conv(P)$ and the lower envelope $\lenv$ to its upper chain.
More precisely, each vertex of $\uenv$ and $\lenv$ is dual to
the line supporting an edge of $\conv(P)$.
Thus, the total number of vertices of $\uenv$ and $\lenv$ is no more than $n=|P|$.
Also, observe that the number of vertices of $\mline$ is
equal to the total number of vertices of $\uenv$ and $\lenv$
by definition.

Now, consider the portions of lines in $\dual{P}$ above $\mline$
and the lower envelope of those pieces cut by $\mline$, denoted by $\qabove$.
Analogously,
let $\qbelow$ be the upper envelope of portions of lines in $\dual{P}$
below $\mline$.
The following observations follow directly from the basic properties of duality.
\begin{observation} \label{obs:dual}
 For each $\theta \in [-\pi/2, \pi/2)$, let $u := \tan \theta$.
 Then, the following hold:
 \begin{enumerate}[(1)] \denseitems
  \item The dual $\dual{(\ell^+(\theta))}$ of the $\theta$-aligned line
  through $\extr^+(\theta)$ is the point $(u, \lenv(u))$ in $\DPlane$.
  Similarly, we have $\dual{(\ell^-(\theta))} = (u, \uenv(u))$.
  \item The dual $\dual{(\mu(\theta))}$ of the middle line $\mu(\theta)$
  is the point $(u, \mline(u))$ in $\DPlane$.
  \item The dual of the $\theta$-aligned line through $q^+(\theta)$
  is the point $(u, \qbelow(u))$ in $\DPlane$.
  Similarly, the dual of the $\theta$-aligned line through $q^-(\theta)$
  is the point $(u, \qabove(u))$.
 \end{enumerate}
\end{observation}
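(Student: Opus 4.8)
The plan is to derive all three items directly from the duality transform $\dual{}$, translating statements about ``above'', ``below'' and perpendicular distances in the primal plane $\Plane$ into statements about $v$-coordinates in the dual plane $\DPlane$. Throughout I would take $\theta \in (-\pi/2, \pi/2)$ so that every $\theta$-aligned line is non-vertical and has a well-defined dual; the single orientation $\theta = -\pi/2$ is degenerate for the standard duality and would be handled separately (or by the limit $u = \tan\theta \to -\infty$). The two auxiliary facts I would record first are: (i) a $\theta$-aligned line has slope $u = \tan\theta$, so its dual is a \emph{point} whose $u$-coordinate equals $u$; and (ii) for a fixed point $p = (a,b) \in P$, the $\theta$-aligned line through $p$ is $y = ux - (ua-b)$, whose dual is precisely the point $(u, \dual{p}(u))$ on the dual line $\dual{p}$, where $\dual{p}(u) := au - b$. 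I would also note that the perpendicular distance between two $\theta$-aligned lines equals $\frac{1}{\sqrt{1+u^2}}$ times the absolute difference of their $y$-intercepts, hence $\frac{1}{\sqrt{1+u^2}}$ times the vertical distance between their (equal-$u$-coordinate) dual points.

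For item~(1): since $S(\theta)$ is the minimum-width $\theta$-aligned strip enclosing $P$ and $\ell^+(\theta)$ is its upper bounding line, every point of $P$ lies on or below $\ell^+(\theta)$. By the order-reversing incidence property recalled at the end of Section~\ref{sec:pre}, the point $\dual{(\ell^+(\theta))}$ then lies on or below every line $\dual{p}$ with $p\in P$, hence on or below the lower envelope $\lenv$. On the other hand $\extr^+(\theta) \in \ell^+(\theta)$, so $\dual{(\ell^+(\theta))}$ lies on the line $\dual{(\extr^+(\theta))} \in \dual{P}$; a point of a line of $\dual{P}$ that lies on or below $\lenv$ must lie exactly on $\lenv$. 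Since, by fact~(i), its $u$-coordinate is $u = \tan\theta$, this forces $\dual{(\ell^+(\theta))} = (u,\lenv(u))$. The identity $\dual{(\ell^-(\theta))} = (u,\uenv(u))$ follows by the mirror argument, using that every point of $P$ lies on or above $\ell^-(\theta)$.

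For item~(2): the middle line $\mu(\theta)$ is $\theta$-aligned and equidistant from $\ell^+(\theta)$ and $\ell^-(\theta)$, so by the distance remark above its $y$-intercept is the average of the $y$-intercepts of $\ell^+(\theta)$ and $\ell^-(\theta)$; passing to duals, $\dual{(\mu(\theta))}$ has $u$-coordinate $u$ and $v$-coordinate $\tfrac12(\lenv(u)+\uenv(u)) = \mline(u)$, which is the claim. For item~(3): by definition $q^+(\theta)$ is the point of $P$ lying in the closed $\theta$-aligned strip bounded by $\ell^+(\theta)$ and $\mu(\theta)$ that is closest to $\mu(\theta)$. Using items~(1), (2) and the order-reversing property, a point $p\in P$ lies in that strip exactly when $\lenv(u) \le \dual{p}(u) \le \mline(u)$, and for such $p$ the distance from the $\theta$-aligned line through $p$ to $\mu(\theta)$ is proportional to $\mline(u) - \dual{p}(u)$; hence $q^+(\theta)$ is the point maximizing $\dual{p}(u)$ subject to $\dual{p}(u) \le \mline(u)$. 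By fact~(ii) the dual of the $\theta$-aligned line through $q^+(\theta)$ is $(u, \dual{q^+(\theta)}(u))$, whose $v$-coordinate is $\max\{\dual{p}(u) : p\in P,\ \dual{p}(u) \le \mline(u)\}$, which is exactly $\qbelow(u)$. The statement for $q^-(\theta)$ is symmetric: it minimizes $\dual{p}(u)$ subject to $\dual{p}(u) \ge \mline(u)$, giving dual $(u, \qabove(u))$.

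I do not expect a substantial obstacle: the whole proof is careful bookkeeping with $\dual{}$. The two points needing attention are (a) the orientation $\theta = -\pi/2$, which lies outside the domain of the standard point--line duality and must be handled as a limiting/degenerate case; and (b) tie-breaking, namely when more than one extreme point lies on $\ell^+(\theta)$ or $\ell^-(\theta)$, or when a point of $P$ lies exactly on $\mu(\theta)$ --- here I would check that the conventions used to pin down $\extr^{\pm}(\theta)$, $q^{\pm}(\theta)$ and the envelopes $\qabove, \qbelow$ agree, but in any case such choices affect only \emph{which} point attains a given value, not the values $\lenv(u), \uenv(u), \mline(u), \qabove(u), \qbelow(u)$ themselves, so the displayed identities are unaffected.
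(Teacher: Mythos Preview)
Your proposal is correct and follows exactly the line the paper intends: the paper states this observation without proof, remarking only that it ``follow[s] directly from the basic properties of duality,'' and your argument simply unpacks those properties (order-reversal of above/below, slope $u=\tan\theta$ becoming the $u$-coordinate, averaging of $y$-intercepts for the middle line) in the expected way. Your caveats about the degenerate orientation $\theta=-\pi/2$ and about tie-breaking are appropriate and do not affect the stated identities.
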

From Observation~\ref{obs:dual},
one would say informally that
$\ell^+$ is dual to $\lenv$, $\ell^-$ to $\uenv$,
$\mu$ to $\mline$, $q^+$ to $\qbelow$, and $q^-$ to $\qabove$.

We are ready to describe our algorithm.
We first compute the arrangement $\darr$ in $O(n^2)$ time.
The envelopes $\uenv$ and $\lenv$ can be traced in $O(n)$ time from $\darr$,
and we also can compute $\mline$ in $O(n)$ time.
We then compute $\qabove$ and $\qbelow$.
\begin{lemma} \label{lem:Q}
 The complexity of $\qabove$ and $\qbelow$ is $O(n^2)$,
 and we can compute them in $O(n^2)$ time.
\end{lemma}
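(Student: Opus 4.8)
The plan is to bound the complexity of $\qabove$ (and symmetrically $\qbelow$) by charging its breakpoints to vertices of the arrangement $\darr$ and to vertices of $\mline$, and to compute it by a sweep over $\mline$. Recall $\qabove$ is the lower envelope of the pieces of the $n$ lines of $\dual{P}$ that lie above $\mline$, clipped by $\mline$. A breakpoint of $\qabove$ is of one of three types: (i) a point where two lines of $\dual{P}$ cross while both are above $\mline$ --- this is a vertex of $\darr$; (ii) a point where the line currently realizing $\qabove$ dives below $\mline$ and $\qabove$ must jump to whichever line-piece is lowest among those still above $\mline$ at that $u$ --- this crossing point lies on $\mline$; or (iii) a vertex of $\mline$ itself, where the clipping curve bends. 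Type-(i) breakpoints are at most $O(n^2)$ since $\darr$ has $O(n^2)$ vertices; type-(iii) breakpoints are at most $O(n)$ since $\mline$ has $O(n)$ vertices by the discussion preceding the lemma. For type (ii), each such event is a point where some line $\dual{p}$ crosses $\mline$; since $\mline$ is an $x$-monotone polygonal curve with $O(n)$ edges and each line crosses a given edge at most once (a line meets a segment in at most one point), there are $O(n\cdot n)=O(n^2)$ crossings of lines of $\dual{P}$ with $\mline$ in total, hence $O(n^2)$ type-(ii) breakpoints. Summing, $\qabove$ has $O(n^2)$ complexity, and the same argument applies to $\qbelow$.

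For the construction, I would first build $\darr$ in $O(n^2)$ time, trace $\uenv$ and $\lenv$ and form $\mline$ in $O(n)$ time (as already noted in the text). Then I would compute, for each edge $e$ of $\mline$, the set of lines of $\dual{P}$ that lie above $e$ over $e$'s $u$-range and take their lower envelope restricted to that range; concatenating these pieces across the $O(n)$ edges of $\mline$ in left-to-right order yields $\qabove$. The lower envelope of $m$ lines over an interval has complexity $O(m)$ and is computable in $O(m\log m)$ time; doing this independently per edge of $\mline$ would cost $O(n^2\log n)$, which is too slow, so instead I would maintain the lower envelope incrementally with a left-to-right plane sweep through $\darr$ restricted to the region above $\mline$. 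Concretely: sweep a vertical line across $\DPlane$, maintaining the sorted order of the lines of $\dual{P}$ currently above $\mline$ and the identity of the lowest one; handle events of the three types above --- a vertex of $\darr$ (swap two adjacent lines), a crossing of a line with $\mline$ (insert or delete that line from the "above" structure), and a vertex of $\mline$ (update which portion is clipped). Each event is processed in $O(\log n)$ time with a balanced search tree, and there are $O(n^2)$ events by the complexity bound just established, giving $O(n^2\log n)$; to shave the logarithm, note that all events can be sorted by $u$-coordinate once in $O(n^2\log n)$ --- but since $\darr$ is already computed, its vertices can be enumerated face-by-face in $O(n^2)$ total time in $u$-sorted order along each line, and the crossings with $\mline$ can likewise be obtained in $O(n^2)$ by walking each line of $\dual{P}$ through the zone of $\mline$ in $\darr$. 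With the events pre-sorted and the envelope updates amortized, the total is $O(n^2)$.

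I expect the main obstacle to be the $O(n^2)$-time construction rather than the complexity bound, which is a routine charging argument. The delicate point is avoiding an extra logarithmic factor when maintaining the lower envelope of the "currently above $\mline$" lines under insertions and deletions as the sweep crosses $\mline$: a general dynamic lower-envelope structure is too expensive. The key realization to exploit is that $\mline$ is itself a convex-position object --- it is the average of the lower envelope $\lenv$ and upper envelope $\uenv$ of $\darr$, hence $x$-monotone with $O(n)$ vertices --- so the sequence of insertion/deletion events along $\mline$ is highly structured, and one can recover $\qabove$ by essentially reading off a single pass of the arrangement $\darr$ clipped to the halfplane-like region above $\mline$. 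A clean way to phrase this: $\qabove$ is a sub-curve of the boundary of the union of faces of $\darr$ lying above $\mline$, together with portions of $\mline$; this boundary has $O(n^2)$ total complexity and can be traced in $O(n^2)$ time directly from the doubly-connected edge list of $\darr$ once we know, for each edge of $\darr$, whether it lies above or below $\mline$ (which we determine in $O(n^2)$ total time by locating $\mline$'s $O(n)$ edges in $\darr$ and propagating the above/below bit across adjacent faces). Extracting the lower envelope of that region's boundary then gives $\qabove$ in $O(n^2)$ time, and symmetrically $\qbelow$.
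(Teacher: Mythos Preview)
Your complexity bound is correct: the three-type breakpoint classification and the counting of line--$\mline$ crossings as $O(n)\cdot O(n)=O(n^2)$ is a clean charging argument, equivalent in strength to what the paper does.

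Your construction argument, however, is substantially more convoluted than necessary and circles the right idea without landing on it. You were actually closest at the start: you proposed computing, for each edge $e$ of $\mline$, the lower envelope of the lines above $e$ over $e$'s $u$-range, but dismissed this as $O(n^2\log n)$ because you assumed each such envelope must be built from scratch. The point you missed is that once $\darr$ is already built, you can recover the portion of $\qabove$ over $e$ by simply walking along the boundary of the cells of $\darr$ that $e$ intersects---the ceiling of each such cell is a piece of $\qabove$. The Zone Theorem bounds this walk by $O(n)$ per edge of $\mline$, hence $O(n^2)$ total. This is precisely the paper's proof, in two sentences.

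Your subsequent sweep-line and boundary-tracing variants are not wrong, but they all implicitly depend on the same Zone Theorem bound (you invoke it in passing when you mention ``walking each line through the zone of $\mline$'' and again when you claim you can locate $\mline$'s edges in $\darr$ in $O(n^2)$ time). So the genuine content of your construction reduces to the Zone Theorem per edge of $\mline$; the sweep machinery, the above/below bit-propagation, and the face-union boundary extraction are all detours. Also, a minor inaccuracy: $\qabove$ does not include ``portions of $\mline$''---it lies entirely on lines of $\dual{P}$, being the ceilings of the cells $\mline$ passes through.
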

\begin{proof}
We make use of the Zone Theorem.
For any set $L$ of lines and a line $\ell$ in the plane,
the \emph{zone} of $\ell$ in the arrangement $\arr(L)$ of $L$ is
the set of cells of $\arr(L)$ that are intersected by $\ell$.
The Zone Theorem states that the number of vertices, edges, and cells
in the zone of $\ell$ in $\arr(L)$ is at most $5|L|$~\cite{bkos-cgaa-00}.

For each segment $e$ of $\mline$, we find all intersections
$e\cap \dual{p}$ for every $p \in P$.
There are at most $n$ such intersections for each $e$.
The part of $\qabove$ and $\qbelow$ above and below $e$, respectively,
can be simply by walking along the boundary of cells of $\darr$
intersected by $e$.
By the Zone Theorem, the number of vertices, edges, and cells of $\darr$
we traverse is bounded by $O(n)$ and the time spent for the walk
is also bounded by $O(n)$ for each segment $e$ of $\mline$,
see also the book~\cite[Chapter 8]{bkos-cgaa-00}.
Since $\mline$ consists of $O(n)$ segments, the total complexity to
explicitly construct $\qabove$ and $\qbelow$ is bounded by $O(n^2)$.
\end{proof}
By Lemma~\ref{lem:Q} together with Observation~\ref{obs:dual},
we know that there are $O(n^2)$ changes in pair $(q^+(\theta), q^-(\theta))$
as $\theta$ increases from $-\pi/2$ to $\pi/2$.
On the other hand, we already know that
the antipodal pair $(\extr^+(\theta), \extr^-(\theta))$ changes
$O(n)$ times as $\theta$ increases from $-\pi/2$ to $\pi/2$.
Consequently, there are $O(n^2)$ changes in tuple
$(\extr^+(\theta), \extr^-(\theta), q^+(\theta), q^-(\theta))$,
and thus the orientation domain $[-\pi/2, \pi/2)$ is decomposed into
$O(n^2)$ intervals in each of which the tuple is fixed.
For each such interval $I$,
we minimize $w(\theta) = \max\{\sigma_\theta(q^+(\theta), \extr^+(\theta)),
    \sigma_\theta(q^-(\theta), \extr^-(\theta))\}$ over $\theta \in I$.
Since the four points
$\extr^+(\theta), \extr^-(\theta), q^+(\theta), q^-(\theta)$ are fixed in $I$,
the function $w$ on $I$ is the upper envelope of at most
four sinusoidal functions by Lemma~\ref{lemma:sinusoidal}.
By Lemma~\ref{lem:double-strip-conf2},
the minimum occurs either (a) at an endpoint of $I$ or
(b) when the equality $\sigma_\theta(q^+(\theta), \extr^+(\theta)) =
    \sigma_\theta(q^-(\theta), \extr^-(\theta))$ holds.
Hence, we can minimize $w(\theta)$ over $\theta \in I$ in $O(1)$ time.
Since $w^* = \min_I \min_{\theta \in I} w(\theta)$,
we can compute $w^*$ by taking the minimum over such intervals $I$.

\begin{theorem} \label{thm:double_strip_AO}
 Given a set $P$ of $n$ points in the plane,
 a minimum-width double-strip enclosing $P$ can be computed in $O(n^2)$ time.
\end{theorem}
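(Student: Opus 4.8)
The plan is to combine the structural characterization from Lemma~\ref{lem:double-strip-conf2} with the dual-plane machinery already set up, so that the minimum width $w^*$ is found by enumerating $O(n^2)$ candidate orientations implicitly rather than by building the full upper envelope of the $d_p$'s. Recall that by Observation~\ref{obs:double-strip-conf} the optimal double-strip $D(\theta)$ is pinned down by the four points $\extr^+(\theta), \extr^-(\theta), q^+(\theta), q^-(\theta)$, with $w(\theta) = \max\{\sigma_\theta(q^+(\theta),\extr^+(\theta)), \sigma_\theta(q^-(\theta),\extr^-(\theta))\}$. The point $\extr^+(\theta)$ corresponds to a vertex of $\lenv$ and changes $O(n)$ times, $\extr^-(\theta)$ to a vertex of $\uenv$, while by Observation~\ref{obs:dual} the points $q^+(\theta)$ and $q^-(\theta)$ are dual to $\qbelow$ and $\qabove$, which by Lemma~\ref{lem:Q} have total complexity $O(n^2)$. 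So the key structural fact is that $[-\pi/2,\pi/2)$ decomposes into $O(n^2)$ maximal intervals $I$ on which all four points stay fixed.

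First I would compute the arrangement $\darr$ of the $n$ dual lines in $O(n^2)$ time, extract $\uenv$, $\lenv$ in $O(n)$ time by a convex-hull computation (or directly by walking the envelopes), form $\mline$ in $O(n)$ time, and then invoke Lemma~\ref{lem:Q} to build $\qabove$ and $\qbelow$ explicitly in $O(n^2)$ time. Next I would merge the breakpoint sequences: overlay the $O(n)$ vertices of $\lenv$ and $\uenv$ (equivalently, the orientations at which the antipodal pair changes) with the $O(n^2)$ vertices of $\qabove$ and $\qbelow$. Sorting or merging these gives the $O(n^2)$ intervals $I$; since $\qabove$, $\qbelow$ are produced by a left-to-right sweep their vertices already come in $u$-order, and the envelope vertices can be merged in linearly, so this merge costs $O(n^2)$ overall. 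Within each interval $I$, the four defining points are constants, so $\sigma_\theta(q^+,\extr^+)$ and $\sigma_\theta(q^-,\extr^-)$ are sinusoidal functions of $\theta$ by the formula $\sigma_\theta(p,q) = |\seg{pq}|\cdot|\sin(\theta-\theta_{pq})|$; their pointwise maximum is the upper envelope of at most four sinusoidal pieces, which by Lemma~\ref{lemma:sinusoidal} and elementary calculus can be minimized over $I$ in $O(1)$ time --- the minimum occurs either at an endpoint of $I$ (case (a) of Lemma~\ref{lem:double-strip-conf2}) or where the two sinusoidals are equal (case (b)). Taking the minimum of these $O(n^2)$ per-interval minima yields $w^*$ and the corresponding optimal orientation, and hence the optimal double-strip, all within the $O(n^2)$ budget.

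The main obstacle, and the step requiring the most care, is verifying that $q^+(\theta)$ and $q^-(\theta)$ are correctly and consistently tracked through the dual objects $\qbelow$ and $\qabove$ over \emph{all} of $[-\pi/2,\pi/2)$ --- in particular handling the boundary behavior as $\theta \to -\pi/2$ (vertical strips, $u = \tan\theta \to \pm\infty$) and confirming that the middle line $\mu(\theta)$, whose dual is $(u,\mline(u))$, indeed separates $P$ in the way the definitions of $P^+(\theta), P^-(\theta)$ require, so that the ``closest point to $\mu(\theta)$ on each side'' really is read off from $\qbelow$ and $\qabove$ respectively. One subtlety is that a point of $P$ may switch sides of $\mu(\theta)$ as $\theta$ varies; such an event corresponds to a dual line crossing $\mline$, which is exactly a vertex of $\qabove$ or $\qbelow$ by construction, so it is already accounted for among the $O(n^2)$ breakpoints --- but this needs to be stated explicitly. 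A second minor point is ensuring the claimed $O(1)$ per-interval optimization is legitimate: one must argue that minimizing the max of up to four absolute-sinusoidal functions on an interval reduces to solving a constant number of equations $a\sin(\theta+b) = a'\sin(\theta+b')$, each of which (being equivalent to a single linear equation in $\sin\theta,\cos\theta$) has $O(1)$ solutions. Once these bookkeeping issues are dispatched, the total running time is $O(n^2)$ for building $\darr$, $O(n^2)$ for $\qabove,\qbelow$, and $O(n^2)$ for the interval sweep, proving the theorem.
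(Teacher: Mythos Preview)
Your proposal is correct and follows essentially the same approach as the paper: build $\darr$, extract $\uenv,\lenv,\mline$, compute $\qabove,\qbelow$ via Lemma~\ref{lem:Q}, merge their $O(n^2)$ breakpoints with the $O(n)$ antipodal-pair breakpoints to get $O(n^2)$ intervals on which the four defining points are fixed, and minimize the max of the (at most four) sinusoidal pieces on each interval in $O(1)$ time using Lemma~\ref{lem:double-strip-conf2}. The extra care you take with side-switching events and the per-interval $O(1)$ optimization is sound and only makes explicit what the paper leaves implicit.
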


\section{Constrained Double-Strip Problem}
\label{sec:c_double-strip}
In this section, we discuss a constrained version of the minimum-width
double-strip problem, called the \emph{constrained double-strip problem}.
The constrained double-strip problem has its own interest,
while it can also be used, in particular,
to obtain efficient algorithms for the parallelogram annulus problem,
which will be discussed in the following section.

In the constrained double-strip problem,
we are given a set $P$ of $n$ points and a subset $Q \subseteq P$
with $k = |Q|$.
A \emph{$P$-constrained double-strip} is a double-strip
whose outer strip contains all points in $P$.
Then, the problem asks to find a $P$-constrained double-strip of minimum width
that encloses subset $Q$.

Analogously to Observation~\ref{obs:double-strip-conf},
we observe the following for the constrained problem.
\begin{observation} \label{obs:const_double-strip}
 For each $\theta \in [-\pi/2, \pi/2)$,
 there exists a minimum-width $\theta$-aligned $P$-constrained double-strip
 $D_{Q}(\theta)$ enclosing $Q$ such that its outer strip is $S(\theta)$ and
 its inner strip is $S'_{Q}(\theta)$,
 where $S'_{Q}(\theta)$ is the maximum-width $\theta$-aligned strip
  such that its interior is empty of any point in $Q$ and
  $\mu(S'_Q(\theta)) = \mu(\theta)$.
\end{observation}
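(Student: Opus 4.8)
The plan is to mimic the proof of Observation~\ref{obs:double-strip-conf} step by step, simply carrying the extra requirement ``$P$ lies in the outer strip'' through the argument and replacing the enclosed set by $Q$. I would begin with an arbitrary minimum-width $\theta$-aligned $P$-constrained double-strip $D$ enclosing $Q$, written as $D = S_1\cup S_2$ with outer strip $S_D\supseteq P$ and inner strip $S'_D$. I would also record at the outset that $D_Q(\theta) := S(\theta)\setminus\intr S'_Q(\theta)$ is a legitimate object: since $S'_Q(\theta)$ has the same middle line $\mu(\theta)$ as $S(\theta)$ and its interior must avoid the points of $Q\subseteq P\subseteq S(\theta)$, its width cannot exceed that of $S(\theta)$, whence $S'_Q(\theta)\subseteq S(\theta)$; thus $D_Q(\theta)$ is a $\theta$-aligned double-strip, it is $P$-constrained (its outer strip is $S(\theta)$), and it encloses $Q$ (the interior of its inner strip is disjoint from $Q$, and $Q\subseteq S(\theta)$). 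In particular $w(D)\le w(D_Q(\theta))\le\tfrac12 w(S(\theta))$, so $2\,w(D)\le w(S(\theta))$.

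The first real step is to normalize the outer strip to $S(\theta)$, exactly as in Observation~\ref{obs:double-strip-conf}: slide $S_1$ inwards (translating it towards $\mu(S_D)$ while keeping its width, hence $w(D)$, fixed) until its outer bounding line passes through $\extr^+(\theta)$, and symmetrically slide $S_2$ until its outer bounding line passes through $\extr^-(\theta)$. The region swept off the outer side of each sub-strip lies outside $S(\theta)$ and hence contains no point of $P$, while the region swept onto its inner side only captures more points; moreover the inequality $2\,w(D)\le w(S(\theta))$ obtained above guarantees that the two translated sub-strips still fit disjointly inside $S(\theta)$, so the resulting $D'$ is a genuine double-strip. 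Its outer strip is now exactly $S(\theta)$, it still satisfies $P\subseteq S(\theta)$ and $Q\subseteq D'$ (because $Q\subseteq P$ and $Q\subseteq D$), and $w(D')=w(D)$; in particular $D'$ is again minimum-width, and its middle line is $\mu(S(\theta))=\mu(\theta)$.

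The second step normalizes the inner strip, copying the corresponding argument of Observation~\ref{obs:double-strip-conf}. Let $S'$ be the inner strip of $D'$; then $\mu(S')=\mu(\theta)$ and $\intr S'$ contains no point of $Q$. By maximality of $S'_Q(\theta)$ we have $w(S')\le w(S'_Q(\theta))$, so $w(D_Q(\theta)) = \tfrac12\bigl(w(S(\theta))-w(S'_Q(\theta))\bigr) \le \tfrac12\bigl(w(S(\theta))-w(S')\bigr) = w(D') = w(D)$. Combined with $w(D)\le w(D_Q(\theta))$ from the first paragraph, this yields $w(D_Q(\theta)) = w(D)$, so $D_Q(\theta)$ is itself a minimum-width $\theta$-aligned $P$-constrained double-strip enclosing $Q$, and it has precisely the form asserted in the observation.

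The place that needs genuine care --- and what I expect to be the main obstacle --- is the sliding step: one must verify both that translating the sub-strips inwards cannot remove a point of $P$ (the same geometric observation as in Observation~\ref{obs:double-strip-conf}, using that the swept-off parts lie outside $S(\theta)$) and that $D'$ is still a valid double-strip, which is exactly where the inequality $2\,w(D)\le w(S(\theta))$, extracted from the feasibility of $D_Q(\theta)$, gets used. Degenerate situations --- a width-zero inner strip, so that $D_Q(\theta)$ collapses to $S(\theta)$, or all of $Q$ lying on one side of $\mu(\theta)$ --- are handled exactly as in the unconstrained case.
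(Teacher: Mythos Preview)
Your proposal is correct and follows essentially the same two-step route as the paper's proof: slide the two sub-strips inward until the outer strip becomes $S(\theta)$, then argue that the inner strip must be $S'_Q(\theta)$ by maximality. Your version is in fact a bit more careful than the paper's --- you explicitly verify that $D_Q(\theta)$ is a well-defined feasible double-strip and use the resulting bound $2\,w(D)\le w(S(\theta))$ to guarantee that the two slid sub-strips remain disjoint, a point the paper (and the proof of Observation~\ref{obs:double-strip-conf} it copies) silently glosses over; your direct inequality for the inner-strip step is also cleaner than the paper's proof by contradiction, though the content is the same.
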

\begin{proof}
The proof is almost identical to that of Observation~\ref{obs:double-strip-conf}.
Let $D$ be a minimum-width $\theta$-aligned $P$-constrained double-strip enclosing $P$.
Let $S_1$ and $S_2$ be the two strips such that $D = S_1 \cup S_2$.
Note that the outer strip of $D$ should enclose all points in $P$ by definition.
If the boundary of $S_1$ does not contain an extreme point of $P$,
then we can slide $S_1$ inwards until its boundary hits an extreme point of $P$.
Let $S'_1$ be the resulting strip after this sliding process.
Then, we have $S_1 \cap P \subset S'_1$.
In the same way, we slide $S_2$ until the boundary of $S_2$ hits
an extreme point of $P$, and let $S'_2$ be the resulting strip.
Then, it holds that $S_2 \cap P \subset S'_2$.
As a result, $P \subset (S'_1 \cup S'_2)$ since $P\subset (S_1 \cup S_2)$.
Note that the outer strip of the double-strip $D' := S'_1 \cup S'_2$
is exactly $S(\theta)$.

Now, let $S'$ be the inner strip of $D'$.
By definition, we have $\mu(S(\theta)) = \mu(S')$.
Suppose that the inner strip $S'$ of $D'$ is not equal to $S'_Q(\theta)$,
the maximum-width $\theta$-aligned strip
such that its interior is empty of any point in $Q$ and $\mu(S'(\theta)) = \mu(S(\theta))$.
Then, the boundary of $S'$ does not contain any point of $Q$.
Hence, the width of $S'_Q(\theta)$ is strictly larger than the width of $S'$,
a contradiction to the fact that $D'$ is of minimum width.
Therefore, the inner strip of $D'$ should be $S'_Q(\theta)$.
\end{proof}

Let $w_{Q}(\theta)$ be the width of the minimum-width $P$-constrained
double-strip $D_{Q}(\theta)$ enclosing $Q$
described in Observation~\ref{obs:const_double-strip}, and
$w^*_Q := \min_{\theta \in [-\pi/2, \pi/2)} w_Q(\theta)$.
Hence, we focus on computing the minimum possible width $w^*_Q$
and its corresponding double-strip $D_Q(w^*_Q)$.
We also redefine $q^+(\theta)$ and $q^-(\theta)$ to be
the closest points to $\mu(\theta)$ among points in $Q$
above and below $\mu(\theta)$, respectively.
Then, we have
 \[w_{Q}(\theta) = \max\{\sigma_\theta(q^+(\theta), \extr^+(\theta)),
    \sigma_\theta(q^-(\theta), \extr^-(\theta))\}.\]
The following lemma is analogous to Lemma~\ref{lem:double-strip-conf2}.
\begin{lemma} \label{lem:const_double-strip_conf}
 Let $\theta^*$ be an orientation such that $w_Q(\theta^*) = w^*_Q$.
 Then, either
 \begin{enumerate}[(a)] \denseitems
  \item three extreme points of $P$ lie on the boundary of $S(\theta^*)$, or
  \item two points of $Q$ lie on the boundary of $S'_Q(\theta^*)$.
 \end{enumerate}
\end{lemma}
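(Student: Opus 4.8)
The plan is to mimic the proof of Lemma~\ref{lem:double-strip-conf2} almost verbatim, adjusting only the points where the constraint (the outer strip must contain all of $P$, but the inner strip need only avoid $Q$) changes the argument. By Observation~\ref{obs:const_double-strip}, the outer strip of $D_Q(\theta^*)$ is $S(\theta^*)$, so its boundary already contains the two extreme points $\extr^+(\theta^*)$ and $\extr^-(\theta^*)$; and the inner strip $S'_Q(\theta^*)$ has its boundary touching at least one point $q \in Q$, namely a point of $Q$ realizing $w_Q(\theta^*) = \max\{\sigma_{\theta^*}(q^+(\theta^*),\extr^+(\theta^*)),\sigma_{\theta^*}(q^-(\theta^*),\extr^-(\theta^*))\}$. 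The goal is to show that if neither (a) nor (b) holds — i.e. exactly two extreme points of $P$ lie on $\bd S(\theta^*)$ and exactly one point of $Q$ lies on $\bd S'_Q(\theta^*)$ — then $\theta^*$ is not a local minimum of $w_Q$, a contradiction.

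First I would set up the perturbation argument. Without loss of generality assume the witness point $q$ lies above $\mu(\theta^*)$, so that $w^*_Q = w_Q(\theta^*) = d_q(\theta^*) = \sigma_{\theta^*}(q,\extr^+(\theta^*))$, which is a sinusoidal function of $\theta$ in a neighbourhood of $\theta^*$ as long as the antipodal pair does not change. Since by assumption no edge of $\conv(P)$ lies on either line of $S(\theta^*)$, the antipodal pair $(\extr^+,\extr^-)$ is constant on a neighbourhood of $\theta^*$, so $\extr^+(\theta)$ and $\extr^-(\theta)$ are fixed there. Since the single sinusoid $\sigma_\theta(q,\extr^+)$ has a strict local extremum only at isolated points and we are claiming $\theta^*$ is a minimum of the whole envelope, I would argue that either $\sigma_\theta(q,\extr^+)$ is not locally minimized at $\theta^*$ — giving some $\theta'$ near $\theta^*$ with $d_q(\theta') < d_q(\theta^*)$ — and then, because $\bd S'_Q(\theta^*)$ contains only the one point $q$ of $Q$ and $\bd S(\theta^*)$ contains only the two extreme points, every other point $p \in Q \setminus \{q\}$ satisfies the strict inequality $d_p(\theta^*) < d_q(\theta^*)$, hence $d_p(\theta') < d_q(\theta')$ for $\theta'$ close enough; combined, $w_Q(\theta') = \max_{p \in Q} d_p(\theta') = d_q(\theta') < w_Q(\theta^*) = w^*_Q$, contradicting minimality. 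This is the same chain of reasoning as in Lemma~\ref{lem:double-strip-conf2}, with $P$ replaced by $Q$ in the definition of $w_Q$ and $d_p$.

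The one genuinely new wrinkle — and the step I expect to need the most care — is the possibility that $\sigma_\theta(q,\extr^+(\theta))$ is in fact locally minimized at $\theta^*$ even though no second point is on a boundary. In the unconstrained proof this is ruled out by noting that a single sinusoid has no interior local minimum on a small interval unless it is at a zero (width $0$), which cannot be optimal unless $P$ itself lies on a line; but here I should double-check that this dichotomy survives. The cleanest way is: a sinusoid $a\sin(\theta+b)$ on an open interval attains an interior local minimum only where its derivative vanishes, i.e. at $\theta = \pi/2 - b$ (mod $\pi$), and there the value is $\pm a$, an extreme value of the function; at such a point one can still decrease by going to the "other side" unless the function is at its global minimum over the relevant range, meaning width zero. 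So either $w^*_Q = 0$ — in which case $Q$ lies on a line and the statement is vacuous or trivially checkable — or $\theta^*$ is not a local minimum of the single sinusoid and the argument of the previous paragraph applies. I would phrase this contingency as a short remark and otherwise transcribe the Lemma~\ref{lem:double-strip-conf2} proof, substituting $Q$ for $P$ and $S'_Q$ for $S'$ throughout, and citing Observation~\ref{obs:const_double-strip} where the unconstrained proof cites Observation~\ref{obs:double-strip-conf}.
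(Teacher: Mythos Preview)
Your proposal is correct and follows essentially the same route as the paper's proof: assume neither (a) nor (b) holds, fix the antipodal pair in a neighbourhood of $\theta^*$, note that $w_Q(\theta^*)=d_q(\theta^*)=\sigma_{\theta^*}(q,\extr^+(\theta^*))$ for the unique witness $q\in Q$, and perturb $\theta$ to obtain $w_Q(\theta')<w_Q(\theta^*)$, contradicting minimality. Your extra paragraph handling the possibility that the single sinusoid is itself locally minimized at $\theta^*$ (forcing $w^*_Q=0$) is a point the paper's proof glosses over, so in that respect you are slightly more careful than the original.
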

\begin{proof}
The proof is almost identical to that of Lemma~\ref{lem:double-strip-conf2}.
From Observation~\ref{obs:const_double-strip},
we know that the boundary of $S(\theta^*)$ contains at least two extreme points,
$\extr^+(\theta^*)$ and $\extr^-(\theta^*)$,
while the boundary of $S'_Q(\theta^*)$ contains at least one point $q\in Q$
that maximizes $d_p(\theta^*)$ over all $p\in Q$.

Suppose that there is no more point in $P$ or $Q$ lying on the boundary
of $S(\theta^*)$ or of $S'(\theta^*)$, respectively.
Without loss of generality, we assume that
$q$ lies above $\mu(S(\theta))$.
This implies that the width of $D_Q(\theta^*)$ is
 \[ w^*_Q = w_Q(\theta^*) = d_q(\theta^*) = \sigma_{\theta^*}(q, \extr^+(\theta^*)).\]
Consider the function $d_q(\theta)$ of $\theta$ near $\theta^*$.
Note that $d_q(\theta^*) = \sigma_{\theta^*}(q, \extr^+(\theta^*)$ is
a sinusoidal function.

Since there is no more extreme point on the boundary of $S(\theta^*)$,
there is $\theta'$ sufficiently close to $\theta^*$
such that $d_q(\theta') < d_q(\theta^*)$.
In addition, since there is no more point on the boundary of $S'_Q(\theta^*)$,
we indeed have a strict inequality $d_p(\theta^*) < d_q(\theta^*)$
and hence $d_p(\theta') < d_q(\theta')$ for all $p\in Q \setminus \{q\}$.
This implies that
 \[ w_Q(\theta') = d_q(\theta') < d_q(\theta^*) = w_Q(\theta^*) = w^*_Q,\]
a contradiction.
Hence, there must be one more extreme point in $P$ on the boundary of $S(\theta^*)$
or one more point in $Q$ on the boundary of $S'_Q(\theta^*)$,
implying the lemma.
\end{proof}

To solve the constrained problem, we extend our approach for the unconstrained problem.
Define $\darr_Q := \arr(\dual{Q} \cup \uenv \cup \lenv)$
to be the arrangement of $k$ lines in $\dual{Q} = \{\dual{p} \mid p\in Q\}$ plus
the envelopes $\uenv$ and $\lenv$ of $\arr(\dual{P})$.
Let $\qabove_Q$ be the lower envelope of portions of lines in $\dual{Q}$
above $\mline$,
and $\qbelow_Q$ be the upper envelope of portions of lines in $\dual{Q}$
below $\mline$.

Our algorithm for the constrained double-strip problem runs as follows:
First, compute the convex hull $\conv(P)$ and extract
$\uenv$ and $\lenv$ from $\conv(P)$ in $O(n\log n)$ time.
Then, we add lines in $\dual{Q}$ incrementally one by one
to build $\darr_Q$.
Since every line in $\dual{Q}$ lies between $\uenv$ and $\lenv$,
this takes additional $O(k^2)$ time.
Next, we compute $\qabove_Q$ and $\qbelow_Q$.
\begin{lemma} \label{lem:Q_Q}
 The complexity of $\qabove_Q$ and $\qbelow_Q$ is $O(kn)$,
 and we can compute them in $O(kn)$ time.
\end{lemma}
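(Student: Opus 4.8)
The plan is to mimic the proof of Lemma~\ref{lem:Q} but to account carefully for the fact that the arrangement we walk through, $\darr_Q$, is built from only $k$ dual lines $\dual{Q}$ together with the two fixed envelopes $\uenv$ and $\lenv$, and that the curve $\mline$ along which we do the walking has complexity $O(n)$ rather than $O(k)$. First I would recall that $\mline$ is a polygonal curve with $O(n)$ segments, since its vertex count equals the combined vertex count of $\uenv$ and $\lenv$, which is at most $n$. For each segment $e$ of $\mline$, the portion of $\qabove_Q$ lying above $e$ (respectively, the portion of $\qbelow_Q$ below $e$) is obtained by walking along the boundaries of the cells of $\darr_Q$ that $e$ crosses — exactly the zone of (the supporting line of) $e$ in $\darr_Q$, restricted to $e$. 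The key point is that $\darr_Q$ is an arrangement of $k + 2$ lines, so by the Zone Theorem the zone of any line in $\darr_Q$ has $O(k)$ vertices, edges, and cells, and the walk along $e$ costs $O(k)$ time. Summing over the $O(n)$ segments of $\mline$ gives the claimed $O(kn)$ time bound, and the same count bounds the total number of breakpoints, hence the complexity of $\qabove_Q$ and $\qbelow_Q$.

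In slightly more detail: for a fixed segment $e$ of $\mline$ I would first locate the cell of $\darr_Q$ containing the left endpoint of $e$, then advance along $e$ from cell to cell, at each step recording the lines of $\dual{Q}$ that currently realize the lower envelope of the portions above $e$ and the upper envelope of the portions below $e$. Each time $e$ exits a cell through an edge, the pair of realizing lines changes in $O(1)$ time, and the number of such events along $e$ is bounded by the number of zone edges crossing $e$, which is $O(k)$. Concatenating these pieces across consecutive segments of $\mline$ yields $\qabove_Q$ and $\qbelow_Q$ as explicit polygonal chains; minor bookkeeping is needed where $e$ meets a vertex of $\mline$, but since there are only $O(n)$ such vertices this adds only $O(n)$ to the running time.

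The main technical obstacle, as in Lemma~\ref{lem:Q}, is justifying that the cost of the walk is genuinely charged against the zone complexity rather than against the full complexity of $\darr_Q$, which could be as large as $\Theta((k+n)^2)$; the Zone Theorem is precisely what rules this out, but one must be a little careful that the relevant "line'' for the Zone Theorem is the line supporting the segment $e$, and that restricting the walk to $e$ only decreases the count. A secondary point worth stating explicitly is that we only ever need the parts of $\qabove_Q$ and $\qbelow_Q$ that lie above and below $\mline$ respectively — lines of $\dual{Q}$ that never rise above $\mline$ (or never fall below it) simply do not contribute — so the two chains together have total complexity $O(kn)$, not twice that, and the construction above produces them within the same time bound.
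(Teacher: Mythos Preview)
Your approach is essentially the same as the paper's: iterate over the $O(n)$ segments of $\mline$, apply the Zone Theorem to bound the work per segment by $O(k)$, and sum to get $O(kn)$. One point needs correcting, though: $\darr_Q$ is \emph{not} an arrangement of $k+2$ lines. By definition $\darr_Q = \arr(\dual{Q} \cup \uenv \cup \lenv)$, and $\uenv$, $\lenv$ are convex polygonal chains with $O(n)$ segments in total, not single lines---you implicitly concede this yourself when you later write that the full complexity of $\darr_Q$ ``could be as large as $\Theta((k+n)^2)$,'' which is inconsistent with the $k+2$-line claim. The correct reason the Zone Theorem yields $O(k)$ per segment $e$ of $\mline$ is either that (i) $\qabove_Q$ and $\qbelow_Q$ are composed of pieces of the $k$ lines in $\dual{Q}$ only, so the relevant zone lives in $\arr(\dual{Q})$, an honest arrangement of $k$ lines; or that (ii) over the $u$-range of a single segment $e$ of $\mline$, both $\uenv$ and $\lenv$ are themselves single line segments (this is precisely how the breakpoints of $\mline$ arise), so locally one really is in an arrangement of $k+2$ lines. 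Either fix repairs your argument; the paper's own proof is equally terse on this step and simply asserts the $O(k)$ bound without spelling out the justification.
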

\begin{proof}
The proof is almost identical to that of Lemma~\ref{lem:Q}.
The difference is the number of lines we consider is only $k \leq n$.

For each segment $e$ of $\mline$, we find all intersections
$e\cap \dual{p}$ for every $p \in Q$.
There are at most $k = |Q|$ such intersections for each $e$.
The part of $\qabove_Q$ and $\qbelow_Q$ above and below $e$, respectively,
can be simply by walking along the boundary of cells of $\darr_Q$
intersected by $e$.
By the Zone Theorem, the number of vertices, edges, and cells of $\darr_Q$
we traverse is bounded by $O(k)$ and the time spent for the walk
is also bounded by $O(k)$ for each segment $e$ of $\mline$,
see also the book~\cite[Chapter 8]{bkos-cgaa-00}.
Since $\mline$ consists of $O(n)$ segments, the total complexity to
explicitly construct $\qabove_Q$ and $\qbelow_Q$ is bounded by $O(kn)$.
\end{proof}
The rest of the algorithm is the same as that described in the previous section.
As we increase $\theta \in [-\pi/2, \pi/2)$ continuously,
there are $O(kn)$ changes in tuple
$(\extr^+(\theta), \extr^-(\theta), q^+(\theta), q^-(\theta))$
by Lemma~\ref{lem:Q_Q},
and thus the orientation domain $[-\pi/2, \pi/2)$ is decomposed into
$O(kn)$ intervals in each of which the four points are fixed.
For each such interval $I$,
we minimize $w_Q(\theta) = \max\{\sigma_\theta(q^+(\theta), \extr^+(\theta)),
    \sigma_\theta(q^-(\theta), \extr^-(\theta))\}$ over $\theta \in I$.
By Lemma~\ref{lem:const_double-strip_conf},
the minimum occurs either (a) at an endpoint of $I$ or
(b) when the equality $\sigma_\theta(q^+(\theta), \extr^+(\theta)) =
    \sigma_\theta(q^-(\theta), \extr^-(\theta))$ holds.
Hence, we can minimize $w(\theta)$ over $\theta \in I$ in $O(1)$ time,
and obtain the following result.
\begin{theorem} \label{thm:const-double_strip}
 Given a set $P$ of $n$ points and a subset $Q\subseteq P$
 with $k = |Q|$,
 a minimum-width $P$-constrained double-strip enclosing $Q$ can be
 computed in $O(n\log n + kn)$ time.
\end{theorem}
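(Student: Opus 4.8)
The plan is to transplant the dual-plane strategy of Theorem~\ref{thm:double_strip_AO} to the constrained setting, substituting Observation~\ref{obs:const_double-strip} and Lemma~\ref{lem:const_double-strip_conf} for Observation~\ref{obs:double-strip-conf} and Lemma~\ref{lem:double-strip-conf2}, and carefully tracking the asymmetry that the outer strip is governed by all of $P$ while the inner strip is governed only by $Q$. By Observation~\ref{obs:const_double-strip}, an optimal $P$-constrained double-strip is $D_Q(\theta^*)$ for some orientation $\theta^*$, and $D_Q(\theta)$ is completely determined by the four points $\extr^+(\theta),\extr^-(\theta)$ (which fix the outer strip $S(\theta)$) together with $q^+(\theta),q^-(\theta)$, now taken as the points of $Q$ closest to $\mu(\theta)$ on either side, with $w_Q(\theta)=\max\{\sigma_\theta(q^+(\theta),\extr^+(\theta)),\sigma_\theta(q^-(\theta),\extr^-(\theta))\}$. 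By Lemma~\ref{lem:const_double-strip_conf}, a minimizing $\theta^*$ either lies at an orientation where the tuple $(\extr^+,\extr^-,q^+,q^-)$ changes or satisfies $\sigma_{\theta^*}(q^+,\extr^+)=\sigma_{\theta^*}(q^-,\extr^-)$. So it suffices to decompose $[-\pi/2,\pi/2)$ into the maximal intervals on which this tuple is constant and, on each interval $I$, minimize $w_Q$: since $w_Q|_I$ is the upper envelope of at most four sinusoidal functions (Lemma~\ref{lemma:sinusoidal}), its minimum lies at an endpoint of $I$ or at the unique solution of the crossing equation, so one interval costs $O(1)$.

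Next I would bound and compute the intervals. The antipodal pair $(\extr^+,\extr^-)$ changes $O(n)$ times~\cite{t-sgprc-83}. For $(q^+,q^-)$ I would reuse the dual reading of Observation~\ref{obs:dual}: with $u=\tan\theta$, the $\theta$-aligned lines through $q^+(\theta)$ and $q^-(\theta)$ are dual to the points $(u,\qbelow_Q(u))$ and $(u,\qabove_Q(u))$, where $\qabove_Q$ and $\qbelow_Q$ are the lower and upper envelopes of the portions of the $k$ lines $\dual{Q}$ lying above and below the midline $\mline$ of $\arr(\dual{P})$. Note that $\mline$, hence $\uenv$ and $\lenv$, is still defined by all of $P$ and carries $O(n)$ vertices, but only $k$ lines feed $\qabove_Q$ and $\qbelow_Q$; by Lemma~\ref{lem:Q_Q} these two envelopes have complexity $O(kn)$ and are computable in $O(kn)$ time. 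Hence $(q^+,q^-)$ changes $O(kn)$ times, the tuple changes $O(kn)$ times in total, and we obtain $O(kn)$ intervals.

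For the running time I would then account as follows: compute $\conv(P)$ and read off $\uenv,\lenv,\mline$ in $O(n\log n)$ time; form the arrangement $\darr_Q=\arr(\dual{Q}\cup\uenv\cup\lenv)$ by inserting the $k$ lines of $\dual{Q}$ one at a time, observing that each such line lies between $\uenv$ and $\lenv$ and, being itself a line of $\dual{P}$, touches but never crosses them, so its zone at insertion is delimited by itself and only the at most $k-1$ previously inserted lines of $\dual{Q}$, giving $O(k)$ work per insertion and $O(k^2)\le O(kn)$ overall; compute $\qabove_Q$ and $\qbelow_Q$ in $O(kn)$ time by Lemma~\ref{lem:Q_Q}; merge the $O(n)$ antipodal breakpoints with the $O(kn)$ breakpoints of $(q^+,q^-)$, both produced already sorted along $[-\pi/2,\pi/2)$, in linear time; and finally process each of the $O(kn)$ resulting intervals in $O(1)$ time, keeping the best. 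The total is $O(n\log n+kn)$, and the optimal double-strip is recovered from the interval attaining the minimum width $w^*_Q$.

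The step I expect to require the most care is the hybrid arrangement $\darr_Q$: although $\uenv$ and $\lenv$ contribute $\Theta(n)$ edges, one must argue that inserting each $\dual{q}$ costs $O(k)$ rather than $O(n)$, which hinges precisely on $\dual{q}$ never crossing the two envelopes, so that the cells of its zone are bounded by at most $k$ lines. The other ingredients --- the dichotomy of Lemma~\ref{lem:const_double-strip_conf}, the $O(kn)$ envelope bound of Lemma~\ref{lem:Q_Q}, and the $O(1)$ per-interval minimization --- are either already available or routine adaptations of the unconstrained argument, so no genuinely new calculation is needed.
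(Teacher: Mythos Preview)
Your proposal is correct and follows essentially the same approach as the paper: compute $\uenv,\lenv,\mline$ in $O(n\log n)$ time, build $\darr_Q$ by inserting the $k$ lines of $\dual{Q}$ in $O(k^2)$ additional time, invoke Lemma~\ref{lem:Q_Q} for $\qabove_Q,\qbelow_Q$ in $O(kn)$ time, decompose $[-\pi/2,\pi/2)$ into $O(kn)$ intervals on which the tuple $(\extr^+,\extr^-,q^+,q^-)$ is constant, and spend $O(1)$ per interval via Lemma~\ref{lem:const_double-strip_conf}. Your discussion of why inserting each $\dual{q}$ into $\darr_Q$ costs $O(k)$ rather than $O(n)$ is more explicit than the paper's, which simply asserts the $O(k^2)$ bound from the fact that every line of $\dual{Q}$ lies between $\uenv$ and $\lenv$.
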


\subsection{Dynamic maintenance under insertion and deletion}
In the following, we consider a dynamic situation in which a point in $P$ can
be inserted into or deleted from $Q$.
Our goal is to report a minimum-width $P$-constrained double-strip enclosing $Q$
over all orientations and its width, that is, $w^*_Q$ and $D_Q(w^*_Q)$,
whenever a change in $Q$ occurs,
faster than computing it from scratch by Theorem~\ref{thm:const-double_strip}.

\begin{figure}[tb]
\begin{center}
\includegraphics[width=.90\textwidth]{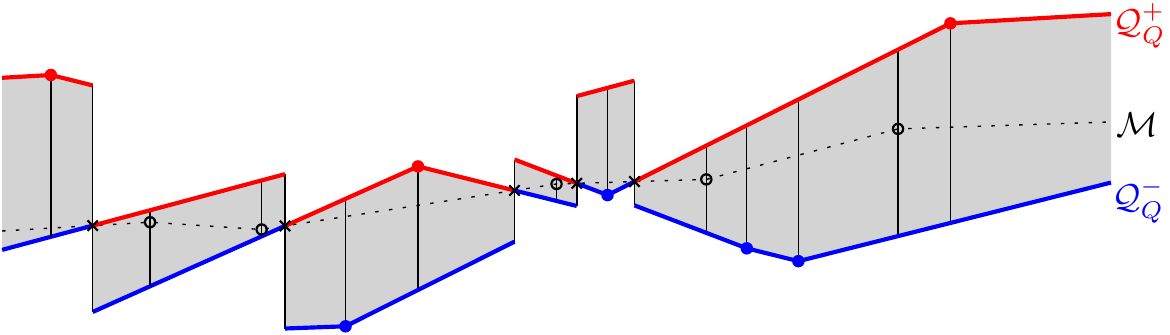}
\end{center}
\caption{Illustration to the trapezoidal map $\tmap_Q$ with $18$ trapezoids.
 In this example, we take $Q=P$ for input point set $P$, being the same
 as the one used as in \figurename~\ref{fig:dual}.
 Red and blue dots are vertices of $\darr_Q$ that belong to $\qabove_Q$ and $\qbelow_Q$, respectively.
 Small circles depict the vertices of $\mline$ and
 cross marks are the intersections between $\mline$ and any line $\dual{p}$
 for $p\in Q$.
 }
\label{fig:tmap}
\end{figure}

For the purpose, we keep the following invariants updated
under insertions and deletions: the arrangement $\darr_Q$ and
a trapezoidal map $\tmap_Q$.
The map $\tmap_Q$ is a vertical trapezoidal decomposition of the region between
$\qabove_Q$ and $\qbelow_Q$.
More precisely, let $\qbetween_Q$ be the region
between $\qabove_Q$ and $\qbelow_Q$,
and $\{u_1, u_2, \ldots, u_m\}$ be the set of $u$-coordinates of
the vertices of $\qabove_Q$, $\qbelow_Q$, and $\mline$
such that $u_1 < u_2 <\cdots < u_m$.
By adding into $\qbetween_Q$
a vertical line segment at $u=u_i$ for each $i=1,2, \ldots, m$
between two points $(u_i, \qbelow_Q(u_i))$ and $(u_i, \qabove_Q(u_i))$,
we obtain the trapezoidal decomposition $\tmap_Q$ consisting of $m+1$ trapezoids.
For convenience, let $u_0 := -\infty$ and $u_{m+1} := \infty$.
The order of trapezoids in $\tmap_Q$ is naturally induced
along the $u$-axis in $\DPlane$.
The $i$-th trapezoid $\tau$ in $\tmap_Q$ for $i=1, \ldots, m+1$
is bounded by two vertical segments at $u=u_{i-1}$ and $u=u_i$,
and two segments from $\qabove_Q$ and $\qbelow_Q$.
See \figurename~\ref{fig:tmap} for an illustration.
The two segments of $\tau$ from $\qabove_Q$ and $\qbelow_Q$
are called the \emph{ceiling} and \emph{floor} of $\tau$, respectively.
Let $U_\tau \subset \Real$ be an interval consisting of the $u$-coordinates
of all points in $\tau$.

At each trapezoid $\tau$, we store
four points $\extr^+_\tau, \extr^-_\tau, q^+_\tau, q^-_\tau \in P$
such that $\extr^+_\tau=\extr^+(\theta)$, $\extr^-_\tau = \extr^-(\theta)$,
$q^+_\tau = q^+(\theta)$, and $q^-_\tau = q^-(\theta)$ for all $\theta$
with $\tan \theta \in U_\tau$.
We also store at $\tau$ the value
  \[ w_\tau := \min_{\theta \colon \tan \theta \in U_\tau}
   \max\{\sigma_\theta(q^+_\tau, \extr^+_\tau), \sigma_\theta(q^-_\tau, \extr^-_\tau)\},\]
which can be computed in $O(1)$ time per trapezoid by Lemma~\ref{lem:const_double-strip_conf}.

Note that the union of ceilings of all trapezoids in $\tmap_Q$
forms $\qabove_Q$, and the union of their floors forms $\qbelow_Q$.
By Lemma~\ref{lem:Q_Q}, the number $m+1$ of trapezoids in $\tmap_Q$ is $O(|Q|n)$.
More importantly, for each $\theta \in [-\pi/2, \pi/2)$,
there is a unique trapezoid $\tau$ in $\tmap_Q$ such that
$\tan \theta$ lies in interval $U_\tau$ of $\tau$ and thus we have
$w_Q(\theta) = \max\{\sigma_\theta(q^+_\tau, \extr^+_\tau), \sigma_\theta(q^-_\tau, \extr^-_\tau)\}$.
This implies that $\min \{ w_Q(\theta) \mid \tan\theta \in U_\tau \} = w_\tau$,
and hence $w^*_Q = \min_{\tau \in \tmap_Q} w_\tau$.
Thus, by efficiently maintaining $\tmap_Q$,
the problem is reduced to finding the minimum of $w_\tau$ over
all trapezoids $\tau$ in $\tmap_Q$.

Updating our invariants can be done in $O(n)$ time
thanks to the Zone Theorem for the arrangement of lines.
\begin{lemma} \label{lem:invariants}
 Let $Q \subseteq P$, $p\in P$, and $Q'$ be
 either $Q \cup \{p\}$ or $Q \setminus \{p\}$.
 Then,
 $\darr_{Q'}$ and $\tmap_{Q'}$ can be obtained in $O(n)$ time,
 provided the description of $\darr_Q$ and $\tmap_Q$.
 More specifically, the number of trapezoids in the symmetric difference
 $(\tmap_{Q'} \setminus \tmap_Q) \cup (\tmap_Q \setminus \tmap_{Q'})$
 is $O(n)$, and those trapezoids can be identified in the same time bound.
\end{lemma}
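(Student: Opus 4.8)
The plan is to prove Lemma~\ref{lem:invariants} by bounding the changes to the arrangement $\darr_Q$ and the trapezoidal map $\tmap_Q$ when a single line $\dual{p}$ is added to or removed from $\darr_Q$. By symmetry, it suffices to treat insertion; deletion is the reverse operation and inherits the same bounds. Observe first that the envelopes $\uenv$, $\lenv$, and hence $\mline$ do not change, since they are determined by $\conv(P)$, not by $Q$. So the only change to $\darr_Q$ is the insertion of the single line $\ell := \dual{p}$ into an arrangement of lines, and by the Zone Theorem the number of vertices, edges, and cells of $\darr_Q$ incident to (and thus modified by) $\ell$ is $O(|Q| + (\text{number of vertices of } \uenv \cup \lenv)) = O(n)$. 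This re-establishes $\darr_{Q'}$ from $\darr_Q$ in $O(n)$ time by a standard incremental walk along $\ell$ through the zone.

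Next I would bound the change to $\qabove_Q$ and $\qbelow_Q$, the envelopes of line-pieces above and below $\mline$. Inserting $\ell$ can only alter these envelopes along the portion of $\ell$ that is visible from $\mline$ in the appropriate direction; concretely, the symmetric difference between $\qabove_Q$ (resp.\ $\qbelow_Q$) before and after the insertion is a contiguous piece whose complexity is bounded by the number of cells of $\darr_{Q'}$ that $\ell$ stabs above (resp.\ below) $\mline$, which by the Zone Theorem is again $O(n)$. To locate where $\ell$ first touches $\qabove_Q$, I would walk $\ell$ across the trapezoids of $\tmap_Q$ in $u$-order — since $\tmap_Q$ is a vertical decomposition ordered along the $u$-axis, $\ell$ enters each trapezoid through its left vertical edge, and I test in $O(1)$ whether $\ell$ dips below the ceiling; the first trapezoid where it does marks the start of the affected region, and symmetrically for the floor. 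Walking stops once $\ell$ rises back above $\qabove_Q$ (resp.\ below $\qbelow_Q$). The total number of trapezoids visited, plus the number of new vertices created on $\qabove_{Q'}$ and $\qbelow_{Q'}$ and the number of $\mline$-vertices in the affected $u$-range, is $O(n)$ by the Zone Theorem applied to $\darr_{Q'}$.

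Finally I would assemble $\tmap_{Q'}$. The vertex set defining $\tmap_{Q'}$ differs from that of $\tmap_Q$ only in the $u$-range where $\qabove_Q$ or $\qbelow_Q$ changed: some old vertices of $\qabove_Q$, $\qbelow_Q$ disappear, some new ones appear, and the $\mline$-vertices and $\mline\cap\dual{p}$ intersections in that range are reorganized (the single new line $\dual{p}$ contributes at most one new crossing with $\mline$). Hence only $O(n)$ consecutive trapezoids of $\tmap_Q$ are destroyed and $O(n)$ new trapezoids are created, which is exactly the symmetric-difference claim. For each of the $O(n)$ new trapezoids $\tau$ I recompute the stored data $\extr^+_\tau, \extr^-_\tau, q^+_\tau, q^-_\tau$ and the value $w_\tau$: the antipodal pair $(\extr^+, \extr^-)$ is read off $\uenv, \lenv$ (unchanged) in $O(1)$ amortized time as $\tau$ is traversed in $u$-order together with the vertices of $\uenv$ and $\lenv$, while $q^+_\tau$ and $q^-_\tau$ are the lines whose dual points lie on the ceiling and floor of $\tau$, available directly from the envelope construction; then $w_\tau$ follows in $O(1)$ by Lemma~\ref{lem:const_double-strip_conf}. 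I expect the main obstacle to be the careful bookkeeping of \emph{which} trapezoids at the two ends of the affected range are merely truncated versus genuinely replaced — in particular making sure that truncating the $u$-interval $U_\tau$ of a surviving boundary trapezoid is accounted for correctly so that the $O(n)$ symmetric-difference count is honest, and that no change to $\qabove_Q$ outside the Zone-Theorem-bounded region is silently missed. Once that region is pinned down precisely via the zone of $\dual{p}$ in $\darr_{Q'}$, the rest is routine amortized traversal.
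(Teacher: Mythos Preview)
Your overall strategy---invoke the Zone Theorem to bound the changes caused by inserting or deleting the single line $\ell = \dual{p}$---matches the paper's. However, two specific claims are wrong and break the $O(n)$ argument.

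First, you assert that $\ell$ ``contributes at most one new crossing with $\mline$'' and hence that the affected portion of $\qabove_Q$ (and of $\qbelow_Q$) is ``a contiguous piece.'' But $\mline = (\uenv + \lenv)/2$ is the average of a convex function and a concave function; it is piecewise linear with $O(n)$ breakpoints and is neither convex nor concave in general, so a single line $\ell$ can cross it as many as $\Theta(n)$ times. The set $\{u : \mline(u) < \ell(u) < \qabove_Q(u)\}$ where $\ell$ replaces part of the ceiling therefore consists of up to $\Theta(n)$ disjoint $u$-intervals, not one. Your walking procedure, which stops once $\ell$ rises back above $\qabove_Q$, would miss all but the first of them.

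Second, the walk you describe is over the trapezoids of $\tmap_Q$ in $u$-order. But $\tmap_Q$ has $O(|Q|\,n)$ trapezoids (Lemma~\ref{lem:Q_Q}), and the Zone Theorem does not bound how many of them you scan before reaching the first one where $\ell$ dips below the ceiling---that initial scan alone can cost $\Theta(|Q|\,n)$. The paper avoids both problems by performing the walk along $\ell$ inside the arrangement $\darr_Q$ (not $\tmap_Q$): this traversal visits $O(n)$ edges and cells by the Zone Theorem, and during it one marks \emph{every} segment of $\qabove_Q$ and $\qbelow_Q$ that $\ell$ crosses, from which the full set of affected trapezoids---across all of the possibly many disjoint intervals---is read off. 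Walking in $\darr_Q$ rather than in $\tmap_Q$, and not assuming contiguity, are the missing ingredients.
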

\begin{proof}
Here, we describe a procedure that updates our invariants
$\darr_Q$ and $\tmap_Q$ into $\darr_{Q'}$ and $\tmap_{Q'}$.

We first consider the case of insertion.
Let $p\in P \setminus Q$ and $Q' := Q \cup \{p\}$.
Adding $\ell = \dual{p}$ into $\darr_Q$ results in $\darr_{Q'}$.
The number of new vertices in $\darr_{Q'}$ is exactly $|Q| \leq n$,
and can be obtained by walking along $\ell$ through $\darr_Q$.
In this process, we traverse all the vertices, edges, and cells of $\darr_Q$
that form the zone of $\ell$ in $\darr_Q$.
Hence, by the Zone Theorem, we obtain $\darr_{Q'}$ from $\darr_Q$ in $O(n)$ time.

Also, in this process, we mark those segments of $\qabove_Q$ and $\qbelow_Q$
that are crossed by $\ell$.
From these marked segments, we can identify
all trapezoids in $\tmap_Q$ crossed by $\ell$,
and those trapezoids form the set $\tmap_Q \setminus \tmap_{Q'}$.
Observe that each trapezoid $\tau \in \tmap_Q \setminus \tmap_{Q'}$
is crossed by $\ell$,
so $\tau$ is a subset of a cell in the zone of $\ell$ in $\darr_Q$.
Hence, we have $|\tmap_Q \setminus \tmap_{Q'}| = O(|Q|) = O(n)$.
Every new trapezoid $\tau$ in $\tmap_{Q'} \setminus \tmap_Q$
can be computed by walking along $\ell$
such that $U_\tau$ is determined by new vertices in $\qabove_Q$ and $\qbelow_Q$,
including the intersection points between $\ell$ and $\mline$.
So, we have $|\tmap_{Q'} \setminus \tmap_Q| = O(n)$ and
we can specify all new trapezoids in $\tmap_{Q'} \setminus \tmap_Q$
in the same time bound.
To obtain $\tmap_{Q'}$ from $\tmap_Q$,
we can simply remove those in $\tmap_Q \setminus \tmap_{Q'}$
and add those in $\tmap_{Q'} \setminus \tmap_Q$.

Next, consider the case of deletion.
Let $p \in Q$ and $Q' := Q\setminus \{p\}$.
Also, let $\ell := \dual{p}$.
Again, by the Zone Theorem,
the total number of vertices and edges in the cells of $\darr_Q$
adjacent to $\ell$ is bounded by $O(n)$.
To obtain $\darr_{Q'}$ from $\darr_Q$,
we remove all edges $e$ of $\darr_Q$ that are portions of $\ell$
and merge two cells adjacent to $e$ into one cell.
This can be done in $O(n)$ time by the Zone Theorem.

In order to update the trapezoidal map,
observe that for each trapezoid in $\tmap_Q \setminus \tmap_{Q'}$
its ceiling or floor is a segment of $\ell$.
Hence, the number of those is $O(n)$ and can be specified in $O(n)$ time
from the arrangement $\darr_Q$.
On the other hand, new trapezoids in $\tmap_{Q'} \setminus \tmap_Q$
lie in a cell in the zone of $\ell$ in $\darr_{Q'}$.
Hence, their number is also bounded by $O(n)$ by the Zone Theorem,
and can be specified in the same time bound.
\end{proof}

Now, we are ready to describe our overall algorithm.
We assume that we start with an empty set $Q=\emptyset$
and a sequence of insertions and deletions on $Q$ is given.
For $Q=\emptyset$, it is easy to initialize $\darr_\emptyset$
and $\tmap_\emptyset$,
after computing $\uenv$, $\lenv$, and $\mline$ in $O(n\log n)$ time
as described above.
Namely, $\darr_\emptyset$ is just the union of $\uenv$ and $\lenv$,
and $\tmap_\emptyset$ can be obtained from the fact that
$\qabove_\emptyset = \uenv$ and $\qbelow_\emptyset = \lenv$.
Whenever an insertion or deletion of a point is given,
we update our invariants as described in Lemma~\ref{lem:invariants},
spending $O(n)$ time.
We then report the minimum possible width $w^*_Q$ for the current subset $Q$
and the corresponding double-strip.
Since $\tmap_Q$ consists of $O(|Q|n)$ trapezoids by Lemma~\ref{lem:Q_Q},
a linear scan of $\tmap_Q$ is too costly.
We instead use a basic priority queue, such as the binary heap, and conclude the following.
\begin{theorem} \label{thm:const-double_strip-online}
 Let $P$ be a set of $n$ points, and $Q_0 = \emptyset, Q_1, Q_2, \ldots$ be
 a sequence of subsets of $P$ such that the difference between
 $Q_{i+1}$ and $Q_i$ is only a single point in $P$.
 Suppose that each $Q_i$ is given at time $i$ by its difference from $Q_{i-1}$.
 Then, whenever $Q_i$ is specified for each $i \geq 0$, we can exactly compute
 a $P$-constrained double-strip of minimum width $w^*_{Q_i}$ that encloses
 $Q_i$ in $O(n \log n)$ time.
\end{theorem}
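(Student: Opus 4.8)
The plan is to keep the data structure of Lemma~\ref{lem:invariants} and, on top of $\darr_Q$ and $\tmap_Q$, to maintain a priority queue holding the values $w_\tau$ over all trapezoids of the current $\tmap_Q$; since $w^*_Q=\min_{\tau\in\tmap_Q}w_\tau$ (as established right before the theorem), the answer after each update is just the minimum stored in the queue. The point is that Lemma~\ref{lem:invariants} changes only $O(n)$ trapezoids per update, whereas $\tmap_Q$ itself may have $\Theta(n^2)$ of them, so we cannot afford to rescan the map but we can afford $O(n)$ queue operations.

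First I would build, alongside $\darr_\emptyset$ and $\tmap_\emptyset$, a binary min-heap $H$ whose nodes are the trapezoids of the current $\tmap_Q$, keyed by $w_\tau$. To be able to delete a trapezoid that disappears from $\tmap_Q$, I keep mutually referencing pointers between each trapezoid and its heap node, so that a deletion at a known location and an insertion each cost $O(\log|H|)$ time. For $Q=\emptyset$ this initialization fits in $O(n\log n)$: computing $\conv(P)$ and extracting $\uenv$, $\lenv$, $\mline$ takes $O(n\log n)$ time, after which $\darr_\emptyset=\uenv\cup\lenv$ and $\tmap_\emptyset$ (with $\qabove_\emptyset=\uenv$, $\qbelow_\emptyset=\lenv$) have only $O(n)$ trapezoids, each carrying its tuple $(\extr^+_\tau,\extr^-_\tau,q^+_\tau,q^-_\tau)$ and value $w_\tau$ computable in $O(1)$ time by Lemma~\ref{lem:const_double-strip_conf}, so $H$ is built within this bound.

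When $Q_i$ arrives as $Q_{i-1}\cup\{p\}$ or $Q_{i-1}\setminus\{p\}$, I would apply Lemma~\ref{lem:invariants} to turn $\darr_{Q_{i-1}},\tmap_{Q_{i-1}}$ into $\darr_{Q_i},\tmap_{Q_i}$ in $O(n)$ time, which in the same bound also lists the $O(n)$ trapezoids of $\tmap_{Q_{i-1}}\setminus\tmap_{Q_i}$ to delete and the $O(n)$ trapezoids of $\tmap_{Q_i}\setminus\tmap_{Q_{i-1}}$ to add. Each trapezoid to be deleted is removed from $H$ through its stored pointer; for each new trapezoid I compute its four stored points and $w_\tau$ in $O(1)$ time (propagating the tuple along the walk of Lemma~\ref{lem:invariants} between neighbouring new trapezoids, then invoking Lemma~\ref{lem:const_double-strip_conf}) and insert it into $H$. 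Because $|Q_i|\le n$, Lemma~\ref{lem:Q_Q} gives $|\tmap_{Q_i}|=O(|Q_i|n)=O(n^2)$ and hence $|H|=O(n^2)$, so $\log|H|=O(\log n)$; as only $O(n)$ heap operations are performed, the whole update costs $O(n\log n)$. Reporting $w^*_{Q_i}$ is then an $O(1)$ query of the heap minimum, and from the minimizing trapezoid $\tau$ together with the orientation $\theta$ with $\tan\theta\in U_\tau$ attaining $w_\tau$ we read off a minimum-width double-strip $D_{Q_i}(\theta)$.

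The thing to be careful about is precisely that $\tmap_Q$, and therefore $H$, can grow up to $\Theta(n^2)$ elements, so the heap must support deletion of an arbitrary (non-root) element rather than just extract-min; this is exactly what the trapezoid-to-node cross pointers provide, and since $\log(n^2)=O(\log n)$ every heap operation stays within budget. The decisive ingredient making the whole bound go through is Lemma~\ref{lem:invariants}: even though $\tmap_{Q_i}$ is large, at most $O(n)$ of its trapezoids --- and hence at most $O(n)$ of the stored values $w_\tau$ --- change per step, so we never touch more than $O(n)$ of the $\Theta(n^2)$ entries of $H$ per update.
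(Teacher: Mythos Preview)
Your proposal is correct and essentially identical to the paper's own proof: both augment the invariants $\darr_Q,\tmap_Q$ of Lemma~\ref{lem:invariants} with a binary heap on the trapezoids keyed by $w_\tau$, using cross pointers to support arbitrary deletions, and both rely on the $O(n)$ bound on the symmetric difference of $\tmap_{Q_{i-1}}$ and $\tmap_{Q_i}$ to limit heap operations to $O(n)$ per update at $O(\log n)$ each. Your discussion is in fact slightly more careful than the paper's in spelling out why $\log|H|=O(\log n)$ despite $|H|=\Theta(n^2)$ in the worst case.
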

\begin{proof}
We additionally maintain a priority queue $W$, along with $\darr_Q$ and $\tmap_Q$.
The priority queue $W$ consists of the trapezoids $\tau_i$ of $\tmap_Q$,
indexed by the values of $w_i$, and supports the following operations
in $O(\log |W|)$ time:
inserting a trapezoid into $W$,
deleting a trapezoid from $W$, and
finding one with smallest $w_i$-value among those in $W$.
This can be achieved by a binary heap structure with $O(|W|)$ additional pointers
between the trapezoids of $\tmap_Q$ and those of $W$.

For $Q=Q_0 = \emptyset$,
$W$ can be set up by inserting $O(n)$ trapezoids of $\tmap_Q$ in $O(n \log n)$ time.

Suppose that we have correctly maintained
$\darr_{Q_i}$, $\tmap_{Q_i}$, and $W$, and we are now given $Q_{i+1}$.
We then compute $\darr_{Q_{i+1}}$ and $\tmap_{Q_{i+1}}$ from
$\darr_{Q_i}$ ad $\tmap_{Q_i}$ in $O(n)$ time described in Lemma~\ref{lem:invariants}.
At the same time, we specify those trapezoids in the symmetric difference
between $\tmap_{Q_i}$ and $\tmap_{Q_{i+1}}$.
So, we delete trapezoids $\tau \in \tmap_{Q_{i}} \setminus \tmap_{Q_{i+1}}$
from $W$, and insert those $\tau \in \tmap_{Q_{i+1}} \setminus \tmap_{Q_i}$
into $W$.
Again by Lemma~\ref{lem:invariants}, the number of trapezoids that are
either deleted from or inserted into $W$ is bounded by $O(n)$.
Hence, we can update $W$ for $Q_{i+1}$ in $O(n\log n)$ time.

To report the minimum-width $P$-constrained double-strip enclosing $Q_{i+1}$,
we find the minimum element in $W$.
Thus, the theorem follows.
\end{proof}

If one only wants to decide whether or not the minimum possible width $w^*_Q$
is at least a given target value $w \geq 0$,
then the complexity can be reduced as follows.
\begin{theorem} \label{thm:const-double_strip-online-decision}
 Let $P$ be a set of $n$ points, and $Q_0 = \emptyset, Q_1, Q_2, \ldots$ be
 a sequence of subsets of $P$ such that the difference between
 $Q_{i+1}$ and $Q_i$ is only a single point in $P$.
 Suppose that each $Q_i$ is given at time $i$ by its difference from $Q_{i-1}$.
 Let $w \geq 0$ be a given fixed real number.
 Then, after spending $O(n \log n)$ time for preprocessing,
 whenever $Q_i$ is specified for each $i \geq 0$,
 we can decide whether $w \geq w^*_{Q_i}$ or $w < w^*_{Q_i}$  in $O(n)$ time.
\end{theorem}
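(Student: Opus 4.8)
The plan is to recycle the invariants $\darr_Q$ and $\tmap_Q$ maintained in the proof of Theorem~\ref{thm:const-double_strip-online}, but to discard the priority queue $W$ --- whose maintenance is the sole reason that update takes $O(n\log n)$ rather than $O(n)$ --- and to replace it by a single integer counter. The key observation is that, since $w^*_Q = \min_{\tau\in\tmap_Q} w_\tau$, deciding ``$w\geq w^*_Q$'' versus ``$w< w^*_Q$'' does not require knowing the actual minimum: it is exactly the question of whether some trapezoid $\tau\in\tmap_Q$ satisfies $w_\tau\leq w$. Hence I would maintain, in addition to $\darr_Q$ and $\tmap_Q$, the count
\[ N_Q := |\{\tau\in\tmap_Q : w_\tau\leq w\}|, \]
and report ``$w\geq w^*_Q$'' exactly when $N_Q\geq 1$ and ``$w< w^*_Q$'' otherwise. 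Each trapezoid $\tau$ already stores its value $w_\tau$, which by Lemma~\ref{lem:const_double-strip_conf} is computable in $O(1)$ time per trapezoid.

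For preprocessing I would compute $\conv(P)$ and extract $\uenv$, $\lenv$, and $\mline$ in $O(n\log n)$ time, and then initialize $\darr_\emptyset$ and $\tmap_\emptyset$ exactly as in Theorem~\ref{thm:const-double_strip-online}. Since $\tmap_\emptyset$ has only $O(n)$ trapezoids, a single linear scan computes every $w_\tau$ and thus $N_\emptyset$, so the whole preprocessing fits in $O(n\log n)$ time.

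When $Q_{i+1}$ is specified by its difference from $Q_i$, I would first invoke Lemma~\ref{lem:invariants} to obtain $\darr_{Q_{i+1}}$ and $\tmap_{Q_{i+1}}$ in $O(n)$ time, together with the explicit lists of the $O(n)$ trapezoids in $\tmap_{Q_i}\setminus\tmap_{Q_{i+1}}$ and in $\tmap_{Q_{i+1}}\setminus\tmap_{Q_i}$. For each trapezoid that leaves the map and has stored value $w_\tau\leq w$, I decrement $N_Q$; for each newly created trapezoid I compute $w_\tau$ in $O(1)$ time and increment $N_Q$ if $w_\tau\leq w$. This turns $N_{Q_i}$ into $N_{Q_{i+1}}$ in $O(n)$ time, after which the query is answered in $O(1)$ time by testing $N_{Q_{i+1}}\geq 1$. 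The total work per step is $O(n)$, as claimed.

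I expect no genuine obstacle beyond the one structural idea: because only an existence test (``is there a trapezoid of value at most $w$?'') is needed rather than an extremum, the $O(\log n)$-per-element priority queue of Theorem~\ref{thm:const-double_strip-online} can be replaced by a counter updated in $O(1)$ per changed trapezoid. The only subtlety is consistent tie handling --- one must fix the comparison as $w_\tau\leq w$ (rather than strict) so that the reported alternative matches the stated dichotomy $w\geq w^*_{Q_i}$ versus $w< w^*_{Q_i}$ --- and this does not affect the running time. Everything else is inherited verbatim from Lemmas~\ref{lem:const_double-strip_conf} and~\ref{lem:invariants} and from the proof of Theorem~\ref{thm:const-double_strip-online}.
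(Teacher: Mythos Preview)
Your proposal is correct and follows essentially the same approach as the paper: replace the priority queue by a counter $c$ (your $N_Q$) recording the number of trapezoids $\tau\in\tmap_Q$ with $w_\tau\leq w$, maintain it in $O(n)$ per update via Lemma~\ref{lem:invariants}, and answer by testing $c>0$. The paper's proof is the same idea with slightly less detail.
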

\begin{proof}
The proof is almost identical to that of Theorem~\ref{thm:const-double_strip-online} except that we do not need to maintain the priority queue $W$ for decision.
Instead, we maintain a counter variable $c$.

The counter $c$ stores the number of trapezoids $\tau$ in $\tmap_Q$
such that $w_\tau \leq w$.
It is not difficult to maintain the counter $c$ in $O(n)$ time
since the number of trapezoids in the symmetric difference between
$\tmap_{Q_{i-1}}$ and $\tmap_{Q_i}$ is bounded by $O(n)$
by Lemma~\ref{lem:invariants}.
Then, it holds that
$w \geq w^*_{Q_i}$ if and only if $c > 0$;
while $w < w^*_{Q_i}$ if and only if $c = 0$.
\end{proof}

\subsection{Offline version under insertions only}
Note that the above theorems give us solutions to
the online optimization and decision versions of
the $P$-constrained double-strip problem under insertions and deletions.
Here, we consider the offline optimization version of the problem
under insertions only.

Let $Q = \{p_1, p_2, \ldots, p_k\} \subseteq P$ be a subset of $P$,
and $Q_i := \{p_1, \ldots, p_i\}$ for $i=0, \ldots, k$.
Suppose that we know $Q_i$ for each $i=0, \ldots, k$ for the first time and
want to compute a minimum-width $P$-constrained double-strip
enclosing $Q_i$ for all $i=0, \ldots, k$.

For the purpose, we observe the following.
\begin{lemma} \label{lem:backward}
 For each $i=0, \ldots, k-1$, it holds that
 \[ w^*_{Q_i} = \min \{ w^*_{Q_{i+1}}, \min_{\tau\in T_i} w_\tau \},  \]
 where $T_i := \tmap_{Q_i} \setminus \tmap_{Q_{i+1}}$ denotes
 the set of trapezoids removed
 from $\tmap_{Q_i}$ by the insertion of $p_{i+1}$.
\end{lemma}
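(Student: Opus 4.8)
The plan is to exploit the monotone structure of the trapezoidal maps under insertion. By Lemma~\ref{lem:invariants}, inserting $p_{i+1}$ into $Q_i$ changes $\tmap_{Q_i}$ into $\tmap_{Q_{i+1}}$ by removing the trapezoids in $T_i := \tmap_{Q_i} \setminus \tmap_{Q_{i+1}}$ and adding those in $\tmap_{Q_{i+1}} \setminus \tmap_{Q_i}$. As established in the preceding discussion, for each orientation $\theta$ there is a unique trapezoid $\tau$ of $\tmap_{Q_i}$ with $\tan\theta \in U_\tau$, and then $w_{Q_i}(\theta)$ equals the quantity $\max\{\sigma_\theta(q^+_\tau,\extr^+_\tau),\sigma_\theta(q^-_\tau,\extr^-_\tau)\}$ whose minimum over $U_\tau$ is exactly $w_\tau$. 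Consequently $w^*_{Q_i} = \min_{\tau\in\tmap_{Q_i}} w_\tau$, and likewise $w^*_{Q_{i+1}} = \min_{\tau\in\tmap_{Q_{i+1}}} w_\tau$.

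First I would split the index set of $\tmap_{Q_i}$ as $\tmap_{Q_i} = T_i \,\cup\, (\tmap_{Q_i} \cap \tmap_{Q_{i+1}})$, a disjoint union. Taking the minimum of $w_\tau$ over this union gives
\[ w^*_{Q_i} = \min\Bigl\{ \min_{\tau\in T_i} w_\tau,\ \min_{\tau\in \tmap_{Q_i}\cap\tmap_{Q_{i+1}}} w_\tau \Bigr\}. \]
So the claim reduces to showing that $\min_{\tau\in \tmap_{Q_i}\cap\tmap_{Q_{i+1}}} w_\tau = w^*_{Q_{i+1}}$, i.e.\ that the minimum of $w_\tau$ over $\tmap_{Q_{i+1}}$ is attained (in value) on the trapezoids that survive the insertion. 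The natural way to see this is to argue that for every orientation $\theta$ we have $w_{Q_i}(\theta) \le w_{Q_{i+1}}(\theta)$: enlarging the set to enclose can only increase (never decrease) the width of the optimal constrained double-strip in a fixed orientation, since $q^+(\theta),q^-(\theta)$ are the $Q$-points closest to $\mu(\theta)$ on each side, and adding a point can only move one of these closer to $\mu(\theta)$, thereby widening the inner strip's exclusion and hence $w_Q(\theta)$. Pointwise $w_{Q_i} \le w_{Q_{i+1}}$ immediately yields $w^*_{Q_i} \le w^*_{Q_{i+1}}$, which combined with the displayed split gives the asserted formula once we also note every trapezoid $\tau$ of $\tmap_{Q_{i+1}}$ either survives from $\tmap_{Q_i}$ or is a subdivision piece of a removed trapezoid over the same $\theta$-range, so its $w_\tau$ is bounded below by the corresponding $w_{Q_i}(\theta)$ values and thus never forces $w^*_{Q_{i+1}}$ below $w^*_{Q_i}$.

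The main obstacle I anticipate is the bookkeeping between the two maps over a common $\theta$-interval: a single trapezoid of $\tmap_{Q_i}$ that is \emph{not} in $T_i$ is literally the same trapezoid in $\tmap_{Q_{i+1}}$ (same $U_\tau$, same stored four points), so its $w_\tau$ contribution is identical on both sides — that part is clean. The care is needed for the removed trapezoids of $T_i$: over their $u$-ranges, $\tmap_{Q_{i+1}}$ has new trapezoids whose ceilings/floors may come from $\dual{p_{i+1}}$, so one must check that $w^*_{Q_{i+1}}$ restricted to those ranges is $\ge \min_{\tau\in T_i} w_\tau$ is \emph{not} what we need — rather we need that these new trapezoids do not make $w^*_{Q_{i+1}}$ smaller than $w^*_{Q_i}$, which follows from the pointwise monotonicity $w_{Q_i}\le w_{Q_{i+1}}$ stated above. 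So the real content is that one clean monotonicity inequality; everything else is a disjoint-union decomposition. I would state and prove the pointwise inequality as the first step, then the formula falls out in two lines.
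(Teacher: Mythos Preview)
Your approach is essentially the paper's: write $s:=\min_{\tau\in S_i}w_\tau$, $t:=\min_{\tau\in T_i}w_\tau$, $t':=\min_{\tau\in T'_i}w_\tau$ with $S_i:=\tmap_{Q_i}\cap\tmap_{Q_{i+1}}$ and $T'_i:=\tmap_{Q_{i+1}}\setminus\tmap_{Q_i}$, so that $w^*_{Q_i}=\min\{s,t\}$ and $w^*_{Q_{i+1}}=\min\{s,t'\}$, and then use the monotonicity $w^*_{Q_i}\le w^*_{Q_{i+1}}$ to finish. Your pointwise inequality $w_{Q_i}(\theta)\le w_{Q_{i+1}}(\theta)$ is a correct (slightly stronger) route to that monotonicity; the paper just observes that any $P$-constrained double-strip enclosing $Q_{i+1}$ also encloses $Q_i$.

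One logical misstep to fix: the sentence ``the claim reduces to showing that $\min_{\tau\in \tmap_{Q_i}\cap\tmap_{Q_{i+1}}} w_\tau = w^*_{Q_{i+1}}$'' asserts $s=\min\{s,t'\}$, i.e.\ $s\le t'$, which is sufficient but can fail (take $t=5$, $t'=7$, $s=10$: then $w^*_{Q_i}=5$, $w^*_{Q_{i+1}}=7$, the lemma holds, yet $s\neq w^*_{Q_{i+1}}$). What you actually need---and what you correctly isolate in your last paragraph---is only $t'\ge w^*_{Q_i}$, which follows immediately from $t'\ge\min\{s,t'\}=w^*_{Q_{i+1}}\ge w^*_{Q_i}$. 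With that in hand the ``two lines'' are
\[
\min\{w^*_{Q_{i+1}},t\}=\min\{s,t',t\}=\min\{s,t\}=w^*_{Q_i},
\]
so drop the incorrect intermediate reduction and write this directly. The paper reaches the same conclusion by a short case analysis on which of $s,t,t'$ is smallest.
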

\begin{proof}
First, we show that we always have $w^*_{Q_{i}} \leq w^*_{Q_{i+1}}$.
Consider a $P$-constrained double-strip $D$ enclosing $Q_{i+1}$
of minimum possible width $w^*_{Q_{i+1}}$.
Since $Q_i \subset Q_{i+1} \subset D$,
$D$ is also a $P$-constrained double-strip enclosing $Q_i$.
This implies that $w^*_{Q_{i}} \leq w^*_{Q_{i+1}}$.

Let $T'_i := \tmap_{Q_{i+1}} \setminus \tmap_{Q_{i}}$,
and $S_i := \tmap_{Q_i} \cap \tmap_{Q_{i+1}}$.
Then, $\tmap_{Q_i} = S_i \cup T_i$ and $\tmap_{Q_{i+1}} = S_i \cup T'_i$.
Letting $s := \min_{\tau\in S_i} w_\tau$
$t := \min_{\tau \in T_i} w_\tau$, and $t' := \min_{\tau \in T'_i} w_\tau$,
we have
 \[ w^*_{Q_i} = \min_{\tau \in \tmap_{Q_i}} w_\tau = \min\{s, t\}
 \quad \text{ and } \quad
  w^*_{Q_{i+1}} = \min_{\tau \in \tmap_{Q_{i+1}}} w_\tau = \min\{s, t'\}.\]
Note that we also have
$w^*_{Q_i} \leq w^*_{Q_{i+1}}$, and thus
$\min\{s, t\} \leq \min\{s, t'\}$.
The lemma claims that $w^*_{Q_i} = \min \{s, t, t'\}$.

We consider all possible cases of the relative order
among the three values $s$, $t$, and $t'$, and prove the lemma for every case.
If $s$ is the smallest among the three values $\{s, t, t'\}$,
then we have  $w^*_{Q_i} = s = \min\{s, t, t'\}$, so the lemma follows.
If $t$ is the smallest among the three values $\{s, t, t' \}$,
then we have  $w^*_{Q_i} = t = \min\{s, t, t'\}$, so the lemma follows.

Lastly, suppose that neither $s$ nor $t$ is the smallest among the three.
This means that $t' < s$ and $t' < t$.
We then have
\[ w^*_{Q_{i+1}} = t' < \min\{s, t\} = w^*_{Q_i}, \]
a contradiction to our observation $w^*_{Q_{i}} \leq w^*_{Q_{i+1}}$
shown above.
Therefore, either $s$ or $t$ is always the smallest among $\{s, t, t'\}$,
and hence we always have $w^*_{Q_i} = \min \{s, t, t'\}$,
as claimed.
\end{proof}

Lemma~\ref{lem:backward} suggests computing $w^*_{Q_i}$
backwards from $i = k$ to $i=0$.
By maintaining $\tmap_{Q_i}$ from $i=0$ to $k$ and
storing the sets $T_i = \tmap_{Q_i} \setminus \tmap_{Q_{i+1}}$,
this can be done in $O(kn)$ time.

More precisely, we first build $\darr_{Q_0}$ and $\tmap_{Q_0}$ as described above
in $O(n\log n)$ time.
We then insert $p_i$ for $i=1, \ldots, k$, one by one,
and compute $\darr_{Q_i}$ and $\tmap_{Q_i}$ in $O(n)$ time per insertion
by Lemma~\ref{lem:invariants},
but we do not compute the minimum width $w^*_{Q_i}$ at every insertion.
Instead, we collect all trapezoids $\tau$ that have been deleted,
that is, the set $T_i = \tmap_{Q_i} \setminus \tmap_{Q_{i+1}}$.
Then, we apply Lemma~\ref{lem:backward} to compute
$w^*_{Q_i}$ for each $i=0, \ldots, k$ and its corresponding double-strip.

We first compute $w^*_{Q_k} = \min_{\tau \in \tmap_{Q_k}} w_\tau$.
Then, we iterate $i$ from $k-1$ to $0$, and compute
$w^*_{Q_i}$ based on Lemma~\ref{lem:backward}.
Since $|T_i| = O(n)$ for each $i$ by Lemma~\ref{lem:invariants},
this takes $O(kn)$ additional time.
We thus conclude the following theorem.

\begin{theorem} \label{thm:const_double-strip_offline_insertion}
 Let $P$ be a set of $n$ points and $p_1, \ldots, p_k \in P$
 be $k \geq 1$ points in $P$, and
 let $Q_i := \{p_1, \ldots, p_i\}$ for $0 \leq i \leq k$.
 Then, in time $O(n \log n + kn)$, we can exactly compute
 $w^*_{Q_i}$ for all $0 \leq i \leq k$
 and corresponding $P$-constrained double-strips of width $w^*_{Q_i}$
 enclosing $Q_i$.
\end{theorem}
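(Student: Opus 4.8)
The plan is to process the insertions of $p_1,\dots,p_k$ in a single forward sweep, maintaining the arrangement $\darr_{Q_i}$ and the trapezoidal map $\tmap_{Q_i}$ as in Lemma~\ref{lem:invariants}, but postponing \emph{all} width computations until the sweep is finished, and then to recover the $w^*_{Q_i}$ backwards using Lemma~\ref{lem:backward}. Concretely, I would first compute $\conv(P)$, extract $\uenv$ and $\lenv$, and build $\mline$, $\darr_{Q_0}$, and $\tmap_{Q_0}$ in $O(n\log n)$ time; here $\qabove_{Q_0}=\uenv$ and $\qbelow_{Q_0}=\lenv$, so $\tmap_{Q_0}$ has $O(n)$ trapezoids. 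Then, for $i=1,\dots,k$, I insert $\dual{p_i}$ and obtain $\darr_{Q_i}$ and $\tmap_{Q_i}$ from $\darr_{Q_{i-1}}$ and $\tmap_{Q_{i-1}}$ in $O(n)$ time by Lemma~\ref{lem:invariants}, while explicitly recording the set $T_{i-1}=\tmap_{Q_{i-1}}\setminus\tmap_{Q_i}$ of disappearing trapezoids, each tagged with the value $w_\tau$ it carried in $\tmap_{Q_{i-1}}$ (this value is available in $O(1)$ time from the four points stored at $\tau$ by Lemma~\ref{lem:const_double-strip_conf}). Since $|T_{i-1}|=O(n)$ by Lemma~\ref{lem:invariants}, the sweep costs $O(kn)$ after the $O(n\log n)$ initialization, and the $k$ stored sets have total size $O(kn)$.

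When the sweep reaches $\tmap_{Q_k}$, I would compute $w^*_{Q_k}=\min_{\tau\in\tmap_{Q_k}}w_\tau$ by one linear scan; this costs $O(kn)$ since $\tmap_{Q_k}$ has $O(kn)$ trapezoids by Lemma~\ref{lem:Q_Q}. I would then iterate $i$ from $k-1$ down to $0$ and apply Lemma~\ref{lem:backward}, which gives $w^*_{Q_i}=\min\{w^*_{Q_{i+1}},\min_{\tau\in T_i}w_\tau\}$: step $i$ scans only the $O(n)$ cached trapezoids of $T_i$ and compares against the already-known $w^*_{Q_{i+1}}$, so the backward pass also runs in $O(kn)$ time. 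Adding up, the total running time is $O(n\log n+kn)$.

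To also output an actual minimum-width $P$-constrained double-strip enclosing $Q_i$, I would carry, alongside each $w^*_{Q_i}$, a pointer to a trapezoid $\tau_i$ realizing the minimum in Lemma~\ref{lem:backward} (either inherited from $\tau_{i+1}$ when $w^*_{Q_i}=w^*_{Q_{i+1}}$, or taken from $T_i$ otherwise), together with the orientation $\theta\in U_{\tau_i}$ at which $w_{\tau_i}$ is attained. The four points $\extr^+_{\tau_i},\extr^-_{\tau_i},q^+_{\tau_i},q^-_{\tau_i}$ stored at $\tau_i$ then determine $S(\theta)$ and $S'_{Q_i}(\theta)$, hence the desired double-strip, via Observation~\ref{obs:const_double-strip}.

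I expect the only genuinely delicate point to be the correctness of \emph{not} recomputing $\min_{\tau\in\tmap_{Q_i}}w_\tau$ afresh for each $i$ (the naive choice would cost $\Theta(k^2 n)$): this is exactly Lemma~\ref{lem:backward}, and its correctness rests on the monotonicity $w^*_{Q_i}\le w^*_{Q_{i+1}}$ --- any $P$-constrained double-strip enclosing $Q_{i+1}$ also encloses the smaller set $Q_i$ --- which forces every trapezoid that survives an insertion to contribute the identical value $w_\tau$ to both maps, so the minimum over $\tmap_{Q_i}$ can undercut that over $\tmap_{Q_{i+1}}$ only through a trapezoid of $T_i$. Beyond invoking that lemma, the remaining work is routine bookkeeping: caching the $O(kn)$ pairs $(\tau,w_\tau)$ over all $T_i$ and propagating witness pointers, all within the stated $O(n\log n+kn)$ time and $O(kn)$ space.
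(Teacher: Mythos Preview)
Your proposal is correct and follows essentially the same approach as the paper: a forward sweep maintaining $\darr_{Q_i}$ and $\tmap_{Q_i}$ via Lemma~\ref{lem:invariants} while caching the removed trapezoid sets $T_i$, followed by a linear scan of $\tmap_{Q_k}$ and a backward pass using Lemma~\ref{lem:backward}. Your additional remarks on propagating witness trapezoids to output the actual double-strips are sound and slightly more explicit than the paper's own treatment.
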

\begin{proof}
We first build $\darr_{Q_0}$ and $\tmap_{Q_0}$ as described above
in $O(n\log n)$ time.
We then insert $p_i$ for $i=1, \ldots, k$, one by one,
and compute $\darr_{Q_i}$ and $\tmap_{Q_i}$ in $O(n)$ time per insertion
by Lemma~\ref{lem:invariants},
but we do not compute the minimum width $w^*_{Q_i}$ at every insertion.
Instead, we collect all trapezoids $\tau$ that has been deleted,
that is, the set $T_i = \tmap_{Q_i} \setminus \tmap_{Q_{i+1}}$.
Then, we apply Lemma~\ref{lem:backward} to compute
$w^*_{Q_i}$ for each $i=0, \ldots, k$ and its corresponding double-strip.

We first compute $w^*_{Q_k} = \min_{\tau \in \tmap_{Q_k}} w_\tau$.
Then, we iterate $i$ from $k-1$ to $0$, and compute
$w^*_{Q_i}$ based on Lemma~\ref{lem:backward}.
Since $|T_i| = O(n)$ for each $i$ by Lemma~\ref{lem:invariants},
this takes $O(kn)$ additional time.
We thus conclude the theorem.
\end{proof}

\section{Minimum-Width Parallelogram Annuli} \label{sec:paralannul}
In this section, we present algorithms that compute
a minimum-width parallelogram annulus that encloses
a set $P$ of $n$ points in $\Plane$.
As introduced in Section~\ref{sec:pre},
a parallelogram annulus is defined by two double-strips and
its orientation is represented by a pair of parameters $(\theta, \phi)$
with $\theta, \phi \in [-\pi/2, \pi/2)$.

Here, we consider several cases depending on how many
of the two side orientations, $\theta$ and $\phi$, are fixed or not.
The easiest case is certainly when both $\theta$ and $\phi$ are fixed.

\begin{observation} \label{obs:p_annulus}
 For any $\theta, \phi \in [-\pi/2, \pi/2)$,
 there exists a minimum-width $(\theta, \phi)$-aligned
 parallelogram annulus that encloses $P$
 such that its outer parallelogram $R(\theta, \phi)$ is
 the intersection of $S(\theta)$ and $S(\phi)$,
 the minimum-width $\theta$-aligned and $\phi$-aligned strip enclosing $P$,
 respectively.
\end{observation}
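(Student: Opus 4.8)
The plan is to mimic the structure of the proof of Observation~\ref{obs:double-strip-conf}, now applied simultaneously to both double-strips that define the parallelogram annulus. Fix $\theta, \phi \in [-\pi/2, \pi/2)$ and let $A$ be an arbitrary minimum-width $(\theta,\phi)$-aligned parallelogram annulus enclosing $P$, defined by double-strips $D_1$ (which is $\theta$-aligned) and $D_2$ (which is $\phi$-aligned). I first argue that, without increasing the width $w(A) = \max\{w(D_1), w(D_2)\}$, I may assume the outer strip of $D_1$ is exactly $S(\theta)$ and the outer strip of $D_2$ is exactly $S(\phi)$; then the outer parallelogram is $R(\theta,\phi) = S(\theta) \cap S(\phi)$ by definition, which is what the observation asserts.

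The key step is the following sliding argument, carried out independently on $D_1$ and on $D_2$. Consider $D_1$, and write $D_1 = T_1 \cup T_2$ as the union of its two parallel $\theta$-aligned strips, with outer strip $S$. If a bounding line of $S$ does not pass through an extreme point of $P$, slide that line inward (keeping the double-strip $\theta$-aligned and keeping $\mu(T_1) = \mu(T_2)$, i.e.\ translating the whole configuration $T_1, T_2$ and then shrinking the outer width) until it hits an extreme point of $P$. I must check that this operation keeps $P$ inside the outer parallelogram $R = S \cap S(\phi)$ and does not shrink the inner parallelogram in a way that exposes a point of $P$: since every point of $P$ lies in the original $D_1 \cup R'$ region appropriately, the same bookkeeping as in Observation~\ref{obs:double-strip-conf} shows $P$ remains enclosed, and the width $w(D_1)$ does not increase during an inward slide of the outer boundary. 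Doing this on both bounding lines of $S$ makes $S$ the minimum-width $\theta$-aligned strip enclosing $P$, namely $S(\theta)$. Symmetrically, the same on $D_2$ makes its outer strip $S(\phi)$. The only subtlety is that the two double-strips interact through the inner parallelogram $R' = (\text{inner strip of }D_1) \cap (\text{inner strip of }D_2)$, so I should perform the argument one double-strip at a time and observe that adjusting $D_1$ does not affect $D_2$ at all (they live in different orientations and the outer-strip slide of $D_1$ only enlarges the feasible region for points relative to $D_2$), so the two modifications compose without conflict.

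The main obstacle, and the place the proof needs care rather than being a verbatim copy of Observation~\ref{obs:double-strip-conf}, is verifying that the inward slide of an outer bounding line of $D_1$ genuinely does not increase $\max\{w(D_1), w(D_2)\}$. Sliding the outer boundary of $D_1$ inward while holding its inner strip fixed decreases $w(D_1)$; but one must make sure that no point of $P$ is pushed outside $R = S \cap S(\phi)$ or into $\intr R'$ in the process. Points of $P$ that were in $D_1$ and in the part of the outer strip being trimmed are, by the choice of slide (stop at the first extreme point hit), still inside; points of $P$ in $R'$-side positions relative to $D_1$ were in the annulus only because they were outside $\intr R'$, and trimming the outer boundary cannot move them into $\intr R'$. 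After this verification the conclusion is immediate: the resulting annulus $A'$ has $w(A') \le w(A)$, is $(\theta,\phi)$-aligned, encloses $P$, and has outer parallelogram $S(\theta) \cap S(\phi)$, so it is a minimum-width such annulus of the claimed form.
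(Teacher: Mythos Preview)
Your overall plan is the right one, and it matches the paper's strategy: start from an arbitrary minimum-width $(\theta,\phi)$-aligned annulus, then ``canonicalize'' each of its two double-strips so that the outer strip becomes $S(\theta)$ (resp.\ $S(\phi)$), without increasing $\max\{w(D_1),w(D_2)\}$ and without losing any point of $P$. The paper does this cleanly by invoking Observation~\ref{obs:const_double-strip}: setting $P_i := D_i\cap P$, it replaces $D_i$ by the canonical $P$-constrained double-strip $D_{P_i}$ enclosing $P_i$, whose outer strip is $S(\theta)$ (resp.\ $S(\phi)$) and whose width is at most $w(D_i)$. Since $P = P_1\cup P_2$ (every point of $A$ avoids the interior of at least one inner strip), the new annulus still encloses $P$.

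Where your write-up stumbles is in the mechanics of the slide. You describe ``sliding the outer boundary of $D_1$ inward while holding its inner strip fixed'' and claim this ``decreases $w(D_1)$''. But moving only one bounding line of the outer strip while fixing the inner strip destroys the double-strip property: the middle lines of outer and inner strip no longer coincide, and the two component strips acquire different widths. (Your parenthetical ``keeping $\mu(T_1)=\mu(T_2)$'' is also off --- $T_1$ and $T_2$ are the two component strips, and their middle lines are never equal.) The correct elementary operation, exactly as in the proof of Observation~\ref{obs:double-strip-conf}, is to \emph{translate an entire component strip} $T_j$ inward until its outer bounding line hits an extreme point of $P$; a short computation shows this preserves both the double-strip structure and the width $w(D_1)$ (it does not decrease it), while ensuring $T_j\cap P$ stays inside. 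After doing this to both components you get outer strip $S(\theta)$ and $D_1\cap P\subset D_1'$, and then your two-case check that $A'$ still encloses $P$ goes through. So the idea is fine; only the description of the local move needs to be corrected.
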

\begin{proof}
Let $A$ be any minimum-width $(\theta, \phi)$-aligned
parallelogram annulus that encloses $P$.
By definition, $A$ is defined by two double-strips $D_1$ and $D_2$,
where $D_1$ is $\theta$-aligned and $D_2$ is $\phi$-aligned.
Let $S_1$ and $S_2$ be the outer strips of $D_1$ and $D_2$, respectively.
Note that the outer parallelogram $R$ of $A$ is the intersection of
the outer strips $S_1$ and $S_2$ of $D_1$ and $D_2$,
that is, $R = S_1 \cap S_2$.
Since $A$ encloses $P$ and thus $R$ encloses $P$,
we have $P \subset S_1$ and $P\subset S_2$.
This implies that
$D_1$ is a $P$-constrained double-strip enclosing subset $P_1 := D_1\cap P$
and $D_2$ is a $P$-constrained double-strip enclosing subset $P_2 := D_2 \cap P$.

By Observation~\ref{obs:const_double-strip},
there exists the minimum-width $P$-constrained $\theta$-aligned
double-strip $D_{P_1}(\theta)$ enclosing $P_1$ whose outer strip is
equal to $S(\theta)$.
Analogously, we consider the minimum-width $P$-constrained $\phi$-aligned
double-strip $D_{P_2}(\phi)$ enclosing $P_2$ whose outer strip is
equal to $S(\phi)$.

Now consider another parallelogram annulus $A'$ defined by
$D_{P_1}(\theta)$ and $D_{P_2}(\theta)$.
Since $P_1 \subset D_{P_1}(\theta)$ and $P_2 \subset D_{P_2}(\phi)$,
it holds that $A'$ encloses $P$.
By definition of $D_{P_1}(\theta)$ and $D_{P_2}(\phi)$,
we have that the width of $A'$ is at most the width of $A$.
Hence, $A'$ is also a minimum-width $(\theta, \phi)$-aligned
parallelogram annulus that encloses $P$.
Finally, observe that the outer parallelogram of $A'$ is
exactly $R(\theta, \phi) = S(\theta) \cap S(\phi)$.
This proves the observation.
\end{proof}

The above observation gives us a structural property of an optimal annulus
which we should look for, and leads to a linear-time algorithm.

\begin{theorem} \label{thm:p_annulus_ff}
 Given a set $P$ of $n$ points and $\theta, \phi \in [-\pi/2, \pi/2)$,
 a minimum-with $(\theta, \phi)$-aligned parallelogram annulus that encloses $P$
 can be computed in $O(n)$ time.
\end{theorem}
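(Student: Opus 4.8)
The plan is to exploit Observation~\ref{obs:p_annulus} to reduce the problem to two independent constrained double-strip computations, each in a \emph{fixed} orientation, and then to show each can be done in $O(n)$ time from scratch. First I would compute the two minimum-width enclosing strips $S(\theta)$ and $S(\phi)$ of $P$; by the argument in the proof of Theorem~\ref{thm:double-strip-fixed-orientation}, each of these, together with the corresponding extreme points $\extr^+, \extr^-$ in that orientation, is obtained in $O(n)$ time by computing inner products with the direction vector normal to the orientation. Their intersection is the outer parallelogram $R(\theta,\phi)$, which by Observation~\ref{obs:p_annulus} is the outer parallelogram of some optimal annulus, so its middle lines $\mu(\theta)$ and $\mu(\phi)$ are determined.

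Next I would compute, for the optimal annulus sitting inside this fixed outer parallelogram, the two inner double-strips. By Observation~\ref{obs:p_annulus} combined with Observation~\ref{obs:const_double-strip}, the $\theta$-aligned double-strip $D_1$ must have outer strip $S(\theta)$ and inner strip $S'_{P_1}(\theta)$ that is point-free among $P_1 = D_1 \cap P$ with middle line $\mu(\theta)$; similarly for $D_2$ in orientation $\phi$. The subtlety here is the circular dependence: $P_1$ is defined as $D_1 \cap P$, which depends on $D_1$, which depends on $P_1$. I would resolve this exactly as in Section~\ref{sec:double-strip}: for a fixed orientation $\theta$, the width of the best $\theta$-aligned double-strip with outer strip $S(\theta)$ enclosing \emph{all} of $P$ is $w(\theta) = \max_{p\in P} d_p(\theta)$ with $d_p(\theta) = \min\{\sigma_\theta(p,\extr^+(\theta)), \sigma_\theta(p,\extr^-(\theta))\}$, and crucially this double-strip automatically encloses exactly the points of $P$ it needs to, since any point not in the inner strip is in the double-strip. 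Thus computing $w(\theta)$ and $w(\phi)$ each amounts to a single pass over $P$ evaluating $d_p$, in $O(n)$ time, after which the inner strips $S'(\theta)$ and $S'(\phi)$ are read off, and the annulus width is $\max\{w(\theta),w(\phi)\}$.

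Assembling the algorithm: compute $S(\theta), S(\phi)$ and their extreme points ($O(n)$); form $R(\theta,\phi) = S(\theta)\cap S(\phi)$ and the middle lines; then in one more pass each compute $q^+(\theta), q^-(\theta)$ (the points of $P$ closest to $\mu(\theta)$ on each side) and likewise for $\phi$, giving the inner strips; finally output the parallelogram annulus defined by $D_1 = (S(\theta), S'(\theta))$ and $D_2 = (S(\phi), S'(\phi))$ with width $\max\{w(\theta), w(\phi)\}$. Correctness follows directly from Observation~\ref{obs:p_annulus} and the fixed-orientation analysis of Observation~\ref{obs:const_double-strip}: the outer parallelogram is optimal, and given it, each double-strip is independently width-minimized.

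The main obstacle, or at least the point requiring care in the write-up, is justifying that the two inner double-strips can be chosen \emph{independently} and that there is no interaction between them through the shared point set $P$ — i.e.\ that minimizing $w(D_1)$ over $\theta$-aligned double-strips with outer strip $S(\theta)$ (ignoring $D_2$ entirely) and separately minimizing $w(D_2)$ yields an annulus of width $\max\{w(D_1), w(D_2)\}$ that still encloses all of $P$. This is exactly what Observation~\ref{obs:p_annulus} delivers: the construction there replaces $D_1, D_2$ of an arbitrary optimal annulus by $D_{P_1}(\theta), D_{P_2}(\phi)$ without increasing the width and while still enclosing $P$, so the decoupling is legitimate. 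Everything else is the routine $O(n)$ extreme-point and nearest-to-a-line computations already used in Theorem~\ref{thm:double-strip-fixed-orientation}.
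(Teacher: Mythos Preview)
Your algorithm is incorrect: it produces a valid $(\theta,\phi)$-aligned annulus enclosing $P$, but not the minimum-width one. The error is in how you resolve the ``circular dependence''. You take each double-strip to be the minimum-width one enclosing \emph{all} of $P$, obtaining width $\max\{w(\theta),w(\phi)\}$ with $w(\theta)=\max_{p\in P} d_p(\theta)$. But a point $p$ lies in the annulus iff it is \emph{not} in the interior of the inner parallelogram, i.e.\ iff it lies in \emph{at least one} of the two double-strips --- not both. So the $\theta$-aligned double-strip may safely miss any point that the $\phi$-aligned one covers, and vice versa. Concretely, take $\theta=0$, $\phi=\pi/2$, corners $(0,0),(10,0),(0,10),(10,10)$, and one interior point $(5,1)$. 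Your formula gives $w(\theta)=1$, $w(\phi)=5$, hence width $5$; but the annulus with inner rectangle $[1,9]\times[1,9]$ has width $1$ and still contains $(5,1)$.

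Observation~\ref{obs:p_annulus} does \emph{not} give you the decoupling you claim: it replaces $D_1,D_2$ by $D_{P_1}(\theta),D_{P_2}(\phi)$ for the particular bipartition $P_1=D_1\cap P$, $P_2=D_2\cap P$ coming from an optimal annulus, not by $D_P(\theta),D_P(\phi)$. The paper's proof avoids partitioning $P$ altogether: once the outer parallelogram $R(\theta,\phi)$ is fixed, a point $p$ is in an annulus of width $z$ iff $p$ is within distance $z$ of some side of $R(\theta,\phi)$, so the optimal width is
\[
z \;=\; \max_{p\in P}\; \min\bigl\{\sigma_\theta(p,\extr^+(\theta)),\,\sigma_\theta(p,\extr^-(\theta)),\,\sigma_\phi(p,\extr^+(\phi)),\,\sigma_\phi(p,\extr^-(\phi))\bigr\},
\]
a $\max$--$\min$ over four distances rather than a $\max$ of two $\max$--$\min$'s over two. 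This is still one $O(n)$ pass.
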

\begin{proof}
Here, we describe an algorithm that finds
a minimum-width $(\theta, \phi)$-aligned parallelogram annulus enclosing $P$
whose outer parallelogram is $R(\theta, \phi) = S(\theta) \cap S(\phi)$,
as described in Observation~\ref{obs:p_annulus}.

As described in the proof of Theorem~\ref{thm:double-strip-fixed-orientation},
we can find the minimum-width $\theta$-aligned strip $S(\theta)$
enclosing $P$ in $O(n)$ time
by identifying the corresponding
antipodal pair $(\extr^+(\theta), \extr^-(\theta))$.
In the same way, we identify the antipodal pair $(\extr^+(\phi), \extr^-(\phi))$
and strip $S(\phi)$.

We then need to find an inner parallelogram $R'$ that minimizes the width
of the resulting annulus defined by $R(\theta,\phi)$ and $R'$.
For the purpose, we are done by checking the distance from each $p\in P$
to the boundary of $R$,
which is equal to
 \[ \min\{w_\theta(p, \extr^+(\theta)), w_\theta(p, \extr^-(\theta)),
     w_\phi(p, \extr^+(\phi)), w_\phi(p, \extr^-(\phi)) \}. \]
We can evaluate this in $O(1)$ time for each $p \in P$,
so in $O(n)$ total time, and take the maximum over them, denoted by $z$.
Since the interior of the inner parallelogram $R'$ must avoid all points in $P$,
at least one side of $R'$ must be $z$ distant from the boundary of $R$.
Hence, the minimum width of a $(\theta, \phi)$-aligned parallelogram annulus
enclosing $P$ is exactly $z$.
The value of $z$ and a corresponding annulus can be computed in $O(n)$ time.
\end{proof}

In the following,
let $w(\theta, \phi)$ be the smallest among the widths of
all $(\theta,\phi)$-aligned parallelogram annuli enclosing $P$.

\subsection{When one side orientation is fixed}

Next, we consider the problem where one side of a resulting annulus
should be $\phi$-aligned for a fixed orientation $\phi \in [-\pi/2, \pi/2)$
while the other orientation parameter $\theta$ can be chosen arbitrarily.
So, in the following, we regard $\phi \in [-\pi/2, \pi/2)$ to be fixed.
Without loss of generality, we assume that $\phi = 0$.

From the definition of a parallelogram annulus $A$,
it is defined by two double-strips.
In addition, Observation~\ref{obs:p_annulus} tells us that
the two double-strips defining $A$ can be chosen among
the $P$-constrained double-trips enclosing a subset of $P$.
Hence, for the case where $\phi = 0$ is fixed,
the problem is reduced to find a best bipartition of $P$ such that
one part is covered by a $0$-aligned $P$-constrained double-strip
and the other by another $P$-constrained double-strip
in any orientation $\theta \in [-\pi/2, \pi/2)$.

We first identify two extreme points $\extr^+(0)$ and $\extr^-(0)$,
and the strip $S(0)$ in $O(n)$ time.
Then, sort the points in $P$ in the non-increasing order of the value
 \[ d_p(0) = \min\{\sigma_0(p, \extr^+(0)), \sigma_0(p, \extr^-(0))\} \]
for each $p\in P$, which is the distance to the boundary of $S(0)$.
Let $p_1, p_2, \ldots, p_{n-1}, p_n \in P$ be this order.
Also, let $w_i := d_{p_i}(0) = \min\{\sigma_0(p_i, \extr^+(0)), \sigma_0(p_i, \extr^-(0))\}$
for $i=1, \ldots, n$.

Consider the double-strip $D_i$ with width $w_i$ and outer strip $S(0)$.
The double-strip $D_1$ encloses all points of $P$,
while $D_2$ misses one point $p_1$ and $D_i$ misses $i-1$ points $p_1, \ldots, p_{i-1}$ in general for $i=1,\ldots, n$.
This means that there are only $n$ different subsets of $P$
covered by any $0$-aligned double-strip.
Thus, to enclose $P$ by a $(\theta, 0)$-aligned parallelogram annulus,
the other double-strip with orientation $\theta$ should cover the rest
of the points in $P$.

Note that each $D_i$ is a minimum-width $0$-aligned $P$-constrained double-strip
that encloses $\{p_i, p_{i+1}, \ldots, p_n\} \subseteq P$.
Let $Q_i := \{p_1, p_2, \ldots, p_{i-1}\}$ for $i=1, \ldots, n$.
If we choose $D_i$ for the $0$-aligned double-strip,
then $P \setminus Q_i \subset D_i$,
so the points in $Q_i$ should be covered by the second double-strip
that define a parallelogram annulus.
Let $D'_i$ be the minimum-width $P$-constrained double-strip enclosing $Q_i$,
and let $w'_i$ be its width.
We compute $D'_i$ and $w'_i$ for all $i\in\{1,\ldots, n\}$ by applying Theorem~\ref{thm:const_double-strip_offline_insertion}
in $O(n^2)$ time.
What remains is taking the minimum of $\max\{w_i, w'_i\}$ over $i=1, \ldots, n$.

\begin{theorem} \label{thm:p_annulus_fa}
 Given a set $P$ of $n$ points and a fixed orientation $\phi \in [-\pi/2, \pi/2)$,
 a $(\theta, \phi)$-aligned parallelogram annulus of minimum width
 over all $\theta \in [-\pi/2,\pi/2)$
 that encloses $P$ can be computed in $O(n^2)$ time.
\end{theorem}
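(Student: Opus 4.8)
The plan is to reduce the one-fixed-orientation parallelogram annulus problem to a bipartition problem that is in turn handled by the offline constrained double-strip machinery of Theorem~\ref{thm:const_double-strip_offline_insertion}. First I would invoke Observation~\ref{obs:p_annulus}: any optimal $(\theta,0)$-aligned annulus may be assumed to have outer parallelogram $R(\theta,0) = S(\theta)\cap S(0)$, and moreover its two defining double-strips may be taken to be $P$-constrained double-strips, each enclosing the subset of $P$ it captures. This means an optimal solution is determined by a bipartition of $P$ into a part covered by a $0$-aligned $P$-constrained double-strip and a part covered by a $\theta$-aligned $P$-constrained double-strip for some free $\theta$.

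The key structural observation is that there are only $n$ candidate subsets of $P$ that can be captured by a $0$-aligned double-strip whose outer strip is $S(0)$: ordering the points $p_1,\dots,p_n$ by non-increasing distance $d_p(0)$ to $\partial S(0)$, a $0$-aligned double-strip $D_i$ of width $w_i := d_{p_i}(0)$ with outer strip $S(0)$ encloses exactly $\{p_i,p_{i+1},\dots,p_n\}$, missing precisely $Q_i := \{p_1,\dots,p_{i-1}\}$. Hence the second, $\theta$-aligned double-strip must enclose $Q_i$, and to minimize the annulus width for this choice we want the minimum-width $P$-constrained double-strip enclosing $Q_i$ over all $\theta$, whose width I call $w'_i$. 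The overall answer is $\min_{1\le i\le n}\max\{w_i, w'_i\}$. The correctness direction that needs care is that every $(\theta,0)$-aligned annulus enclosing $P$ corresponds to some $i$: given such an annulus with $0$-aligned double-strip $D_1$, the set $D_1\cap P$ is one of the $n$ subsets $\{p_i,\dots,p_n\}$ up to the freedom in which points lie exactly on boundaries, so one can always charge it to the appropriate index $i$; conversely each index $i$ yields a genuine annulus of width $\max\{w_i,w'_i\}$ by combining $D_i$ with the optimal constrained double-strip for $Q_i$.

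For the running time: sorting and computing $S(0)$, the $d_{p}(0)$ values, and hence all $w_i$ takes $O(n\log n)$ time. Since $Q_1 = \emptyset \subset Q_2 \subset \cdots \subset Q_n$ is a nested sequence obtained by single-point insertions, Theorem~\ref{thm:const_double-strip_offline_insertion} applied with $k = n-1$ computes $w'_i = w^*_{Q_i}$ for all $i$ (together with the corresponding double-strips) in $O(n\log n + n\cdot n) = O(n^2)$ time. The final minimization over the $n$ indices is $O(n)$. Summing, the total is $O(n^2)$, which gives the theorem; a corresponding annulus is recovered by keeping, for the minimizing index, the double-strip $D_i$ and the constrained double-strip certificate returned by Theorem~\ref{thm:const_double-strip_offline_insertion}.

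The main obstacle I anticipate is not the running time but pinning down the reduction cleanly — specifically verifying that restricting attention to outer parallelogram $R(\theta,0)$ and to $P$-constrained double-strips (as licensed by Observations~\ref{obs:p_annulus} and~\ref{obs:const_double-strip}) loses no optimality, and that the ``only $n$ subsets'' claim is exactly right once one fixes the tie-breaking convention for points on $\partial S(0)$. Everything after that is bookkeeping on top of the already-established Theorem~\ref{thm:const_double-strip_offline_insertion}.
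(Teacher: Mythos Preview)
Your proposal is correct and follows essentially the same approach as the paper: reduce to a bipartition indexed by the $n$ sorted distances $d_p(0)$, observe that the $0$-aligned side yields exactly the nested family $Q_i=\{p_1,\dots,p_{i-1}\}$, compute all $w'_i=w^*_{Q_i}$ via Theorem~\ref{thm:const_double-strip_offline_insertion} in $O(n^2)$ time, and return $\min_i\max\{w_i,w'_i\}$. The paper's argument is the same, down to the sorting convention and the invocation of the offline insertion-only result.
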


\subsection{General case}
Finally, we consider the general case where both $\theta$ and $\phi$
can be freely chosen from domain $[-\pi/2, \pi/2)$.
Let $w^*:=\min_{\theta, \phi \in [-\pi/2, \pi/2)} w(\theta,\phi)$
be the minimum possible width,
and $(\theta^*, \phi^*)$ be a pair of orientations such that
$w^* = w(\theta^*, \phi^*)$.

We first consider the decision version of the problem
in which a positive real number $w>0$ is given
and we want to decide if $w \geq w^*$ or $w < w^*$.
For the purpose, we consider the function $d_p$ defined above for each $p\in P$
to be
$d_p(\theta) = \min\{\sigma_\theta(p, \extr^+(\theta)), \sigma_\theta(p, \extr^-(\theta))\}$.
As observed above, the function $d_p$ is piecewise sinusoidal
with $O(n)$ breakpoints, so its graph
$\{ (\theta, y) \mid y = d_p(\theta), -\pi/2 \leq \theta < \pi/2\}$
consists of $O(n)$ sinusoidal curves.
Let $\Gamma_p$ be the set of these sinusoidal curves,
and $\Gamma := \bigcup_{p\in P} \Gamma_p$.
We build the arrangement $\arr(\Gamma)$ of these sinusoidal curves in $\Gamma$.
Note that each vertex of $\arr(\Gamma)$ corresponds either to
a breakpoint of function $d_p$ for some $p\in P$ or to
an intersection point between a curve in $\Gamma_p$ and another in $\Gamma_{p'}$
for some $p, p'\in P$ with $p\neq p'$.
\begin{lemma} \label{lem:arrG}
 The arrangement $\arr(\Gamma)$ of curves in $\Gamma$ consists of
 $O(n^3)$ vertices, edges, and cells,
 and can be computed in $O(n^3)$ time.
\end{lemma}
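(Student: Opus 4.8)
The plan is to bound the combinatorial complexity of $\arr(\Gamma)$ by counting its vertices and then invoke a standard arrangement-construction result for well-behaved curves. First I would recall that $\Gamma = \bigcup_{p\in P}\Gamma_p$ and that, as argued just above the lemma statement, each $\Gamma_p$ consists of $O(n)$ sinusoidal arcs: the function $d_p(\theta)$ is piecewise sinusoidal with $O(n)$ breakpoints because it depends only on the antipodal pair $(\extr^+(\theta),\extr^-(\theta))$, which changes $O(n)$ times by Toussaint~\cite{t-sgprc-83}, and on each such interval $d_p$ is the lower envelope of two sinusoidal functions (hence itself piecewise sinusoidal with at most one extra breakpoint per interval). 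Therefore $|\Gamma| = \sum_{p\in P}|\Gamma_p| = O(n^2)$ sinusoidal arcs in total.

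Next I would classify the vertices of $\arr(\Gamma)$ into two types, exactly as the paragraph preceding the lemma sets up: (i) breakpoints of some single function $d_p$, and (ii) crossing points between an arc of $\Gamma_p$ and an arc of $\Gamma_{p'}$ with $p\neq p'$. Type (i) vertices number $O(n)$ per point $p$, hence $O(n^2)$ in total. For type (ii), the key observation is that two sinusoidal functions cross at most once over $[-\pi/2,\pi/2)$ by Lemma~\ref{lemma:sinusoidal}; more usefully, for a \emph{fixed} pair $(p,p')$ one should not naively multiply $O(n)\cdot O(n)$ arc pairs. Instead I would argue that for fixed $p$ and $p'$, the two piecewise-sinusoidal graphs $d_p$ and $d_{p'}$ together have $O(n)$ breakpoints, so the common refinement of their breakpoint sets partitions $[-\pi/2,\pi/2)$ into $O(n)$ subintervals; on each such subinterval both functions are single sinusoidal arcs and thus cross at most once by Lemma~\ref{lemma:sinusoidal}. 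Hence $\Gamma_p$ and $\Gamma_{p'}$ contribute only $O(n)$ crossings, and summing over all $O(n^2)$ unordered pairs $\{p,p'\}$ gives $O(n^3)$ type-(ii) vertices. Combining, $\arr(\Gamma)$ has $O(n^3)$ vertices, and since these are graphs of bounded-degree curves the number of edges and cells is also $O(n^3)$ by Euler's formula.

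For the running time I would appeal to a standard incremental / sweep construction of arrangements of $x$-monotone Jordan arcs each pair of which intersects $O(1)$ times (the $d_p$ arcs are monotone in $\theta$ as partial sinusoids on subintervals, or can be cut into $O(1)$ monotone pieces), which builds the arrangement in time $O((m+I)\log m)$ or $O(m\,\alpha(m)\log m + I)$ where $m=|\Gamma|=O(n^2)$ and $I=O(n^3)$ is the number of intersections; this is $O(n^3\log n)$, but can be tightened: since the arrangement has $O(n^3)$ total complexity and each step of the sweep processes events in the event queue, and since we may alternatively construct it face-by-face via the Zone Theorem for curves adding one arc at a time in time proportional to the zone size, the total is $O(n^3)$. \textbf{The main obstacle} is making the crossing count tight: a careless bound gives $O(n^2)\cdot O(n^2)=O(n^4)$ arc-pair comparisons, so the argument must exploit that for a fixed point-pair the two functions share only $O(n)$ breakpoints (this is where the antipodal-pair structure, not just the sinusoidal single-crossing property, is essential), and likewise the time bound must be justified by a zone/charging argument rather than by blindly testing all arc pairs. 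I would also need to remark that degeneracies (coincident breakpoints, tangencies of sinusoids) only decrease the count, so the $O(n^3)$ bound is unaffected.
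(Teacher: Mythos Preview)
Your vertex-count is correct and ultimately rests on the same structural fact the paper uses---the $O(n)$ antipodal intervals---but you reach it by a per-pair argument (for each $\{p,p'\}$, refine by their combined $O(n)$ breakpoints and use Lemma~\ref{lemma:sinusoidal} once per subinterval), whereas the paper decomposes globally: on each of the $O(n)$ antipodal intervals $I$ every $d_p$ contributes at most two sinusoidal arcs, so there are $O(n)$ arcs on $I$, any two cross at most once, and the arrangement restricted to $I$ has $O(n^2)$ complexity; summing over the $O(n)$ intervals gives $O(n^3)$. The two countings are equivalent, but the paper's framing pays off for the running time.

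The genuine gap is in your construction-time argument. Starting from $m=|\Gamma|=O(n^2)$ global arcs, a sweep gives $O((m+I)\log m)=O(n^3\log n)$, and your attempted tightening via ``zone theorem, add one arc at a time'' does not yield $O(n^3)$: the zone of a single curve among $m$ single-crossing curves has size $\Theta(m)=\Theta(n^2)$, so incremental insertion of $O(n^2)$ arcs costs $O(n^4)$, not $O(n^3)$. Knowing that the final arrangement has only $O(n^3)$ vertices does not by itself bound the total zone work. The paper avoids this entirely by building the arrangement interval-by-interval: on each antipodal interval there are only $O(n)$ arcs (a pseudo-segment family with pairwise single crossings), so the standard $O(n^2)$-time construction applies per interval, and summing over $O(n)$ intervals gives the claimed $O(n^3)$ time directly. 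Your proof plan would be fixed by adopting exactly this decomposition for the algorithmic half.
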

\begin{proof}
Consider the decomposition of the orientation domain $[-\pi/2, \pi/2)$
into maximal intervals $I$ such that
the antipodal pair $(\extr^+(\theta), \extr^-(\theta))$ is fixed
for any $\theta \in I$.
By Toussaint~\cite{t-sgprc-83}, there are $O(n)$ such intervals $I$.

We consider the arrangement $\arr(\Gamma)$ on each such interval $I$.
For each such interval $I$,
since the pair $(\extr^+(\theta), \extr^-(\theta))$ is fixed,
the function
  \[ d_p(\theta) = \min \{\sigma_\theta(p, \extr^+(\theta)),
     \sigma_\theta(p, \extr^-(\theta))\}\]
is piecewise sinusoidal with at most one breakpoint by Lemma~\ref{lemma:sinusoidal}.
Again by Lemma~\ref{lemma:sinusoidal}, any two curves in $\Gamma$ defined on $I$
cross at most once.
Hence, the complexity of the arrangement $\arr(G)$ on $I$ is $O(n^2)$,
and can be computed in $O(n^2)$ time.
We obtain the $O(n^3)$ bound by simply iterating all the $O(n)$ intervals $I$.
\end{proof}

Now, we describe our decision algorithm.
Let $w > 0$ be a given positive real number.
First, we intersect the horizontal line $\ell \colon \{y=w\}$ with
arrangement $\arr(\Gamma)$.
\begin{lemma} \label{lem:arrG-zone}
 Any horizontal line crosses the edges of $\arr(\Gamma)$ in $O(n^2)$ points,
 and all these intersection points can be specified in $O(n^2)$ time.
\end{lemma}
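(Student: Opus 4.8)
The plan is to bound the number of crossings between a horizontal line $\ell \colon \{y=w\}$ and the edges of $\arr(\Gamma)$ by using the same decomposition of the orientation domain that was used in the proof of Lemma~\ref{lem:arrG}, combined with the fact that each curve in $\Gamma$ is a piece of a sinusoidal function of period $2\pi$, which can therefore cross a horizontal line at most once over any interval of length less than $\pi$. First I would recall that $[-\pi/2,\pi/2)$ splits into $O(n)$ maximal intervals $I$ on each of which the antipodal pair $(\extr^+(\theta),\extr^-(\theta))$ is fixed, so on each $I$ the function $d_p$ consists of at most two sinusoidal pieces (one breakpoint), and $\Gamma$ restricted to $I$ has $O(n)$ sinusoidal arcs. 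Each such arc lies on the graph of $a\sin(\theta+b)$, so it meets $\ell$ in at most one point within $I$ (indeed within all of $[-\pi/2,\pi/2)$). Hence $\ell$ crosses the arcs of $\Gamma$ in $O(n)$ points inside each interval $I$, and summing over the $O(n)$ intervals gives $O(n^2)$ crossing points in total. This is the bound claimed.

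For the running time, after the preprocessing of Lemma~\ref{lem:arrG} the arrangement $\arr(\Gamma)$ is available explicitly, and $\ell$ is a single line. I would walk $\ell$ through $\arr(\Gamma)$ starting from the leftmost cell it meets: at each step the current edge crossed by $\ell$ determines the next cell, and we advance to the edge of that cell's boundary through which $\ell$ exits. Since $\arr(\Gamma)$ is an arrangement of $O(n^2)$ well-behaved (pseudo-line-like within each strip $I$) curves, the zone of $\ell$ — the set of cells it meets together with their bounding edges — has complexity proportional to the number of crossings times a constant, i.e.\ $O(n^2)$; this is the analogue of the Zone Theorem for the family $\Gamma$. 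Alternatively, one can simply substitute $y=w$ into each of the $O(n^2)$ sinusoidal arcs of $\Gamma$ one at a time, solve $a\sin(\theta+b)=w$ for the unique root in the relevant subinterval, and collect and sort the $O(n^2)$ intersection points; this is clearly $O(n^2\log n)$ naively, but sorting can be avoided by exploiting the left-to-right structure of $\arr(\Gamma)$, or one simply notes the statement only asks for $O(n^2)$ time to \emph{specify} the points, and a traversal of the zone of $\ell$ achieves exactly that.

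The main obstacle I anticipate is justifying the $O(n^2)$ zone-complexity bound cleanly: within a single antipodal interval $I$ the curves behave like pseudo-lines (pairwise at most one crossing, by Lemma~\ref{lemma:sinusoidal}), so the classical Zone Theorem applies on $I$ and bounds the zone of $\ell\cap I$ by $O(n)$; but across the $O(n)$ interval boundaries one must argue that gluing these local zones together still yields $O(n^2)$ total complexity and that the traversal can cross from one strip $I$ to the next in $O(1)$ amortized time. The cleanest route is to handle the intervals $I$ one at a time — in each, clip $\arr(\Gamma)$ to the vertical slab $\theta\in I$, run the line-walk of $\ell$ through the zone in $O(n)$ time, record the $O(n)$ intersection points, and then move to the next slab — for a total of $O(n\cdot n)=O(n^2)$ time and $O(n^2)$ intersection points, as desired.
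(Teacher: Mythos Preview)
Your counting argument is essentially the paper's: decompose $[-\pi/2,\pi/2)$ into the $O(n)$ antipodal intervals $I$, observe that on each $I$ the function $d_p$ is made of $O(1)$ sinusoidal pieces, and hence $\ell$ meets the graph of each $d_p$ in $O(1)$ points per interval, giving $O(n)$ intersections per point and $O(n^2)$ in total. (One small slip: a single sinusoidal arc $a\sin(\theta+b)$ can cross a fixed horizontal level twice over an interval of length close to $\pi$ --- think of $\cos\theta=1/2$ on $[-\pi/2,\pi/2)$ --- so ``at most one'' should be ``$O(1)$''; this does not affect the bound.)

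Where you diverge from the paper is the running-time argument, and you make it harder than necessary. The paper does not traverse $\arr(\Gamma)$ or invoke any zone-type reasoning here at all: it simply takes each $p\in P$ in turn, walks through the $O(n)$ antipodal intervals, and for each interval solves the $O(1)$ equations $d_p(\theta)=w$ directly. That is $O(n)$ time per point and $O(n^2)$ total, with no need for the arrangement, no zone theorem, and no concern about gluing slabs. Your slab-by-slab line-walk would also work, but it relies on having $\arr(\Gamma)$ built and on a pseudo-line zone bound inside each slab, both of which are avoidable overhead; the direct per-curve computation is the cleaner route and is what the paper does.
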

\begin{proof}
Consider the intervals $I$ described in the proof of Lemma~\ref{lem:arrG}.
For each $I$, the antipodal pair $(\extr^+(\theta), \extr^-(\theta))$
is fixed for all $\theta \in I$ and so
the function
  \[ d_p(\theta) = \min \{\sigma_\theta(p, \extr^+(\theta)),
     \sigma_\theta(p, \extr^-(\theta))\}\]
is piecewise sinusoidal with at most one breakpoint by Lemma~\ref{lemma:sinusoidal} for each $p\in P$.
Hence, any horizontal line $\ell$ crosses $d_p(\theta)$ at most $O(1)$ time
for $\theta \in I$.
Since there are $O(n)$ such intervals $I$,
the function $d_p$ for each $p\in P$ crosses $\ell$ in $O(n)$ intersection points.
Also, these intersection points can be computed in $O(n)$ time.

Since the edges of $\arr(\Gamma)$ come from the graph of $d_p$ for all $p\in P$,
we conclude that any horizontal line $\ell$ crosses the edges of $\arr(\Gamma)$
in $O(n^2)$ points, and all the intersection points can be specified
in the same time bound.
\end{proof}

Our algorithm continuously increases $\theta$ from $-\pi/2$ to $\pi/2$
and checks if there exists a parallelogram annulus of width $w$ that encloses $P$
such that one of the two double-strips defining it is $\theta$-aligned.

Let $\{\theta_1, \ldots, \theta_m\}$ be
the set of $\theta$-values of every intersection point
between $\ell$ and the edges of $\arr(\Gamma)$
such that $-\pi/2 \leq \theta_1 < \theta_2 < \cdots < \theta_m < \pi/2$.
Note that $m = O(n^2)$ by Lemma~\ref{lem:arrG-zone}.
For each $i=1, \ldots, m$,
let $P_i \subseteq P$ be the set of points $p\in P$
such that $d_p(\theta_i) \leq w$, and let $Q_i := P \setminus P_i$.
Let $D_i$ be the $\theta_i$-aligned $P$-constrained double-strip of width $w$.
Then, we have $P_i \subset D_i$ while $Q_i \cap D_i = \emptyset$.
Let $D'_i$ be a $P$-constrained double-strip of minimum width
that encloses $Q_i$.
Recall that the width of $D'_i$ is denoted by $w^*_{Q_i}$ in the previous section.
If the width $w^*_{Q_i}$ of $D'_i$ is at most $w$,
then the parallelogram annulus defined by $D_i$ and $D'_i$ indeed encloses $P$
and its width is $w$,
so we conclude that $w \geq w^*$.
Otherwise, if $w^*_{Q_i} > w$, then we proceed to next $\theta$-value
$\theta_{i+1}$.

We perform this test for each $i=1, \ldots, m$ in an efficient way
with the aid of our online decision algorithm for the constrained double-strip
problem.
Initially, for $i=1$, we compute $P_1$, $Q_1$, and $D_1$ in $O(n)$ time.
Also, we initialize the data structures $\tmap_\emptyset$ and fixed value $w$,
as described in Theorem~\ref{thm:const-double_strip-online-decision}
in $O(n \log n)$ time,
and insert points in $Q_1$ to have $\tmap_{Q_1}$ in $O(|Q_1|n) = O(n^2)$ time.
We then know that $w^*_{Q_1} \geq w$ or not.

For each $i \geq 2$, there is a point $p\in P$ such that
either $P_i = P_{i-1} \setminus \{p\}$
or $P_i = P_{i-1} \cup \{p\}$.
Since $Q_i = P \setminus P_i$,
we have that either $Q_i = Q_{i-1} \cup \{p\}$
or $Q_i = Q_{i-1} \setminus \{p\}$.
This implies that we can answer whether $w \geq w^*_{Q_i}$ or not
in $O(n)$ time for each $i \geq 2$
by maintaining $P_i$, $Q_i$, and $\tmap_{Q_i}$
by Theorem~\ref{thm:const-double_strip-online-decision}.

Since $m = O(n^2)$ by Lemma~\ref{lem:arrG-zone},
we can solve the decision problem in $O(n^3)$ time.
\begin{lemma} \label{lem:p_annulus_decision}
 Given 
  $w > 0$,
 we can decide whether or not $w \geq w^*$ in $O(n^3)$ time.
\end{lemma}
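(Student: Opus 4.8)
The plan is to turn the discussion preceding the statement into a formal decision procedure and account for its cost. The structural starting point is that, by Observation~\ref{obs:p_annulus} together with Observation~\ref{obs:const_double-strip}, if some $(\theta,\phi)$-aligned parallelogram annulus of width at most $w$ encloses $P$, then there is one of width at most $w$ in which the $\theta$-aligned double-strip has outer strip exactly $S(\theta)$; the unique such double-strip of width exactly $w$ encloses precisely the points within distance $w$ of $\bd S(\theta)$, i.e.\ the set $P_\theta := \{p\in P : d_p(\theta)\le w\}$. Since the outer strip of the other double-strip must also contain $P$ for the outer parallelogram to contain $P$, that second double-strip is a $P$-constrained double-strip enclosing the complement $Q_\theta := P\setminus P_\theta$, and one of width at most $w$ exists exactly when $w^*_{Q_\theta}\le w$. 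So I would reduce the decision problem to: does there exist $\theta\in[-\pi/2,\pi/2)$ with $w^*_{Q_\theta}\le w$?

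Next I would discretize $\theta$. The set $Q_\theta$ changes only when some $d_p$ meets the level $w$, that is, at the intersections of the curves in $\Gamma$ with the line $\ell\colon y=w$; by Lemma~\ref{lem:arrG-zone} there are $m=O(n^2)$ such orientations, and these, together with the point entering or leaving $Q_\theta$ at each of them, can be computed and sorted as $\theta_1<\cdots<\theta_m$ in $O(n^2)$ time. The algorithm then tests $w^*_{Q_{\theta_i}}\le w$ for $i=1,\dots,m$ using the dynamic decision structure of Theorem~\ref{thm:const-double_strip-online-decision} instantiated with the fixed target $w$: after $O(n\log n)$ preprocessing and an initial batch of at most $n$ insertions producing $\tmap_{Q_{\theta_1}}$ in $O(n^2)$ time, the transition from $Q_{\theta_{i-1}}$ to $Q_{\theta_i}$ is a single point insertion or deletion, handled in $O(n)$ time, after which the counter of that theorem reveals whether $w^*_{Q_{\theta_i}}\le w$. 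If the answer is affirmative for some $i$ we conclude $w\ge w^*$, and otherwise $w<w^*$. The total cost is $O(n^2)+O(n^2)+m\cdot O(n)=O(n^3)$.

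I expect the main obstacle to be the correctness bookkeeping, not the time estimate, and I would resolve it with two facts. First, $w^*_Q$ is monotone under inclusion: a $P$-constrained double-strip enclosing $Q'$ also encloses any $Q\subseteq Q'$, so $Q\subseteq Q'$ implies $w^*_Q\le w^*_{Q'}$. Second, for every orientation $\theta$ there is a tested index $i$ with $Q_{\theta_i}\subseteq Q_\theta$: this is trivial when $\theta=\theta_i$, and otherwise $Q_\theta$ is constant on the maximal interval between consecutive crossings containing $\theta$ and changes by exactly one point across each endpoint, while at an endpoint $\theta_i$ the crossing point $p$ has $d_p(\theta_i)=w$ and hence $p\notin Q_{\theta_i}$, so $Q_{\theta_i}$ either equals $Q_\theta$ on an adjacent interval or is $Q_\theta$ minus one point there. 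Combining the two facts, $w^*_{Q_\theta}\le w$ for some $\theta$ forces $w^*_{Q_{\theta_i}}\le w$ for some tested $i$, so the procedure never misses a combinatorial type despite testing only at the crossing orientations (with the convention $d_p\le w$ at equality); the converse direction is immediate, since $w^*_{Q_{\theta_i}}\le w$ lets us assemble an enclosing parallelogram annulus of width at most $w$ with one double-strip $\theta_i$-aligned. The remaining point to verify is that each combinatorial change of $Q_\theta$ is a single-point update, so that the $O(n)$-per-step guarantee of Theorem~\ref{thm:const-double_strip-online-decision} applies verbatim; with that in hand the pieces fit together to give the claimed $O(n^3)$ bound.
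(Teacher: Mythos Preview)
Your proposal is correct and follows essentially the same decision procedure the paper lays out in the text immediately preceding the lemma: intersect the line $y=w$ with the curves in $\Gamma$ to obtain $O(n^2)$ critical orientations $\theta_i$, then sweep through them using the online decision structure of Theorem~\ref{thm:const-double_strip-online-decision} at $O(n)$ per step. Your correctness bookkeeping (monotonicity of $w^*_Q$ under inclusion and why sampling only at the $\theta_i$ loses nothing) is in fact more explicit than the paper's; the only slip is that sorting the $O(n^2)$ crossings costs $O(n^2\log n)$ rather than $O(n^2)$, but this is absorbed by the $O(n^3)$ total.
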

\begin{proof}
This lemma directly follows Lemma~\ref{lem:const_double-strip_conf}.
Consider any parallelogram annulus $A$ of minimum width $w^*$ that encloses $P$.
Let $D_1$ and $D_2$ be the two double-strips defining $A$.
Also, let $P_1 := P\cap D_1$ and $P_2 := P\cap D_2$.
Then, $D_1$ is a $P$-constrained double-strip enclosing $P_1 \subseteq P$
and $D_2$ is a $P$-constrained double-strip enclosing $P_2 \subseteq P$.
By Lemma~\ref{lem:const_double-strip_conf},
$D_1$ can be replaced by a $P$-constrained double-strip $D'_1$ enclosing $P_1$
of minimum width $w^*_{P_1}$ such that
either (a) three extreme points of $P$ lie on the boundary of its outer strip
or (b) two points of $P_1$ lie on the boundary of its inner strip.
In the same way,
$D_2$ can be replaced by a$ P$-constrained double-strip $D'_2$ enclosing $P_2$
of minimum width $w^*_{P_2}$ such that
either (a) three extreme points of $P$ lie on the boundary of its outer strip
or (b) two points of $P_2$ lie on the boundary of its inner strip.

Let $A'$ be the parallelogram annulus defined by $D'_1$ and $D'_2$.
Observe that $A'$ also encloses $P$, as $P_1 \subset D'_1$ and $P_2 \subset D'_2$.
We further have that $w^* = \max \{w^*_{P_1}, w^*_{P_2}\}$
since $D_1$ and $D_2$ have widths at most $w^*$ and
$w^*$ is the minimum possible width for parallelogram annuli enclosing $P$.

Without loss of generality, we assume that $w^*_{P_1} \geq w^*_{P_2}$,
so $w^*$ is equal to the width $w^*_{P_1}$ of $D'_1$.
Let $\theta \in [-\pi/2, \pi/2)$ be the orientation of $D'_1$.
Then, by Observation~\ref{obs:const_double-strip},
the outer strip of $D'_1$ is $S(\theta)$ and
its inner strip is $S'_{P_1}(\theta)$.
We have two cases:
(a) three extreme points of $P$ lie on the boundary of
the outer strip $S(\theta)$ of $D'_1$, or
(b) two points of $P_1$ lie on the boundary of the inner strip
$S'_{P_1}(\theta)$ of $D'_1$.

In the former case,
the function $d_p$ for every $p\in P$ has a breakpoint at $\theta$.
Let $q \in P_1$ be a point lying on the boundary of $S'_{P_1}(\theta)$.
We then have
 \[ w^* = w^*_{P_1} = \max_{p\in P_1} d_p(\theta) = d_q(\theta).\]
So, $w^*$ is equal to the $y$-coordinate of the vertex of $\arr(\Gamma)$
that corresponds to the breakpoint of $d_q$ at $\theta$.

In the latter case,
we have two points $q_1, q_2\in P_1$ lying on the boundary of $S'_{P_1}(\theta)$.
This implies that
 \[ w^* = w^*_{P_1} = \max_{p\in P_1} d_p(\theta)
    = d_{q_1}(\theta) = d_{q_2}(\theta).\]
Therefore, $w^*$ is equal to the $y$-coordinate of the vertex of $\arr(\Gamma)$
that is the intersection point between a curve in $\Gamma_{q_1}$ and
another in $\Gamma_{q_2}$.

This completes the proof of the lemma.
\end{proof}

Finally, we describe our algorithm to compute the exact value of $w^*$.
To do so, we collect a set $W$ of candidate width values in which $w^*$ is guaranteed to exist, and perform a binary search using our decision algorithm
summarized in Lemma~\ref{lem:p_annulus_decision}.
\begin{lemma} \label{lem:p_annulus_cadidates}
 The minimum possible width $w^*$ over all parallelogram annulli that enclose $P$
 is equal to the $y$-coordinate of a vertex of $\arr(\Gamma)$.
\end{lemma}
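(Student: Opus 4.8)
The plan is to argue exactly as in the proof of Lemma~\ref{lem:p_annulus_decision}, but now tracking \emph{which} vertex of $\arr(\Gamma)$ is hit rather than merely testing a value. Let $A$ be a parallelogram annulus of minimum width $w^*$ enclosing $P$, defined by two double-strips $D_1, D_2$, and set $P_j := P \cap D_j$ for $j=1,2$. Then $D_j$ is a $P$-constrained double-strip enclosing $P_j$, so by Observation~\ref{obs:const_double-strip} it may be replaced by the canonical minimum-width $P$-constrained double-strip $D'_{P_j}$ of width $w^*_{P_j}$ whose outer strip is $S(\theta_j)$ and whose inner strip is $S'_{P_j}(\theta_j)$, where $\theta_j$ is its orientation. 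The resulting annulus $A'$ still encloses $P$, and since each $D_j$ already had width at most $w^*$ while $w^*$ is optimal, we get $w^* = \max\{w^*_{P_1}, w^*_{P_2}\}$; relabel so that $w^* = w^*_{P_1}$.

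Now I would invoke Lemma~\ref{lem:const_double-strip_conf} applied to $P_1$ (with the ambient set still $P$): at the optimal orientation $\theta_1$, either (a) three extreme points of $P$ lie on the boundary of the outer strip $S(\theta_1)$, or (b) two points of $P_1$ lie on the boundary of the inner strip $S'_{P_1}(\theta_1)$. In case (b), pick $q_1, q_2 \in P_1$ on $\bd S'_{P_1}(\theta_1)$; since $w^*_{P_1} = \max_{p\in P_1} d_p(\theta_1)$ and this maximum is attained at a point on the inner boundary, we have $w^* = d_{q_1}(\theta_1) = d_{q_2}(\theta_1)$, i.e. $(\theta_1, w^*)$ is a common point of a curve in $\Gamma_{q_1}$ and a curve in $\Gamma_{q_2}$, hence a vertex of $\arr(\Gamma)$ (an intersection vertex). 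In case (a), the three extreme points of $P$ on $\bd S(\theta_1)$ force an edge of $\conv(P)$ to be $\theta_1$-aligned, which is precisely where the antipodal pair $(\extr^+,\extr^-)$ changes, so every $d_p$ has a breakpoint at $\theta_1$; taking $q_1 \in P_1$ on $\bd S'_{P_1}(\theta_1)$ with $w^* = d_{q_1}(\theta_1)$, the point $(\theta_1, w^*)$ is the breakpoint vertex of the curve $\Gamma_{q_1}$ in $\arr(\Gamma)$. In either case $w^*$ equals the $y$-coordinate of a vertex of $\arr(\Gamma)$.

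The only subtlety — and the step I would be most careful about — is the boundary behaviour when the optimal orientation $\theta_1$ happens to lie at an endpoint of the orientation domain $[-\pi/2,\pi/2)$ or exactly at an antipodal-pair transition, since there the ``one breakpoint per interval'' structure and the precise meaning of ``$d_p$ has a breakpoint at $\theta$'' need the observation that such transition orientations are themselves vertices of $\arr(\Gamma)$ (each $d_p$ is only piecewise sinusoidal there). This is already implicit in the construction of $\arr(\Gamma)$ in Lemma~\ref{lem:arrG}, where the domain is cut at exactly the $O(n)$ antipodal-transition orientations before building the arrangement, so the breakpoint vertices of the $d_p$'s are genuine vertices of $\arr(\Gamma)$; I would state this explicitly to close the argument. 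No new computation is needed beyond what Lemmas~\ref{lem:arrG} and~\ref{lem:const_double-strip_conf} already give.
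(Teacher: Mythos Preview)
Your proposal is correct and essentially identical to the paper's own argument. As you yourself note, the paper's proof content appearing under Lemma~\ref{lem:p_annulus_decision} is in fact precisely the proof of Lemma~\ref{lem:p_annulus_cadidates}: it takes an optimal annulus, replaces each defining double-strip by a minimum-width $P$-constrained one via Lemma~\ref{lem:const_double-strip_conf}, identifies $w^*$ with $w^*_{P_1}$, and then in case~(a) lands on a breakpoint vertex of some $d_q$ and in case~(b) on an intersection vertex between $\Gamma_{q_1}$ and $\Gamma_{q_2}$ --- exactly your two cases, in the same order of reasoning. Your extra paragraph on the boundary behaviour at antipodal-pair transitions is a small clarification the paper leaves implicit, but it does not change the argument.
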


We thus define $W$ to be the set of $y$-coordinates of
all vertices of $\arr(\Gamma)$.
Lemma~\ref{lem:p_annulus_cadidates} guarantees that $w^* \in W$.
After sorting the values in $W$,
we perform a binary search on $W$ using the decision algorithm.
Since $|W| = O(n^3)$ by Lemma~\ref{lem:arrG}
and the decision algorithm runs in $O(n^3)$ time by Lemma~\ref{lem:p_annulus_decision},
we can find the exact value of $w^*$ in $O(n^3 \log n)$ time.
Therefore, we conclude the following theorem.
\begin{theorem}
 Given a set $P$ of $n$ points,
 a minimum-with parallelogram annulus over all pairs of orientations
 that encloses $P$ can be computed in $O(n^3 \log n)$ time.
\end{theorem}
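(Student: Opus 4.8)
The plan is to combine the structural result of Lemma~\ref{lem:p_annulus_cadidates} with the decision procedure of Lemma~\ref{lem:p_annulus_decision} via a sorted binary search. First I would invoke Lemma~\ref{lem:arrG} to build the arrangement $\arr(\Gamma)$ of the $O(n^2)$ sinusoidal curves comprising $\Gamma = \bigcup_{p\in P}\Gamma_p$ in $O(n^3)$ time; this arrangement has $O(n^3)$ vertices. I then define $W$ to be the multiset of $y$-coordinates of all vertices of $\arr(\Gamma)$, so $|W| = O(n^3)$, and sort $W$ in $O(n^3 \log n)$ time. By Lemma~\ref{lem:p_annulus_cadidates}, the optimal width $w^*$ is guaranteed to be one of the values in $W$, so it suffices to locate $w^*$ among the sorted candidates.

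Next I would run a binary search over the sorted list $W$. At each step, for a candidate value $w$, I apply the decision algorithm of Lemma~\ref{lem:p_annulus_decision} to determine whether $w \geq w^*$ or $w < w^*$; each such query costs $O(n^3)$ time. Since the decision predicate is monotone in $w$ and $w^*$ is known to lie in $W$, $O(\log |W|) = O(\log n)$ rounds of the binary search suffice to pin down the exact index of $w^*$. The total time is dominated by $O(\log n)$ invocations of the $O(n^3)$-time decision procedure, giving $O(n^3 \log n)$ overall; the $O(n^3)$ cost of building and sorting $W$ is subsumed. Once $w^*$ is identified, the corresponding pair of orientations $(\theta^*, \phi^*)$ and the witnessing double-strips can be recovered from the vertex of $\arr(\Gamma)$ achieving $w^*$ together with one final run of the decision procedure that also extracts the bipartition $P = P_1 \cup P_2$ and the two constrained double-strips, as in the proofs of Lemmas~\ref{lem:p_annulus_decision} and~\ref{lem:p_annulus_cadidates}.

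The main obstacle, which is really discharged by the preceding lemmas rather than this final argument, is establishing Lemma~\ref{lem:p_annulus_cadidates}: that $w^*$ must coincide with the $y$-coordinate of some vertex of $\arr(\Gamma)$. The reasoning there mirrors Lemma~\ref{lem:const_double-strip_conf} applied to the wider of the two double-strips $D'_1$ defining an optimal annulus — either three extreme points of $P$ sit on the outer strip's boundary (a breakpoint of every $d_p$, hence a vertex of $\arr(\Gamma)$), or two points of $P_1$ sit on the inner strip's boundary (an intersection of two curves $\Gamma_{q_1}, \Gamma_{q_2}$, again a vertex). Granting that lemma, the remaining work for the theorem is the routine but necessary bookkeeping: verifying that the decision algorithm of Lemma~\ref{lem:p_annulus_decision} is indeed monotone so the binary search is valid, and checking that the running time of the search — $O(n^3 \log n)$ — absorbs the $O(n^3 \log n)$ sorting cost and the $O(n^3)$ arrangement construction. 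This yields the claimed $O(n^3 \log n)$ bound.
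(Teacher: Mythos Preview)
Your proposal is correct and follows essentially the same approach as the paper: collect the $O(n^3)$ candidate widths $W$ from the $y$-coordinates of vertices of $\arr(\Gamma)$ (justified by Lemma~\ref{lem:p_annulus_cadidates}), sort them, and binary search using the $O(n^3)$-time decision procedure of Lemma~\ref{lem:p_annulus_decision}, for a total of $O(n^3 \log n)$ time. Your added remarks on monotonicity of the predicate and on extracting the witnessing annulus are sound and fill in details the paper leaves implicit.
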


%



%

{
\bibliographystyle{abbrv}
\bibliography{annuli}
}

\end{document}